\documentclass[authorcolumns,numberwithinsect,a4paper]{no-lipics}
\newcommand{\oh}[1]{o(#1)}
\newcommand{\Oh}[1]{\mathcal{O}(#1)}
\newcommand{\pspace}{\ensuremath{\mathsf{PSPACE}}}

\newcommand{\N}{\ensuremath{\mathbb{N}}}

\newcommand{\trias}[1]{T(#1)}
\newcommand{\triasCap}[1]{T(#1)}
\newcommand{\edgeCap}[2][H]{E_{#1}(#2)}
\newcommand{\etgraph}[1]{\mathsf{ET}(#1)}
\newcommand{\vegraph}[1]{\mathcal{I}(#1)}
\newcommand{\gamehg}[1]{\mathcal{H}(#1)}
\newcommand{\concomp}[3][]{\kappa_{#1}(#2,#3)}

\newcommand{\generalTime}{n+m^3+m^2t}
\newcommand{\generalTimeNoTrig}{n+m^{3.5}}
\newcommand{\cliqueTimeBase}{n+m^{1.5}}
\newcommand{\cliqueTimeAddition}{\min\{n^{\omega+1},m^2\}}
\newcommand{\cliqueTime}{n+\min\{n^{\omega+1},m^2\}}
\newcommand{\cactusTime}{n+m^{1.5}}

\title{Faster Algorithms for Deciding the Unbiased Maker-Breaker Triangle Game on General Graphs}
\titlerunning{Faster Algorithms for the Unbiased Maker-Breaker Triangle Game}
\author{Julian Christoph Brinkmann}{Goethe University Frankfurt, Germany}{J.Brinkmann@em.uni-frankfurt.de}{https://orcid.org/0009-0000-0332-4543}{}
\author{Anand Srivastav}{Kiel University, Germany\\ Goethe University Frankfurt, Germany}{srivastav@math.uni-kiel.de}{}{}
\authorrunning{J. C. Brinkmann, A. Srivastav}

\begin{document}

\maketitle
\pagenumbering{roman}
\begin{abstract}
	In this paper, we present new polynomial-time algorithms for determining the winner of the unbiased triangle game played on the edge set of general graphs.
	To that end, we propose to view the game through the edge-triangle incidence graph instead of the standard hypergraph model.
	We identify a necessary and sufficient winning condition for Maker in terms of the edge-triangle incidence graph and show that winning strategies achieving this condition as fast as possible play monotonically in the sense that they only consider monotonically decreasing connected subgraphs of the edge-triangle incidence graph.
	
	We give three different algorithms for different classes of graphs.
	For general graphs \(G\), the outcome of the unbiased triangle game can be decided in time \(\Oh{\generalTimeNoTrig}=\Oh{n^7}\).
	This significantly improves on the \(\Oh{n^{16}}\) algorithm implied by the work of Galliot, Gravier and Sivignon (arXiv 2022).
	For graphs \(G\) which contain \(K_4\), the complete graph on four vertices, as a subgraph and whose edge-triangle incidence graph is connected, the winner can be decided in time~\(\Oh{\cliqueTime}=\Oh{n^{\omega+1}}\), where \(\omega<2.372\) is the exponent of matrix multiplication~{(Alman et.~al., SODA 2025)}.
	For graphs \(G\) whose edge-triangle incidence graph is a cactus graph, i.e.~all its cycles are edge-disjoint, the winner can be decided in time~\(\Oh{\cactusTime}=\Oh{n^3}\).
	Such \(G\) are \(K_4\)-free.
	The algorithms for the special cases are based on novel structural characterizations of Maker's win for each graph class.
	
	We give a linear time reduction from triangle detection to deciding the unbiased triangle game.
	The best currently known algorithm for triangle detection runs in time \(\Oh{n^\omega}\).
	
	Additionally, we show that, asymptotically, any graph with \(\left(\frac{1}{4}+\varepsilon\right)n^2\) edges is Maker's win.
	This is best possible.
	Finally, we give a tight lower bound of \(2n-1\) on the number of edges in any graph that is minimally winning for Maker with respect to graph inclusion.
\end{abstract}

\thispagestyle{plain}
\setcounter{tocdepth}{2}
\tableofcontents
\clearpage
\pagenumbering{arabic}

\section{Introduction}
In a positional game, two players alternatingly claim vertices of a hypergraph.
There are different regimes for determining the winner of these games, such as Maker-Maker games, Maker-Breaker games, Avoider-Enforcer games, see~\cite{hefetz2014positional} for an overview of the field.
We consider Maker-Breaker games which were introduced by Erdős and Selfridge~\cite{DBLP:journals/jct/ErdosS73}.
In this setting, the first player, Maker, tries to claim all vertices of an edge of the hypergraph and the second player, Breaker, tries to prevent this.
Maker-Breaker games are often played on edge sets of graphs, i.e.~the vertices of the hypergraph are the edges of some graph and the hyperedges are collections of edges inducing some predetermined structure, such as spanning trees, perfect matchings, Hamiltonian cycles or copies of a fixed target graph \(H\).
In this paper, we study the Maker-Breaker \(K_3\)-game, the unbiased triangle game, played on general graphs.

The study of Maker-Breaker sugbraph was initiated by the pioneering work of Chvatal and Erdős~\cite{chvatal1978biased}. 
There are two main variants of the game, the biased version played on the edges of complete graphs and the unbiased game played on the edges of general graphs or hypergraphs.
In the first variant, Maker is usually allowed to choose one edge of the complete graph, while Breaker may choose up to \(q\) edges.
The goal is to find the threshold bias of the game, i.e.~the minimum \(q\) so that Breaker wins.
In this setting, \(q\) is analyzed as a function of the number of vertices of the board graph.
For triangles, the biased game has been extensively studied, see e.g.~\cite{chvatal1978biased,DBLP:journals/combinatorics/BaloghS11,DBLP:journals/ejc/GlazikS22}.
In the unbiased game, \(q=1\).
Here, the challenge in understanding the game arises from the fact that the board is a general graph leading to complexity-theoretic barriers.

The complexity classification of Maker-Breaker games played on hypergraphs has recently been resolved.
A hypergraph has rank \(k\) if every hyperedge has size at most \(k\) and there exists a hyperedge of size exactly \(k\).
In a fundamental work, Schaefer~\cite{DBLP:journals/jcss/Schaefer78} proved that deciding the winner of a Maker-Breaker game is \pspace-complete even restricted to hypergraphs of rank at most~11.
This result was improved to hypergraphs of rank at most~6~\cite{DBLP:journals/combinatorica/RahmanW23} and recently to hypergraphs of rank at most~5~\cite{koepke2025solvingmakerbreakergames5uniform}.
Finally, Galliot \cite{DBLP:journals/corr/abs-2509-13819} further improved the rank bound showing that hypergraphs of rank at most~4 are sufficient for \pspace-completeness.
On the other hand, the outcome of Maker-Breaker games on hypergraphs of rank at most~3 can be determined in polynomial time~\cite{galliot2025makerbreakersolvedpolynomialtime}.
This improves upon the work of Kutz~\cite{kutz2004thesis} who showed that the outcome of Maker-Breaker games on linear hypergraphs of rank at most~3 can be determined in polynomial time.

The aforementioned publications study the complexity of the general Maker-Breaker game played on hypergraphs.
The Maker-Breaker \(H\)-game was investigated by Duchéne et al.~\cite{DBLP:journals/dam/DucheneGINOPS25}.
They show that the Maker-Breaker \(T\)-game is \pspace-complete even when played on graphs~\(G\) with diameter at most~6, where \(T\) is a particular tree with~91 vertices.
Additionally, the Maker-Breaker \(H\)-game is shown to be \pspace-complete for a particular graph \(H\) with~51 vertices and~57 edges.
Their reductions are based on the hardness of the Maker-Breaker game played on hypergraphs of rank~6.
With the recent improvement of~\cite{DBLP:journals/corr/abs-2509-13819}, the size of \(T\) and \(H\) can be further reduced to~45 vertices and~35 vertices with~39 edges, respectively.

The algorithm of~\cite{galliot2025makerbreakersolvedpolynomialtime} runs in time \(\Oh{|V(\mathcal{H})|^5|E(\mathcal{H})|^2+|V(\mathcal{H})|^6\Delta(\mathcal{H})}\) where \(\Delta(\mathcal{H})\) denotes the maximal degree of a vertex in the hypergraph \(\mathcal{H}\).
This implies that the outcome of the \(H\)-game can be determined in polynomial time for graphs \(H\) with at most three edges.
However, the asymptotic running time is \(\Oh{n^{16}}\) as the vertices of \(\mathcal{H}\) are the edges of the board graph \(G\) and the edges of \(\mathcal{H}\) correspond to the edge-induced subgraphs of \(G\) isomorphic to \(H\).
Duchéne et al.~\cite{DBLP:journals/dam/DucheneGINOPS25} determine the maximal winning graphs for Breaker in the \(P_4\)-game, where \(P_4\) is a path consisting of four vertices.
Membership in this family of maximal winning graphs can be checked in linear time given the board graph \(G\) resulting in a linear time algorithm for the \(P_4\)-game.
They also show that the \(K_{1,\ell}\)-game is fixed-parameter tractable when parameterized by the length of the game, where \(K_{1,\ell}\) is a star graph with \(\ell\) edges.
The case \(\ell=3\) is trivially fixed-parameter tractable by the algorithm of~\cite{galliot2025makerbreakersolvedpolynomialtime}.

\subsection{Our Contribution}
The only interesting graph \(H\) with at most three edges not considered in~\cite{DBLP:journals/dam/DucheneGINOPS25} is the cycle of length three.
Maker's win in the \(H\)-game can easily be characterized when \(H\) contains at most two edges or is not connected.
We give lower and upper bounds on the time complexity of the unbiased triangle game played on general graphs.

We present three different algorithms for different classes of graphs.
In the following, \(n\) denotes the number of vertices of the input graph and \(m\) denotes the number of edges of the input graph.
Our algorithm for general graphs has running time \(\Oh{\generalTime}=\Oh{n^7}\), where \(t\) denotes the number of subgraphs of \(G\) isomorphic to \(K_3\), the complete graph on~3 vertices.
The time bound for the algorithm of~\cite{galliot2025makerbreakersolvedpolynomialtime} corresponds to \(\Oh{n+m^5t^2+nm^6}=\Oh{n^{16}}\).
Thus, we significantly improve upon the algorithm of~\cite{galliot2025makerbreakersolvedpolynomialtime}.
For further improvements of the running time, our rationale is to distinguish whether the input graph \(G\) contains \(K_4\) or is \(K_4\)-free.
If the edge-triangle incidence graph of \(G\) (see \cref{defn:etg}) is connected and \(G\) contains \(K_4\) as a subgraph, the outcome of the game can be determined in time \(\Oh{{\cliqueTime}}=\Oh{n^{\omega+1}}\), where~\(\omega<2.372\) is the exponent of matrix multiplication~\cite{DBLP:conf/soda/AlmanDWXXZ25}..
This result allows future work to focus on \(K_4\)-free graphs using the rich theory of extremal graphs.
If the edge-triangle incidence graph of \(G\) is a cactus graph, the winner of the game can be determined in time~\({\Oh{\cactusTime}=\Oh{n^3}}\).
Such \(G\) are \(K_4\)-free.
Our results as well as previous algorithms are summarized in \cref{fig:running-time-table}.

\begin{figure}
	\centering
	\begin{tabular}{c|l|c}
		Graph Class & \multicolumn{1}{c|}{Running time} & Source\\\hline
		general graphs & \(\Oh{\text{poly}(n)}\) & \cite{kutz2004thesis}\\
		general graphs & \(\Oh{n+m^5t^2+nm^6}\) & \cite{galliot2025makerbreakersolvedpolynomialtime}\\
		general graphs & \(\Oh{\generalTime}\) & this work\\
		\(K_4\subseteq G\), \(\etgraph{G}\) connected & \(\Oh{\cliqueTime}\) & this work\\
		\(\etgraph{G}\) cactus graph & \(\Oh{\cactusTime}\) & this work\\
	\end{tabular}
	\caption{Overview of algorithms for the unbiased triangle game on different graph classes.
	The author of~\cite{kutz2004thesis} does not provide a concrete running time, arguing only that is polynomial.}
	\label{fig:running-time-table}
\end{figure}

To obtain a lower bound for the time complexity, we present an algorithm that, given a graph \(G\), constructs a graph \(H\) with \(|V(H)|=n+2m\) and \(|E(H)|=6m\) in time \(\Oh{n+m}\) such that \(G\) contains a triangle if and only if Maker wins the unbiased triangle game on \(H\), see \cref{thm:trig-detection-lb}.
This shows that the unbiased triangle game is at least as hard as triangle detection.
The asymptotically fastest known algorithm for triangle detection uses matrix multiplication~\cite{DBLP:journals/siamcomp/ItaiR78} and runs in time \(\Oh{n^\omega}\).
This gives evidence that he unbiased triangle game cannot be decided in time \(\oh{n^{\omega}}\).

A key insight underlying our algorithms is that it is useful to model the unbiased triangle game via the edge-triangle incidence graph instead of the hypergraph representation.
This enables us to compute connected and even 2-connected components in linear time using classical algorithms.
The high running time of the algorithm of Galliot, Gravier and Sivignon~\cite{galliot2025makerbreakersolvedpolynomialtime} is partly caused by relying on chain detection in hypergraphs as a subroutine.
This is completely avoided through the framework of the edge-triangle incidence graph.
To further improve the running time, we exploit the particular structure of the unbiased triangle game.
The hypergraphs modeling the unbiased triangle game are linear allowing us to use results by Kutz~\cite{kutz2004thesis}.
Additionally, we utilize the special role of subgraphs isomorphic to \(K_4\).
Such subgraphs are very threatening to Breaker in a sense which is made precise in~\cref{thm:k4-elimination}.

The paper is structured as follows:
In \cref{sec:prelims}, we introduce the terminology used in this work.
Then, we analyze winning strategies for Maker and Breaker through the lens of the edge-triangle incidence graph in \cref{sec:strategy-analysis}.
This analysis lays the foundations for the following sections, but also contains results of independent interest such as a new kind of monotonicity property, see \cref{thm:monotonicity}.
In \cref{sec:complexity}, we present and analyze the algorithms for the unbiased triangle game on the three mentioned classes of graphs.
For the two special cases, this requires specialized structural characterizations of Maker's win, see \cref{thm:k4-elimination,thm:cactus-game-chara}.
In \cref{sec:min-maker-win}, we determine the asymptotic number of edges required to ensure Maker's win and lower bound the number of edges in graphs that are minimally winning for Maker with respect to graph inclusion.
Finally, we discuss open problems in \cref{sec:opem-problems}.

\section{Preliminaries}\label{sec:prelims}
Let \(G\) be a graph.
We denote the vertex set of \(G\) by \(V(G)\) and the edge set of \(G\) by \(E(G)\).
For hypergraphs, we use the same notation.
We denote the \emph{neighborhood} of \(U\subseteq V(G)\) in \(G\) by \(N_G(U)\).
Formally, \({N_G(U)=\{w\in V(G): vw\in E(G), v\in U\}\setminus U}\).
Similarly, the \emph{closed neighborhood} of \(U\) is \({N_G[U]=N_G(U)\cup U}\).
We will omit the subscript if \(G\) is clear from the context.
For a vertex \(v\in V(G)\), we write \(N(v)\) and \(N[v]\) instead of \(N(\{v\})\) and \(N[\{v\}]\).

For \(U\subseteq V(G)\), the \emph{subgraph induced by \(U\)} is denoted by \(G[U]=(U,E_U)\) where \(E_U\) is the set of edges using only vertices from \(U\), i.e.~\(E_U=\{e\in E(G)\mid e\subseteq U\}\).
We use \(G-U\) to denote the induced subgraph \(G[V(G)\setminus U]\).
For a vertex \(v\in V(G)\), we write \(G-v\) instead of \(G-\{v\}\).
For \({F\subseteq E(G)}\), \emph{the subgraph induced by \(F\)} is denoted by \(G[F]=(V_F,F)\) where \(V_F\) is the set of endpoints of edges in \(F\).
We use \(G\setminus F\) to denote the graph \((V(G),E(G)\setminus F)\).
The graph \(G\setminus F\) may be different from \(G[E(G)\setminus F]\).
For an edge \(e\in E(G)\), we write \(G\setminus e\) instead of \(G\setminus\{e\}\).

A graph is a \emph{cactus graph} if all its cycles are edge-disjoint.
For \(n\in\N\), the graph \(K_n\) is the complete graph on \(n\) vertices.
Two graphs \(G\) and \(H\) are \emph{isomorphic} if and only if there exists a bijection \({\varphi:V(G)\to V(H)}\) such that for all \(v,w\in V(G)\) the property \(vw\in E(G)\) holds if and only if \(\varphi(v)\varphi(w)\in E(H)\) holds.
We use \(G\cong H\) to denote that \(G\) and \(H\) are isomorphic.
Edge sets that induce an isomorphic copy of \(K_3\) in \(G\) are called \emph{triangles}.
We denote the \emph{set of triangles} in \(G\) by \(\trias{G}\).
Formally, \(\trias{G}={\{F\subseteq E(G):G[F]\cong K_3\}}\).
For \(U\subseteq V(G)\), \(\trias{U}=\trias{G[U]}\).

Given a hypergraph \(H\) and a subset \(U\subsetneq V(H)\), the hypergraph \(H^{+U}\) has vertex set \(V(H)\setminus U\) and edge set \({\{e\setminus U:e\in E(H)\}\setminus\{\emptyset\}}\).
For \(v\in V(H)\), we write \(H^{+v}\) instead of \(H^{+\{v\}}\).
For \(\emptyset\neq e\subseteq V(H)\), the hypergraph \(H+e\) has vertex set \(V(H)\) and edge set \(E(H)\cup\{e\}\).

We denote by \(\omega\) the \emph{exponent of matrix multiplication}, which has been intensively studied for decades~\cite{strassen1969gaussian,DBLP:conf/focs/Pan78,DBLP:journals/ipl/BiniCRL79,DBLP:journals/siamcomp/Schonhage81,DBLP:journals/siamcomp/Romani82,DBLP:conf/focs/CoppersmithW81,DBLP:conf/stoc/Williams12,DBLP:conf/soda/AlmanDWXXZ25}.
The best currently known upper bound for \(\omega\) is \(\omega<2.372\)~\cite{DBLP:conf/soda/AlmanDWXXZ25}.

The \emph{unbiased triangle game} played on the graph \(G\) is the following two player game:
The players Maker and Breaker alternatingly claim edges of \(G\) starting with Maker.
In each round, Maker and Breaker claim exactly one edge that has not already been claimed by any player as long as there are still edges available.
After any round, Maker wins if a subset of the edges he claimed induces a \(K_3\) in \(G\).
If no more edges are left and no subset of Maker's edges induces a \(K_3\), Breaker wins.
We will sometimes call the graph \(G\) the \emph{board graph} to emphasize that this is the graph the game is played on.
If Maker wins on \(G\), \(G\) is called \emph{Maker's win}.
Otherwise, \(G\) is called \emph{Breaker's win}.

Our definition of the unbiased triangle game is slightly different from the standard definition in the literature:
Usually, Maker and Breaker claim edges until there are none left.
Then, the winner is decided based on the edges claimed by Maker.
We allow the game to end early if Maker has already claimed a triangle.
This difference in definition makes it possible to concisely state monotonicity properties of winning strategies for Maker in \cref{sec:strategy-analysis}.

A \emph{configuration} of the game is a pair \((E_m,E_b)\) where \(E_m,E_b\subseteq E(G)\) and \(E_b\cap E_m=\emptyset\).
It represents the current state of the game, Maker has claimed all the elements of \(E_m\) and Breaker has claimed all the elements of \(E_b\).
If Maker claims the edges \(e_1,\dots,e_\ell\) in this order and Breaker replies with \(a_1,\dots,a_\ell\) in this order, then this particular play can be viewed as the sequence of configurations \((\emptyset,\emptyset),(E_m^1,E_b^1),\dots,(E_m^\ell,E_b^\ell)\) where \(E_m^i=\{e_j:1\leq j\leq i\}\) and \(E_b^i=\{a_j:1\leq j\leq i\}\).

The unbiased triangle game with board graph \(G\) can also be modeled by the hypergraph \(\gamehg{G}\) with \({V(\gamehg{G})=E(G)}\) and \({E(\gamehg{G})=\{U\subseteq E(G)\mid G[U]\cong K_3\}}\).
The unbiased triangle game is equivalent to the unbiased Maker-Breaker game played on \(\gamehg{G}\).
Two different triangles can only intersect in at most two vertices, or, equivalently, a single edge.
Therefore, the hypergraph \(\gamehg{G}\) is linear.
To reduce the amount of used terminology, we will usually say that Maker or Breaker wins on \(\gamehg{G}\) instead of referring to the unbiased triangle game on \(G\).

\begin{figure}[t]
	\centering
	\hspace{15mm}
	$\vcenter{\hbox{\includegraphics[scale=0.7]{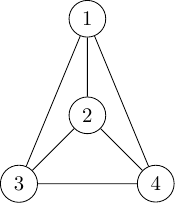}}}$
	\hspace{3cm}
	$\vcenter{\hbox{\includegraphics[scale=0.7]{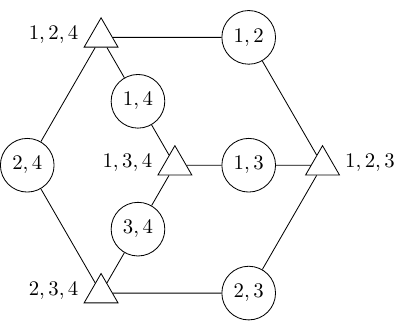}}}$
	\hspace{15mm}
	\caption{The graph \(K_4\) (left) and its edge-triangle incidence graph (right).
		In the edge-triangle incidence graph, circular vertices correspond to edges of \(K_4\) and triangular vertices correspond to triangles in \(K_4\).
		The vertex label indicates the vertices involved in the edge or triangle.}
	\label{fig:etgraph-example}
\end{figure}

For algorithmic purposes, we model the game by the following incidence structure:
\begin{definition}\label{defn:etg}
	Let \(G\) be a graph.
	The edge-triangle incidence graph of \(G\), denoted by \(\etgraph{G}\), has vertex set \(E(G)\cup\trias{G}\) and edge set \(\{\{e,t\}:e\in t\in\trias{G}\}\).
\end{definition}
All triangle-vertices have degree exactly~3 in \(\etgraph{G}\).
The edge-triangle incidence graph of the complete graph on four vertices is given as an example in \cref{fig:etgraph-example}.
Any Maker-Breaker game played on a hypergraph \(H\) can also be modeled by the vertex-hyperedge incidence graph \(\vegraph{H}\) with \({V(\vegraph{H})=V(H)\cup E(H)}\) and \({E(\vegraph{H})=\{\{v,e\}\mid v\in e\in E(H)\}}\) generalizing \cref{defn:etg}.
Maker and Breaker still alternatingly claim elements of \(V(H)\).
The goal of Maker now is to claim the complete neighbor set \(N_{\vegraph{H}}(e)\) of an edge \(e\in E(H)\) while Breaker tries to prevent this.
Similarly, any incidence structure can be modeled by a hypergraph.
In this sense, the two views of the game are the same.
However, if the hypergraph is linear, it is useful to work with graphs instead of hypergraphs.
This is because shortest paths and cycles in \(\vegraph{H}\) correspond to paths and cycles in \(H\) while being easier to compute.

Given a path or cycle \(R\) in \(\etgraph{G}\), we will use \(\triasCap{R}\) as shorthand notation for \(V(R)\cap\trias{G}\).
To distinguish between edges of \(G\) and vertices in \(\etgraph{G}\), we will call elements of \(E(G)\) edge-vertices when we view them as vertices of \(\etgraph{G}\).
Similarly, we will call elements of \(\trias{G}\) triangle-vertices when we view them as vertices of \(\etgraph{G}\).
For a hypergraph \(H\) and a path \(P\) in \(\vegraph{H}\), we use \(\edgeCap{P}\) to denote \(V(P)\cap E(H)\).

\section{Winning Strategies and the Edge-Triangle Incidence Graph}\label{sec:strategy-analysis}
In this section, we analyze properties of winning strategies for Maker and Breaker from the perspective of the edge-triangle incidence graphs.
We characterize Breaker's win in \cref{lem:maker-winning-structure}, which also gives a sufficient and necessary condition for Maker's win.
Then, we analyze properties of winning strategies for Maker that play to achieve this winning condition as fast as possible.
Notably, such strategies play monotonically, i.e.~they only claim edges that lie in monotonically decreasing connected subgraphs of \(\etgraph{G}\), see \cref{thm:monotonicity}.
We also show that such strategies can be assumed to only claim edges in round \(i+1\) that lie on a cycle in \(\etgraph{G}-N[E_b^i]\), where \(E_b^i\) denotes that Breaker has claimed up to round \(i\) with the convention that \(E_b^0=\emptyset\).

\begin{lemma}\label{lem:maker-winning-structure}
	Let \(B\) be a strategy for Breaker on \(\gamehg{G}\).
	Then \(B\) is a winning strategy if and only if for every Maker strategy \(M\) the following condition is satisfied:
	Denote the set of edges Breaker has claimed after round \(i\) by \(E_b^i\) while playing against \(M\).
	Then after each round \(i\), Maker has claimed at most one edge-vertex in each connected component of \(\etgraph{G}-N[E_b^i]\).
	
	In particular, if Maker has a strategy on \(\gamehg{G}\) that always manages to claim two edge-vertices in the same connected component of \(\etgraph{G}-N[E_b^i]\) against any Breaker strategy for some \(i\), then Maker wins on \(\gamehg{G}\).
\end{lemma}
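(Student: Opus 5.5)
The plan is to prove the two directions separately. The forward implication (the condition forces a Breaker win) is short. The reverse implication is proved by contraposition and carries the main work. It also yields the ``in particular'' statement directly.

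\emph{Condition implies Breaker wins.} Suppose that against every Maker strategy, Maker has at most one edge-vertex in each component of \(\etgraph{G}-N[E_b^i]\) after every round \(i\). Assume Maker nevertheless claims all three edges of some triangle \(t\) in round \(i\). Breaker owns no edge of \(t\), so \(t\notin N[E_b^i]\) and \(t\) survives in \(\etgraph{G}-N[E_b^i]\). The three edges of \(t\) are adjacent to \(t\), so they lie in one component. That component then contains three Maker edge-vertices, a contradiction. Hence \(B\) is winning. This uses our convention that the game may stop right after Maker's move, with \(E_b^i\) denoting Breaker's edges claimed so far.

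\emph{Breaker wins implies the condition.} Suppose that for some Maker strategy \(M\), after some round \(i\), Maker owns two edge-vertices in one component \(C\) of \(\etgraph{G}-N[E_b^i]\). I will show that Maker wins from this configuration with Maker to move. Then the strategy ``play \(M\) until this happens, then switch'' beats \(B\), so \(B\) is not winning. The same argument proves the ``in particular'' part.

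I argue by induction on \(d\), the minimum distance in \(\etgraph{G}-N[E_b^i]\) between two Maker edge-vertices in a common component. Fix such a pair with a shortest path \(e=x_0,t_1,x_1,\dots,t_k,x_k=f\).
\begin{itemize}
\item All triangles \(t_j\) on the path survive, so none contains a Breaker edge.
\item By minimality, the interior edge-vertices \(x_1,\dots,x_{k-1}\) and the third edges \(y_j\) of the \(t_j\) are not Maker's, since otherwise a closer pair would exist. Hence they are all free.
\item If \(k=1\), then \(t_1=\{e,f,y_1\}\) with \(y_1\) free. Maker claims \(y_1\) and wins.
\item If \(k\ge 2\), Maker claims \(x_1\). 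Now \(t_1\) contains the Maker edges \(e,x_1\) and the free edge \(y_1\). Breaker must claim \(y_1\), or Maker wins on his next move.
\end{itemize}

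The step I expect to be the main obstacle is showing that Breaker's forced move \(y_1\) does not destroy the subpath \(x_1,t_2,\dots,t_k,f\). Then \(x_1\) and \(f\) are Maker edge-vertices at distance \(2(k-1)<d\) in the new graph \(\etgraph{G}-N[E_b^{i+1}]\), and induction applies. So I must exclude \(y_1\in t_j\) for \(j\ge2\).
\begin{itemize}
\item If \(j=2\), then \(t_2\) contains both \(x_1\) and \(y_1\), which are also in \(t_1\). This contradicts the linearity of \(\gamehg{G}\), since distinct triangles share at most one edge.
\item If \(j\ge3\), the walk \(e,t_1,y_1,t_j,x_j,\dots,f\) is shorter than the chosen path, because it reaches \(t_j\) in \(3\) steps instead of \(2j-1\). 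This contradicts minimality.
\end{itemize}
Also \(y_1\ne x_j\) for all \(j\), because the \(x_j\) are distinct edges of the path and \(y_1\) is the third edge of \(t_1\). So the subpath survives and the induction closes.
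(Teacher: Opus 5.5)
Your proof is correct and follows essentially the same route as the paper: both fix a shortest path between two Maker edge-vertices in \(\etgraph{G}-N[E_b^i]\), have Maker claim its interior edge-vertices one at a time so that Breaker is forced onto each triangle's third edge, and use shortestness together with linearity to show these forced replies never destroy the rest of the path; your induction on the distance is just a repackaging of the paper's sequential play. Two small points: for \(k=1\), the claim that \(y_1\) is not Maker's does not follow from minimality, but if it were, Maker would already own \(t_1\) and have won; and \(y_1\neq x_j\) for \(j\geq2\) really rests on the shortest path being induced, since otherwise \(e,t_1,x_j\) would be a shortcut.
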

\begin{proof}
	The second statement follows immediately from the first as no Breaker strategy can be a winning strategy if a Maker strategy with this property exists.
	We now show the first statement.
	The condition is clearly sufficient for \(B\) to be a winning strategy for Breaker:
	Assume for contradiction that some strategy \(M\) manages to claim a triangle \(t\).
	Then \(M\) has claimed all three edges of \(t\).
	In particular, \(M\) has claimed 3 edge-vertices in the same connected component contradicting the assumption.
	
	It remains to show that the condition is necessary.
	Assume that there exists a strategy \(M\) such that in some round \(i\), \(M\) manages to claim two edge-vertices in the same connected component of \(\etgraph{G}-N_b^i\).
	Let \(j\) be the first round where this happens.
	The round \(j\) cannot be the last round as edge-vertices can only be connected by paths containing triangles, so there must be triangles with unclaimed edges left.
	We will construct a strategy \(M^*\) that wins against \(B\) concluding the proof.
	
	For the first \(j\) rounds, \(M^*\) plays according to \(M\).
	Let \(P=e_1t_2e_2\dots t_ke_k\) be a shortest path between two edge-vertices claimed by Maker in \(\etgraph{G}-N[E_b^i]\) where \(e_1,\dots,e_k\in E(G)\) and \(t_2,\dots,t_k\in\trias{G}\).
	The edges \(e_2,\dots,e_{k-1}\) are not claimed by any player.
	As \(P\) is shortest, there are no edges between vertices of \(V(P)\) except those used by the path.
	Additionally, for \(2\leq a,b\leq k\), \(t_a\) and \(t_b\) have no common neighbors if \(|a-b|>1\) as then \(P\) also would not be shortest.
	If \(|a-b|=1\), \(t_a\) and \(t_b\) have exactly one common neighbor as otherwise they would be identical.
	This is due to the fact that the vertices of a triangle are already determined by two edges.
	
	Next, \(M'\) claims \(e_2,e_3,\dots,e_{k-1}\) in this order.
	In every turn, Breaker must respond to \(e_i\) by claiming the unique edge in \(t_i\setminus V(P)\).
	The above argument shows that all of these edges are pairwise distinct.
	If Breaker does not respond in this way, Maker wins by claiming the edge himself in the next round.
	Finally, Maker wins by claiming the remaining edge in \(N(t_k)\setminus V(P)\).
	The situation is visualized in \cref{fig:path-strategy}.
\end{proof}

\begin{figure}[t]
	\centering
	\includegraphics[scale=0.7]{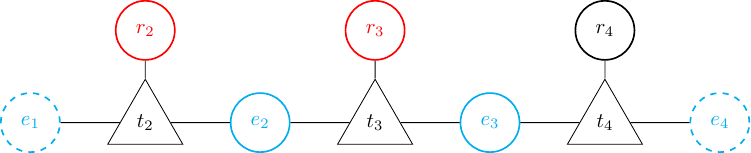}
	\caption{A path of length 7 on which Maker has claimed two edge-vertices (dashed) and the resulting edge-vertices that Maker and Breaker claim in the proof of \cref{lem:maker-winning-structure}.
	Edge-vertices claimed by Breaker are colored red.
	Cyan edge-vertices are claimed by Maker.
	Black edge-vertices are unclaimed.}
	\label{fig:path-strategy}
\end{figure}

The Maker strategy used in the proof of \cref{lem:maker-winning-structure} is suboptimal with respect to the number of rounds needed to win.
By claiming the edge-vertex in the middle of the current path, all edges of a triangle can be claimed in a logarithmic number of rounds with respect to the length of the initial path.
This idea is used in~\cite{galliot2025makerbreakersolvedpolynomialtime} to show that either Maker can win the unbiased Maker-Breaker game on a hypergraph \(H\) of rank at most~3 within \(\Oh{\log |V(H)|}\) rounds or \(H\) is Breaker's win.

\begin{definition}
	Let \(G=(V,E)\) be a graph.
	Given \(U\subseteq V\) and \(v\in V\setminus N[U]\), we denote the connected component of \(G-N[U]\) containing \(v\) by \(\concomp[G]{v}{U}\).
	If \(G\) is clear from the context, we will omit the subscript.
\end{definition}

We define the winning condition (WC) as follows: 
there exist two edges claimed by Maker in the same connected component of \(\etgraph{G}-N[E_b^i]\) in some round \(i\), where \(E_b^i\) denote the edges claimed by Breaker up to this point.
By \cref{lem:maker-winning-structure}, (WC) is indeed a winning condition for Maker .

\begin{definition}
	Let \(G\) be Maker's win.
	The set \(M(G)\) is the set of Maker strategies that, given any configuration of the game graph, achieve (WC) in the fewest possible number of rounds.
	Once (WC) is achieved, the strategies in \(M(G)\) claim the edge-vertices along a shortest path between the two edge-vertices in the same connected component to win as in the proof of \cref{lem:maker-winning-structure}.
\end{definition}

The strategies in \(M(G)\) can be seen as an optimal winning strategies in the sense that the winning condition is achieved in the fewest number of turns.
However, this does not necessarily coincide with winning in the fewest number of turns.
Recall that the unbiased triangle game is defined such that it ends once Maker has managed to claim a triangle.
The next theorem shows that any strategy \(M\) from \(M(G)\) plays monotonically in the sense that it will always pick an edge from the connected component of the edge-triangle incidence graph of the current game graph that contains the last edge \(M\) claimed.
Intuitively, this makes sense as \(M\) tries to achieve (WC) as fast as possible and this can only be done by claiming edges that lie in the same connected component as some other edge previously claimed.

\begin{theorem}\label{thm:monotonicity}
	Let \(G\) be a graph such that Maker wins on \(\gamehg{G}\) and let \(M\in M(G)\).
	Then \(M\) satisfies the following property for all Breaker strategies \(B\):
	Denote the edge claimed by Maker (resp.~Breaker) in round \(i\) by \(e_i\) (resp.~\(a_i\)).
	Set \(E_b^i=\{a_1,\dots,a_i\}\).
	Let \(r\) be the number of rounds played.
	Then for all \(i\) such that \(1\leq i<r\) the inclusion \(\concomp[\etgraph{G}]{e_{i+1}}{E_b^{i+1}}\subseteq\concomp[\etgraph{G}]{e_i}{E_b^i}\) holds, i.e.~\(M\) plays monotonically.
\end{theorem}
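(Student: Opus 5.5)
The proof rests on the fact that the components involved shrink as Breaker's edge set grows. Since \(E_b^i\subseteq E_b^{i+1}\), we have \(N[E_b^i]\subseteq N[E_b^{i+1}]\), so every connected component of \(\etgraph{G}-N[E_b^{i+1}]\) lies inside a single component of \(\etgraph{G}-N[E_b^i]\). Edge-vertices are adjacent only to triangle-vertices, so an unclaimed edge-vertex is never in \(N[E_b^{i+1}]\) unless Breaker claimed it. Hence \(\concomp{e_{i+1}}{E_b^{i+1}}\) is well defined and contained in \(\concomp{e_{i+1}}{E_b^{i}}\). The theorem therefore reduces to the single claim
\[
e_{i+1}\in\concomp{e_i}{E_b^i}\qquad\text{for all } 1\le i<r .
\]

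I would split the play into two phases. In the second phase, after (WC) holds, the strategy \(M\in M(G)\) by definition claims the inner edge-vertices of a shortest path \(P\) between two Maker edge-vertices lying in one component. As in the proof of \cref{lem:maker-winning-structure}, Breaker is forced to answer each move with the unique edge of \(t_j\setminus V(P)\); otherwise Maker wins on the spot. Those forced answers never remove vertices of \(P\), so \(P\) stays inside one component and consecutive path edges satisfy the claim. The transition round, in which (WC) is first reached, is immediate: (WC) is reached precisely because \(e_{i+1}\) lands in the component of an earlier Maker edge. Before the transition the two Maker edges in question are \(e_{i+1}\) and, as I will show, \(e_i\).

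The first phase, before (WC), is the main obstacle. Here the claim is an optimality statement, and I would prove it by an exchange argument against minimality of the number of rounds. Let \(d(E_m,E_b)\) denote the least number of further rounds in which Maker can force (WC) from a configuration. Suppose \(M\) plays \(e_{i+1}\notin\concomp{e_i}{E_b^i}\) in some round \(i+1\) before (WC). Then \(e_i\) sits alone in its component, and the next move opens or extends a different component.

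I would argue that the continuation strategy from round \(i+1\) could already have been started in round \(i\). Replace the move \(e_i\) by the move \(e_{i+1}\), and regard \(e_i\) as an extra Maker edge that Maker is allowed to ignore. An extra Maker edge never hurts Maker in a Maker--Breaker game, while Breaker's reply \(a_i\) is one move Breaker spends relative to the modified game. By the usual strategy-stealing and monotonicity argument, Maker simulates the continuation, treats \(a_i\) as an arbitrary extra Breaker edge, and re-maps Breaker's answers. The danger is that \(a_i\) may lie near the component of \(e_{i+1}\).

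I expect this handling of Breaker's adaptivity to be the hardest step. It is exactly where I would try to prove that, from a configuration in which every Maker edge is alone in its component, the value \(d\) is governed by a single component. Concretely, I would show that a Breaker strategy which answers within the component of Maker's latest move is optimal. Given that, the move \(e_i\) together with Breaker's answer cancels, the modified game reaches (WC) one round earlier, and this contradicts \(M\in M(G)\). This gives \(e_{i+1}\in\concomp{e_i}{E_b^i}\), and together with the shrinking argument it yields \(\concomp{e_{i+1}}{E_b^{i+1}}\subseteq\concomp{e_i}{E_b^i}\).
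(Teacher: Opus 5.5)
Your reduction to showing \(e_{i+1}\in\concomp{e_i}{E_b^i}\) is correct, and the phase after (WC) is essentially fine. One inaccuracy there: \(P\) does not stay in one component. Breaker's forced answer claims the third edge of \(t_j\) and thereby deletes the triangle-vertex \(t_j\). What survives is the part of \(P\) from the current Maker edge onward, and that is all you need.

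The genuine gap is the phase before (WC), which is the whole content of the theorem. You only announce that you would try to prove that answering in the component of Maker's latest move is optimal for Breaker, after which \(e_i\) and \(a_i\) ``cancel''. The exchange as sketched does not go through, for three reasons.
First, playing \(e_{i+1}\) already in round \(i\) while treating \(e_i\) as an edge Maker may ignore presupposes that the continuation of \(M\) never reaches (WC) through \(e_i\). That is, it never returns to the component of \(e_i\), which is exactly the monotonicity to be proved.
Second, reaching (WC) one round earlier against the given \(B\) contradicts nothing. \(M(G)\) is defined through worst-case round counts, and \(B\) may be suboptimal; you define the value \(d\) but never use it.
Third, the case you flag as dangerous cannot be removed by any optimality property of Breaker. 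Suppose \(e_i\) was played into the component \(C\) of an older Maker edge \(e_j\). Then \(a_i\) must lie in \(C\) (otherwise (WC) holds after round \(i\)), it splits \(C\), and \(e_{i+1}\) may go back into the piece \(\concomp{e_j}{E_b^i}\). Nothing cancels there. What rules this move out is that this piece is a sub-board of \(C\) with the same single Maker edge, hence no faster for Maker than \(C\) was.

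What is missing is a per-component value and an induction on it; this is how the paper argues, via \(r(K,E_1)\) and \cref{claim:round_minimizer}. Let \(r(K)\) be the worst-case number of rounds Maker needs to reach (WC) inside a component \(K\), given his edges there. Three facts drive the argument.
First, for every \(j\le i\) the component \(\concomp{e_j}{E_b^{i+1}}\) is a sub-board of \(\concomp{e_j}{E_b^{i}}\) with the same single Maker edge, so its value can only increase.
Second, a component without Maker edges costs at least the value \(d_0\) of the empty configuration, whereas after round \(i\ge 1\) the value of the configuration is at most \(d_0-i\).
Third, since \(M\) is optimal from every configuration, the value drops by at least one in each round, whatever Breaker answers.
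The second and third facts give the base case: after round \(1\), the component of \(e_1\) is the unique minimizer. Now assume \(\concomp{e_i}{E_b^i}\) is the unique minimizer after round \(i\). After round \(i+1\), every component other than \(\concomp{e_{i+1}}{E_b^{i+1}}\) then has value at least \(r(\concomp{e_i}{E_b^i})\), while the configuration has value at most \(r(\concomp{e_i}{E_b^i})-1\). So \(\concomp{e_{i+1}}{E_b^{i+1}}\) is the unique minimizer. Finally, an optimal \(M\) must play in the minimizing component. Otherwise Breaker answers inside the component Maker chose and leaves the better one untouched, which costs Maker a round. This gives \(e_{i+2}\in\concomp{e_{i+1}}{E_b^{i+1}}\).
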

\begin{proof}
	Let \(\ell\) be the round in which the strategy \(M\) first achieves the winning condition against \(B\).
	Given \(K\subseteq V(\etgraph{G})\) and \(E_1\subseteq E(G)\), denote by \(r(K,E_1)\) the maximum number of rounds \(M\) requires to achieve (WC) while only claiming edges from \(K\) given that \(M\) has already claimed the edges of \(K\cap E_1\), where the maximum is taken over all Breaker strategies.
	If no strategy for achieving (WC) exists under these constraints, \(r(K,E_1)\) is defined to be \(\infty\).
	We will omit the subscript in the connected component notation from now on.
	\begin{claim}\label{claim:round_minimizer}
		After each round \({1\leq i<\ell}\), the unique minimizer of \(r_i(K)\coloneqq r(K,\{e_j:1\leq j\leq i\}\) among the connected components of \(\etgraph{G}-N[E_b^i]\) is \(\concomp{e_i}{E_b^i}\).
	\end{claim}
	\begin{claimproof}
		Consider \(i=1\).
		Assume for contradiction that \(r_1(K)\leq r_1(\concomp{e_1}{E^1_b})\) for some connected component \(K\neq\concomp{e_1}{E^1_b}\) of \(\etgraph{G}-N[E_b^1]\).
		As Maker's first move was \(e_1\in\concomp{e_1}{E^1_b}\), Maker has no edge claimed in \(K\), so Maker could have claimed some edge in \(K\) instead of \(e_1\) in the first round to achieve (WC) in fewer rounds contradicting the definition of \(M(G)\).
		
		Let the statement be true for \(i<\ell-1\), we will prove it for \(i+1\).
		Let \(K\) be the connected component of \(\etgraph{G}-N[E_b^i]\) minimizing \(r_{i+1}\).
		Then \({r_{i+1}(K)<r_i(\concomp{e_i}{E_b^i})}\) as \(M\) achieves (WC) as fast as possible given the current configuration of the game and one round of play has passed.
		If \({K\cap\{e_1,\dots,e_{i+1}\}=\emptyset}\), then Maker has no edge claimed in \(K\) and could have claimed some edge in \(K\) in the first round to achieve (WC) faster, contradicting the definition of \(M(G)\).
		Therefore, \(K\cap\{e_1,\dots,e_{i+1}\}\neq\emptyset\), so \(K\) must be one of the connected components \(\concomp{e_1}{E_b^{i+1}},\dots,\concomp{e_{i+1}}{E_b^{i+1}}\).
		These connected components are disjoint as (WC) is first achieved in round \(\ell\).
		Observe that \(\concomp{e_j}{E_b^{i+1}}\subseteq\concomp{e_j}{E_b^i}\) for all \({1\leq j\leq i}\).
		As \(e_j\) is the only edge claimed by Maker in \(\concomp{e_j}{E_b^k}\) in round \(k\), where \({1\leq k<\ell}\), we conclude that \({r_i(\concomp{e_j}{E_b^i})\leq r_{i+1}(\concomp{e_j}{E_b^{i+1}}})\) for all \(1\leq j\leq i\).
		By the induction hypothesis, \(\concomp{e_i}{E_b^i}\) minimizes \(r_i\).
		This yields \(r_{i+1}(K)<r_i(\concomp{e_i}{E_b^i})\leq r_i(\concomp{e_j}{E_b^i})\leq r_{i+1}(\concomp{e_j}{E_b^{i+1}}\) for all \(1\leq j\leq i\), so \(K\) must be \(\concomp{e_{i+1}}{E_b^{i+1}}\).
	\end{claimproof}
	
	As \(M\) plays in such a way that (WC) is achieved the fastest, \cref{claim:round_minimizer} implies \({\concomp{e_{i+1}}{E_b^{i+1}}\subseteq\concomp{e_i}{E_b^i}}\) for all \(1\leq i\leq\ell\).
	If this was not the case, Maker would be able to achieve (WC) in \(k\) rounds against all possible Breaker strategies after making a move in a connected component of \(\etgraph{G}-N[E^i_b]\) in which he can only achieve (WC) within \(k+2\) or more turns as the minimizing component is unique.
	In round \(\ell\), (WC) is achieved and \(M\) switches to building a path from \(e_\ell\) to \(e_{\ell-1}\) (or vice versa).
	It is clear that the path-building strategy satisfies the monotonicity property.
\end{proof}

A simple consequence of \cref{thm:monotonicity} is that only the connected components of \(\etgraph{G}\) are relevant for Maker's win.
While \cref{cor:etg-connected} also follows from elementary properties of Maker-Breaker games, its proof becomes very simple when using monotonicity.

\begin{corollary}\label{cor:etg-connected}
	let \(G\) be a graph and \(K_1,\dots,K_\ell\) be the connected components of \(\etgraph{G}\).
	Then Maker wins on \(\gamehg{G}\) if and only if Maker wins on \(\gamehg{G[E(G)\cap K_i]}\) for some \(1\leq i\leq\ell\).
\end{corollary}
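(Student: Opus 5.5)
We prove both directions, using \cref{thm:monotonicity} for the forward implication and a direct strategy transfer for the converse. Write \(G_i=G[E(G)\cap K_i]\). Each triangle of \(G\) lies entirely inside one component \(K_i\), since its three edge-vertices are adjacent to the triangle-vertex. Hence \(\trias{G_i}=\trias{G}\cap K_i\), and \(\etgraph{G_i}\) is exactly the component \(K_i\). The first step is to record this identification, so that the games on \(\gamehg{G_i}\) are the subgames of \(\gamehg{G}\) living in the separate components.

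For the backward direction, suppose Maker wins on \(\gamehg{G_i}\) with a strategy \(S\). On \(\gamehg{G}\), Maker plays \(S\) inside \(K_i\). Whenever Breaker claims an edge outside \(K_i\), Maker pretends Breaker passed and answers with an arbitrary free edge of \(G_i\) (or any edge, if none is left). The standard fact that extra moves never hurt Maker settles this. Alternatively, we can use (WC): since \(N[E_b]\) restricted to \(K_i\) depends only on Breaker's edges in \(K_i\), the condition of \cref{lem:maker-winning-structure} achieved on \(K_i\) transfers verbatim to \(\etgraph{G}\).

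For the forward direction, suppose Maker wins on \(\gamehg{G}\) and fix \(M\in M(G)\). Consider Maker's first move \(e_1\), lying in some \(K_i\). By \cref{thm:monotonicity}, against every Breaker strategy all later moves \(e_{j}\) satisfy \(\concomp{e_j}{E_b^j}\subseteq\concomp{e_1}{E_b^1}\subseteq K_i\). So every edge Maker ever claims, including those on the final path-building phase, lies in \(K_i\).

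We then build a strategy for \(\gamehg{G_i}\). Given any Breaker strategy \(B'\) on \(G_i\), view it as a Breaker strategy on \(G\) that only claims edges of \(K_i\). When \(G_i\) runs out of edges, Breaker may claim edges elsewhere, but by then the game on \(G_i\) is over. Maker follows \(M\), whose moves all lie in \(K_i\) by monotonicity, so the play is legal on \(G_i\). Since \(M\) wins on \(G\), Maker completes a triangle, which lies in \(\trias{G}\cap K_i=\trias{G_i}\).

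The main obstacle is that \(i\) depends on \(e_1\), which is fixed since \(M\) is deterministic from the empty configuration, and that Breaker on \(G\) must be a legal full strategy. Breaker's moves being confined to \(K_i\) until \(K_i\) is exhausted handles the latter. The remaining subtlety is ensuring the simulated \(G\)-game does not stall: whenever Maker needs to move, free edges exist in \(K_i\) by monotonicity, because \(\concomp{e_j}{E_b^j}\) is nonempty for \(j<r\).
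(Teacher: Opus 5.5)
Your proof is correct and follows essentially the same route as the paper. The converse direction is the fact that a Maker win on the subgraph \(G[E(G)\cap K_i]\) transfers to \(G\), and the forward direction uses \cref{thm:monotonicity} to confine every move of a strategy in \(M(G)\) to the component of its Breaker-independent first edge \(e_1\); your extra bookkeeping (identifying \(\etgraph{G_i}\) with \(K_i\) and checking that the simulated play stays legal) only spells out details the paper leaves implicit.
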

\begin{proof}
	Maker wins on \(\gamehg{G}\) if Maker wins on \(\gamehg{G[E\cap K_i]}\) as \(G[E\cap K_i]\) is a subgraph of \(G\).
	If Maker wins on \(\gamehg{G}\), \cref{thm:monotonicity} yields a monotone winning strategy for Maker.
	In particular, all edges claimed by this Maker strategy lie in the connected component \(K\) of \(\etgraph{G}\) containing the first edge \(e_1\) claimed by Maker.
	As this edge is independent of the strategy used by Breaker, Maker wins on \(\gamehg{G[E(G)\cap K]}\).
\end{proof}

From an algorithmic perspective, \cref{cor:etg-connected} says that it is sufficient to consider the connected components of \(\etgraph{G}\) separately to determine if \(G\) is Maker's win.
It also subsumes the trivial observation that only edges which lie on triangles are relevant for Maker's win as either all edges or no edges inside a connected component of \(\etgraph{G}\) lie on a triangle, the latter case leading to subgraphs consisting of a single isolated edge, which are obviously losing for Maker. 

Cycles play a central role for the strategies in \(M(G)\) as the remaining two results of this section show.

\begin{lemma}\label{lem:only-cycle-wc}
	Let \(G\) be a graph such that Maker wins on \(\gamehg{G}\) and let \(M\in M(G)\).
	Let \(\ell\) be the round in which \(M\) achieves (WC) against some Breaker strategy \(B\).
	For all \(1\leq i<\ell\), in round \(i+1\), \(M\) chooses an edge that lies on a cycle in \(\etgraph{G}-N[E_b^i]\).
\end{lemma}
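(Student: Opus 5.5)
We argue by contradiction. Fix \(M\in M(G)\), a Breaker strategy \(B\), and a round \(1\leq i<\ell\). Suppose that in round \(i+1\) the strategy \(M\) claims an edge \(e_{i+1}\) that lies on no cycle of \(\etgraph{G}-N[E_b^i]\). By \cref{thm:monotonicity} and \cref{claim:round_minimizer}, \(e_{i+1}\) lies in the component \(K\coloneqq\concomp{e_i}{E_b^i}\). This is the component minimising \(r_i\), and \(e_i\) is Maker's only edge in it. Since \(e_{i+1}\) lies on no cycle, \(e_{i+1}\) is a cut vertex of \(K\), or a leaf, or \(K\) is a tree around it. Hence every path in \(K\) between two edge-vertices that passes through \(e_{i+1}\) enters and leaves \(e_{i+1}\) through different bridges \(\{e_{i+1},t\}\), where \(t\) ranges over the triangles containing \(e_{i+1}\).

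The key step is to describe Breaker's reply \(a_{i+1}\). Breaker responds to \(e_{i+1}\) by claiming an edge adjacent in \(\etgraph{G}\) to a triangle \(t\ni e_{i+1}\) on the side towards \(e_i\). This deletes \(t\) and splits \(K\) along that bridge. If \(e_{i+1}\) lies in a bridge-separated branch not containing \(e_i\), then \(t\) is the unique triangle through which \(e_{i+1}\) connects to \(e_i\). After the reply, \(e_i\) and \(e_{i+1}\) are in different components of \(\etgraph{G}-N[E_b^{i+1}]\). Moreover, \(\concomp{e_{i+1}}{E_b^{i+1}}\) is contained in the branch hanging off \(e_{i+1}\) away from \(e_i\), and inside it Maker again holds a single edge. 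Next we compare round counts. The new component is a subgraph of a component reachable from the configuration before round \(i+1\), and in it Maker holds a single edge \(e_{i+1}\). So, as in the proof of \cref{claim:round_minimizer}, achieving (WC) there takes at least as long as Maker starting fresh with one edge in that subgraph. That option was also available to Maker earlier, which contradicts either the minimality of \(r\) or the strict decrease \(r_{i+1}<r_i\) established in the claim. Symmetrically, suppose \(e_i\) and \(e_{i+1}\) are joined only through bridges at \(e_{i+1}\). Then Breaker cuts the bridge at \(e_{i+1}\) on the path to \(e_i\), and Maker has gained nothing towards a pair: he is left with one edge in each of two disjoint components, both subsets of \(K\).

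The main obstacle is making this comparison rigorous. We must show that, with Breaker's move spent on the cut, \(r_{i+1}\) of the new component containing \(e_{i+1}\) is at least \(r_i(K)-1+1\), so there is no strict decrease. The plan is to show that the Maker position ``one edge \(e_{i+1}\) in the branch \(K'\) beyond the bridge'' is dominated by the position ``one edge \(e_i\) in \(K\)''. To do this, we map any fast Maker strategy in \(K'\) starting from \(e_{i+1}\) to one in \(K\) starting from \(e_i\), or more simply we observe that Maker could have played \(e_{i+1}\) first in \(K'\). We then compare via \(r_1\) minimality, as the claim does.

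The degenerate cases are handled separately. If \(e_{i+1}=e_i\)'s neighbourhood is a tree, no pair can ever be formed through \(e_{i+1}\). If \(e_{i+1}\) is a leaf, it lies in a single triangle \(t\), Breaker takes an edge of \(t\), and \(e_{i+1}\) becomes isolated. In both cases the move is wasted, contradicting optimality.
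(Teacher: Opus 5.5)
There is a genuine gap, and it is the step you yourself flag as the main obstacle. Your framework is sound. Suppose that for one particular reply to \(e_{i+1}\) (cutting the bridge triangle \(t\) towards \(e_i\)) every component of the resulting position still needs at least \(r_i(K)\) rounds. Then \(e_{i+1}\) was not an optimal move, contradicting \(M\in M(G)\). The components containing \(e_i\) and the earlier Maker edges cause no trouble, since extra Breaker edges only slow Maker down. But the proof then rests entirely on the inequality \(r(K',\{e_{i+1}\})\geq r(K,\{e_i\})\) for the branch \(K'\) beyond the cut vertex \(e_{i+1}\), and you do not prove it. The justification in your second paragraph does not work. The ``start fresh'' argument of \cref{claim:round_minimizer} applies only to a component in which Maker holds \emph{no} edge, because such a position is dominated by the initial one. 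The position ``Maker holds \(e_{i+1}\) in \(K'\)'' was never available for free: reaching it costs Maker a move and gives Breaker a reply of his choice. So ``Maker could have played \(e_{i+1}\) first in \(K'\)'' gives no lower bound on \(r(K',\{e_{i+1}\})\). It contradicts neither the minimality of the round count nor the strict decrease \(r_{i+1}<r_i\). Likewise, ``Maker has gained nothing towards a pair'' is the conclusion to be proved, not an argument for it.

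The missing idea, which is what the paper's proof supplies, is to \emph{skip} the non-cycle move. From the position with \(e_i\) in \(K\), Maker directly claims the edge \(e_{i+2}\) that \(M\) would play after \(e_{i+1}\), with \(e_i\) standing in for \(e_{i+1}\). Since \(e_{i+1}\) lies on no cycle, every \((e_i,e_{i+2})\)-path passes through \(e_{i+1}\), so Breaker's answer \(a'\) falls into one of two cases. First, \(a'\) is \(e_{i+1}\) itself or lies on \(e_i\)'s side of \(e_{i+1}\). Then the component of \(e_{i+2}\) contains \(\concomp{e_{i+2}}{E_b^{i+2}}\) from the original line of play, whatever Breaker answered there. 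Second, \(a'\) lies beyond \(e_{i+1}\) and separates \(e_{i+1}\) from \(e_{i+2}\). Then \(a'\) was also a legal reply in round \(i+2\) of the original line and leaves \(e_{i+2}\) in the same component. By \cref{thm:monotonicity}, \(M\) continues only in the component of \(e_{i+2}\). Hence in both cases the skipping strategy is at least as well placed one round earlier, which gives \(r(K,\{e_i\})\leq r(K',\{e_{i+1}\})\) and the desired contradiction. Your remark about mapping a fast strategy in \(K'\) starting from \(e_{i+1}\) to one in \(K\) starting from \(e_i\) points at this, but this case analysis on Breaker's reply is the actual content of the lemma, and it is missing. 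Your leaf case is fine; the ``tree'' degenerate case is garbled but is subsumed by the cut-vertex case.
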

\begin{proof}
	Assume for contradiction that there exists a Breaker strategy such that, when playing against this strategy, \(M\) chooses an edge \(e_{i+1}\) in round \(2\leq i+1\leq\ell\) that does not lie on a cycle in \(\etgraph{G}-N[E_b^i]\).
	Let \(B\) be a Breaker strategy with this property that maximizes the number of rounds \(M\) requires to achieve (WC) among such strategies.
	As before, we denote the edges claimed by Maker by \(e_1,\dots,e_\ell\) and the edges claimed by Breaker by \(a_1,\dots,a_\ell\).
	Let \(2\leq i+1\leq\ell\) be the first round in which \(M\) picks an edge that does not lie on a cycle in \(\etgraph{G}-N[E_b^i]\) against \(B\).
	We distinguish between the two cases \(i+1=\ell\) and \(i+1<\ell\).
	
	Assume first that \(i+1=\ell\).
	As \(e_\ell\) does not lie on a cycle, all paths from \(e_\ell\) to \(e_{\ell-1}\) in \(\concomp{e_{\ell-1}}{E_b^{\ell-1}}\) have the same second vertex, which is a triangle \(t\).
	Because (WC) is achieved in round \(\ell\), at most two of the edges used by the triangle \(t\) have been claimed by Maker, so Breaker can choose some edge to destroy the triangle \(t\), thereby removing all paths between \(e_\ell\) and \(e_{\ell-1}\).
	This is a contradiction to the choice of \(B\).
	
	Now assume \(i+1<\ell\).
	Without loss of generality, we may assume that the edge \(a_{i+1}\) is an edge \(a^*\) used by the first triangle-vertex of a path from \(e_{i+1}\) to \(e_i\) in \(\concomp{e_i}{E_b^i}\).
	Let \(a\) be another edge available to Breaker that destroys all paths from \(e_{i+1}\) to \(e_i\) in \(\concomp{e_i}{E_b^i}\).
	Then \(\concomp{e_{i+1}}{E_b^i\cup\{a^*\}}\subseteq\concomp{e_{i+1}}{E_b^i\cup\{a\}}\), so the choice \(a_{i+1}=a^*\) is optimal given that \(M\) plays monotonically, which is the case by \cref{thm:monotonicity}.
	Thus, this choice of \(a'_{i+1}\) is consistent with the maximality of the chosen strategy \(B\).
	
	We will define a strategy \(M'\) which will achieve (WC) faster against \(B\) contradicting the definition of \(M(G)\).
	The strategy \(M'\) plays like \(M\) in the first \(i\) rounds.
	In round \(i+1\), \(M'\) chooses \(e_{i+2}\) instead of \(e_{i+1}\) and then continues playing according to the strategy \(M\) in \(\concomp{e_{i+2}}{E_b^i\cup\{a'_{i+1}\}}\) where \(a'_{i+1}\) denotes the edge \(B\) claims in response to \(e_{i+2}\).
	
	\begin{figure}[t]
		\centering
		\includegraphics[scale=0.7]{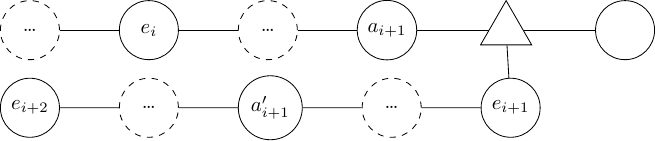}
		\caption{An exemplary illustration of \(\concomp{e_i}{E_b^i}\) in the second case of the proof of \cref{lem:only-cycle-wc}.
			This figure does not aim to cover all possibilities, e.g.~\(a_i\) could also be the unlabeled edge incident to the same triangle instead.}
		\label{fig:only-cycles-proof-1}
	\end{figure}
	
	We claim that \(M\) achieves (WC) in \(\concomp{e_{i+2}}{E_b^i\cup\{a'_{i+1}\}}\) in at most the same number of rounds it takes \(M\) to achieve (WC) in \(\concomp{e_{i+2}}{E_b^{i+2}}\) given that Maker has only claimed the edge \(e_{i+2}\) in these components.
	By \cref{thm:monotonicity}, \(e_{i+1},e_{i+2}\in\concomp{e_i}{E_b^i}\).
	Thus, \(B\) must destroy all paths from \(e_i\) to \(e_{i+2}\) in \(\concomp{e_i}{E_b^i}\) using \(N[a'_{i+1}]\) to prevent \(M'\) from achieving (WC) in round \(i+1\).
	As \(M\) achieves (WC) in round \(\ell\) and \({\concomp{e_{i+2}}{E_b^{i+2}}\subseteq\concomp{e_{i+1}}{E_b^{i+1}}\subseteq\concomp{e_i}{E_b^i}}\), \(B\) separates \(e_i\) from \(e_{i+1}\) with \(N[a_{i+1}]\) in \(\concomp{e_i}{E_b^i}\), but \(e_{i+2}\) remains reachable from \(e_{i+1}\) in \(\concomp{e_{i+1}}{E_b^{i+1}}\).
	Because \(a_{i+1}\) is an edge used by the first triangle-vertex of a path from \(e_{i+1}\) to \(e_i\) in \(\concomp{e_i}{E_b^i}\) and \(e_{i+1}\) does not lie on a cycle in \(\concomp{e_i}{E_b^i}\), all paths from \(e_{i+2}\) to \(e_i\) in \(\concomp{e_i}{E_b^i}\) must use \(e_{i+1}\).
	Therefore, \(a'_{i+1}\) must be chosen in such a way that \(N[a'_{i+1}]\) intersects a path from \(e_i\) to \(e_{i+1}\) in \(\concomp{e_i}{E_b^i}\) or contains an internal vertex of a path from \(e_{i+1}\) to \(e_{i+2}\) in \(\concomp{e_i}{E_b^i}\).
	In the first case, we have \({\concomp{e_{i+2}}{E_b^{i+2}}\subseteq\concomp{e_{i+2}}{E_b^i\cup\{a'_{i+1}\}}}\), which immediately implies the claim.
	
	Otherwise, \(N[a'_{i+1}]\) does not contain a vertex that lies on path from \(e_i\) to \(e_{i+1}\) in \(\concomp{e_i}{E_b^i}\), so \(a'_{i+1}\) must lie in \(\concomp{e_{i+1}}{E_b^{i+1}}\) and \(a'_{i+1}\) must be different from \(e_{i+1}\).
	This situation is visualized in \cref{fig:only-cycles-proof-1}.
	Then \(a'_{i+1}\) is also a possible choice for \(B\) in round \(i+2\) of the game against \(M\).
	Assume for contradiction that \(M\) achieves (WC) strictly faster in \(\concomp{e_{i+2}}{E_b^{i+2}}\) than in \(\concomp{e_{i+2}}{E_b^i\cup\{a'_{i+1}\}}\).
	Then the strategy \(B\) was not chosen maximally as claiming \(a'_{i+1}\) in round \(i+2\) instead of \(a_{i+2}\) forces \(M\) to continue playing in the connected component \(\concomp{e_{i+2}}{E_b^{i+1}\cup\{a'_{i+1}\}}=\concomp{e_{i+2}}{E_b^i\cup\{a'_{i+1}\}}\).
	By then following a strategy that delays (WC) in \(\concomp{e_{i+2}}{E_b^i\cup\{a'_{i+1}\}}\) as much as possible, Breaker prevents \(M\) from achieving (WC) in \(\ell\) rounds, which is a contradiction.
	
	Thus, \(M\) achieves (WC) in \(\concomp{e_{i+2}}{E_b^i\cup\{a'_{i+1}\}}\) in at most the number of rounds \(M\) requires to achieve (WC) in \(\concomp{e_{i+2}}{E_b^{i+2}}\).
	However, \(M'\) does not claim \(e_{i+1}\) in round \(i+1\) and instead directly claims \(e_{i+2}\), so it achieves (WC) at least one round faster against \(B\) than \(M\) contradicting the definition of \(M(G)\).
\end{proof}

\begin{lemma}\label{lem:only-cycle-first}
	Let \(G\) be a graph such that Maker wins on \(\gamehg{G}\).
	Then \(M(G)\) contains a strategy \(M\) whose first claimed edge lies on a cycle in \(\etgraph{G}\).
\end{lemma}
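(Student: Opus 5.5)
Take any \(M\in M(G)\) whose first edge \(e_1\) does not lie on a cycle of \(\etgraph{G}\), and modify only its first move so that the result is still in \(M(G)\) and starts on a cycle. Since \(e_1\) lies on no cycle, every edge \(e_1t\) of \(\etgraph{G}\) is a bridge, and \(e_1\) is a cut vertex (or leaf) of its component. Each triangle-vertex \(t\ni e_1\) hangs off a branch \(C_t\) of \(\etgraph{G}-e_1\). Against every Breaker strategy, \(M\) must achieve (WC) with a second Maker edge \(e_2\) in the same component as \(e_1\). By \cref{thm:monotonicity}, all later Maker edges then lie in a single branch side reached through some \(t\). The idea is that \(e_1\) acts like a pendant vertex and can be replaced by an edge one step deeper that lies on a cycle.

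\textbf{Step 1: a tree-like component is Breaker's win.} Suppose the component \(K\) of \(\etgraph{G}\) containing \(e_1\) is a tree. Breaker's reply strategy is: whenever Maker claims an edge-vertex \(e\), Breaker claims an unclaimed edge-vertex adjacent (via a triangle) to \(e\) that lies on the path towards Maker's previous edges, or any edge of a triangle incident to \(e\). The standard pairing argument on a tree whose triangle-vertices have degree \(3\) then shows that Maker never gets two edges in the same component of \(\etgraph{G}-N[E_b^i]\). By \cref{cor:etg-connected}, Maker cannot win on \(K\) alone. Hence some cycle exists in the component where \(M\) eventually wins, namely the component containing \(e_1\).

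\textbf{Step 2: the replacement.} Root \(K\) at \(e_1\). By \cref{lem:only-cycle-wc}, from round \(2\) on \(M\) plays edges lying on cycles of \(\etgraph{G}-N[E_b^i]\), hence on cycles of \(\etgraph{G}\). Such cycles avoid \(e_1\) and lie inside some branch. Let \(f\) be the edge-vertex on a cycle closest to \(e_1\) in the tree-like part. I will define \(M'\) to claim \(f\) first and then simulate \(M\), pretending that \(e_1\) is Maker's and that Breaker's reply came as \(M\) would expect. The key comparison is between two positions. In the first, Maker owns \(e_1\) and the path between \(e_1\) and \(f\) consists of cut vertices. In the second, Maker owns \(f\) directly. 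Any position in which Breaker cuts the path from \(e_1\) to \(f\) leaves \(M\) no better off than Breaker cutting near \(f\). Conversely, owning \(f\) gives Maker at least what owning \(e_1\) gives: every path from \(e_1\) into the cyclic part passes through \(f\), or through a triangle adjacent to \(f\). Formally I will show \(r(\concomp{f}{E_b^1},\{f\})\le r(\concomp{e_1}{E_b^1},\{e_1\})\) for every Breaker reply, using the quantity \(r\) from the proof of \cref{thm:monotonicity}. This gives \(M'\in M(G)\).

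\textbf{Main obstacle.} The hard part is the case where \(e_1\)'s branches lead to several different cyclic parts. There the choice of \(f\) is not canonical, and Breaker's first reply might differ after \(f\) compared with after \(e_1\). I will handle this by choosing \(f\) in the branch that \(M\) enters against Breaker's optimal first reply. I will then argue, exactly as in the second case of \cref{lem:only-cycle-wc}, that Breaker's best answer to \(f\) is an edge of a triangle incident to \(f\). That answer is no better for Breaker than his answer to \(e_1\), because any separating set for \(f\) separates \(e_1\) from that cyclic part as well. If this uniformity fails, I will fall back to choosing Breaker strategies maximally, as in \cref{lem:only-cycle-wc}, and derive a contradiction.
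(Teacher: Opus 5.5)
\textbf{There is a genuine gap.} The whole content of the lemma sits in the comparison you announce in Step~2 (``Formally I will show \(r(\concomp{f}{E_b^1},\{f\})\le r(\concomp{e_1}{E_b^1},\{e_1\})\)'') and in the ``main obstacle'' paragraph. Neither is actually argued.

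As written, the inequality is not the right statement. With the same Breaker reply on both sides it fails in general. Suppose \(e_1\) lies in two triangles \(t,t'\) leading to different branches, and Breaker's reply \(a\) is the third edge of the triangle \(t\) on the way to \(f\). Then \(\concomp{e_1}{\{a\}}\) and \(\concomp{f}{\{a\}}\) are different components, and the two \(r\)-values are unrelated.

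What you really need is that the worst case over Breaker's replies to \(f\) is no worse than the worst case over his replies to \(e_1\). That requires knowing which branch and which moves \(M\) actually uses. Your simulation ``pretending \(e_1\) is Maker's'' does not supply this. In the real game \(e_1\) is unclaimed, so Breaker may take \(e_1\) itself or cut the triangle between \(e_1\) and \(f\). Moreover, a (WC) in the simulated game that uses \(e_1\) as one of the two Maker edges gives nothing in the real game.

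Two of your supporting claims also do not hold up:
\begin{itemize}
\item The heuristic that every path from \(e_1\) into the cyclic part passes through \(f\) or a triangle adjacent to \(f\) holds only for the branch containing \(f\). That is precisely not the multi-branch case you flag as the obstacle.
\item The claim that Breaker's best answer to \(f\) is an edge of a triangle incident to \(f\) cannot be imported from the second case of \cref{lem:only-cycle-wc}. That argument concerns an edge off every cycle and a path to a previous Maker edge. Here \(f\) is on a cycle and there is no previous Maker edge. To kill all cycles through \(f\), Breaker only needs an edge meeting a triangle of each such cycle, and that edge can be far from \(f\).
\end{itemize}

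\textbf{The missing idea.} The paper defines the replacement through \(M\)'s own behaviour rather than geometrically. Let \(a'_1\) be an edge of the first triangle on a path from \(e_1\) to \(e_2\), not on that path. Let \(e^*_1\) be \(M\)'s reply to \(a'_1\); it lies on a cycle by \cref{lem:only-cycle-wc}. The new strategy \(M'\) opens with \(e^*_1\) and splits on Breaker's reply \(a^*_1\):
\begin{itemize}
\item If \(a^*_1\notin\concomp{e^*_1}{\{a'_1\}}\), a cycle through \(e^*_1\) survives and (WC) follows next round.
\item If \(N[a^*_1]\) separates \(e_1\) from \(e^*_1\), then \(\concomp{e^*_1}{\{a^*_1\}}=\concomp{e^*_1}{\{a'_1,a^*_1\}}\). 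So \(M'\) simply continues as \(M\) does in the configuration \((\{e_1,e^*_1\},\{a'_1,a^*_1\})\).
\item Otherwise \(e_2\) is still reachable; \(M'\) plays \(e_2\) and inherits \(M\)'s continuation.
\end{itemize}
Because \(M'\) reuses \(M\)'s own moves, no domination lemma of the form ``owning \(f\) is at least as good as owning \(e_1\)'' is needed. The round count follows by letting Breaker play \(a'_1\) and then \(a^*_1\) against \(M\), which costs \(M\) an extra round.

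Your Step~1 (tree components are Breaker's win) is not needed, since \cref{lem:only-cycle-wc} already places \(e_2\) on a cycle.
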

\begin{proof}
	Let \(M\in M(G)\) and consider the game of \(M\) against some Breaker strategy \(B\).
	As before, denote the edges claimed by Maker by \(e_1,\dots,e_\ell\) and the first two edges claimed by Breaker by \(a_1,a_2\).
	Assume that \(e_1\) does not lie on a cycle in \(\etgraph{G}\).
	Let \(a'_1\) be an edge used by the first triangle-vertex on a path \(P\) from \(e_1\) to \(e_2\) in \(\etgraph{G}\) that does not lie in \(V(=)\).
	Let \(e^*_1\) be the edge \(M\) picks if Breaker chooses \(a'_1\) in response to \(e_1\).
	By \cref{lem:only-cycle-wc}, \(e^*_1\) lies on a cycle.
	This situation is visualized in \cref{fig:only-cycles-proof-2}	

	We define the strategy \(M'\) which satisfies the desired property.
	The strategy \(M'\) first picks \(e^*_1\).
	Let \(a^*_1\) be Breaker's response.
	If \(a^*_1\notin\concomp{e^*_1}{\{a'_1\}}\), \(e^*_1\) lies on a cycle in \(\concomp{e^*_1}{\{a'_1\}}\).
	Then \(M'\) can achieve (WC) in the next round by claiming any available edge which lies on a shortest cycle containing \(e^*_1\) in \(\concomp{e^*_1}{\{a'_1\}}\).
	Else, \(a^*_1\) lies in \(\concomp{e^*_1}{\{a'_1\}}\).
	If \(e_1\notin\concomp{e^*_1}{\{a^*_1\}}\), \(M'\) plays according to \(M\) in \(\concomp{e^*_1}{\{a^*_1\}}\).
	Otherwise, \(M'\) picks \(e_2\) and plays according to \(M\) in \(\concomp{e_2}{\{a^*_1,a^*_2\}}\), where \(a^*_2\) denotes Breaker's response to \(e_2\).
	
	\begin{figure}[t]
		\centering
		\includegraphics[scale=0.7]{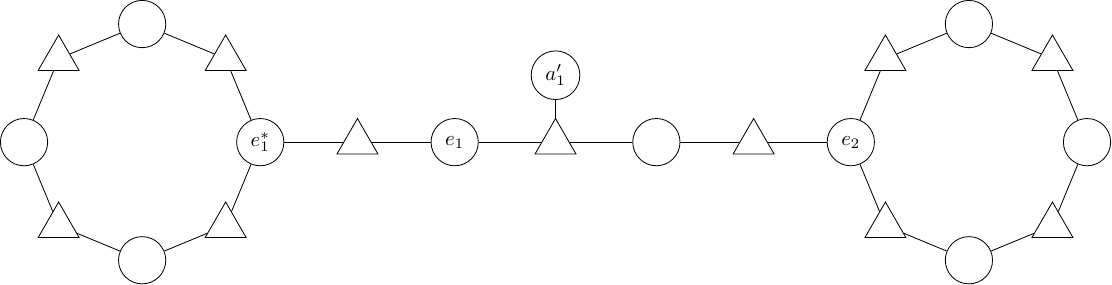}
		\caption{A concrete example of the situation in the proof of \cref{lem:only-cycle-first}.
			Except for \(a'_1\), edge-vertices with degree~1 are not drawn.}
		\label{fig:only-cycles-proof-2}
	\end{figure}
	
	Assume for contradiction that there exists a Breaker strategy \(B'\) that wins against \(M'\).
	It follows that \({a^*_1\in\concomp{e^*_1}{\{a'_1\}}}\) as otherwise \(M'\) achieves (WC) in the next round.
	In the case \(e_1\notin\concomp{e^*_1}{\{a^*_1\}}\), i.e.~\(N[a^*_1]\) separates \(e_1\) and \(e^*_1\) in \(\etgraph{G}\), \(M'\) wins in \(\concomp{e^*_1}{\{a^*_1\}}\).
	This can be seen as follows:
	In the configuration \((\{e_1\},\{a'_1\})\), \(M\) plays \(e^*_1\).
	If Breaker responds with \(a^*_1\), \(M\) continues to play in \(\concomp{e^*_1}{\{a'_1,a^*_1\}}\) by \cref{thm:monotonicity}.
	Because \(N[a^*_1]\) separates \(e_1\) and \(e^*_1\), we obtain \({\concomp{e^*_1}{\{a'_1,a^*_1\}}=\concomp{e^*_1}{\{a^*_1\}}}\), so \(M\) wins in \(\concomp{e^*_1}{\{a^*_1\}}\).
	Therefore, \(M'\) also wins in \(\concomp{e^*_1}{\{a^*_1\}}\).
	
	This leaves only the case where \(a^*_1\) lies in \(\concomp{e^*_1}{\{a'_1\}}\), but does not separate \(e^*_1\) and \(e_1\).
	In particular, we obtain \(e_2\in\concomp{e^*_1}{\{a^*_1\}}\).
	In this situation, \(M'\) picks \(e_2\) and \(B'\) responds with an edge \(a^*_2\).
	Then \(N[a^*_2]\) separates \(e^*_1\) and \(e_2\) as otherwise (WC) is already achieved.
	If \(N[a^*_2]\) does not contain an internal vertex of a path from \(e_1\) to \(e_2\) in \(\concomp{e_1}{\{a_1\}}\), \(M'\) wins in \(\concomp{e_2}{\{a^*_1,a^*_2\}}\supseteq\concomp{e_2}{\{a_1,a_2\}}\) as \(M\) wins in \(\concomp{e_2}{\{a_1,a_2\}}\).
	Thus, \(a^*_2\) must lie in \(\concomp{e_1}{\{a_1\}}\).
	As \(M\) plays monotonically and picks \(e_2\) in the configuration \((\{e_1\},\{a_1\})\), \(M\) can win in \(\concomp{e_2}{\{a_1,a^*_2\}}\subseteq\concomp{e_2}{\{a^*_1,a^*_2\}}\).
	Thus, \(M'\) wins against \(B'\), which is a contradiction.
	
	It remains to show that \(M'\) achieves (WC) in at most the same number of turns as \(M\).
	Let \(r\) be the maximum number of rounds \(M'\) requires to achieve (WC) and let \(B\) be a Breaker strategy against which \(M'\) requires \(r\) rounds to achieve (WC).
	Assume for contradiction that \(M\) achieves (WC) in at most \({r-1}\) rounds against any Breaker.
	In the game of \(M'\) against \(B\), it cannot be the case that \({a^*_1\notin\concomp{e^*_1}{\{a'_1\}}}\) because (WC) is achieved in two rounds in this situation.
	It is impossible to achieve (WC) without claiming two edges, so \(M\) cannot achieve (WC) faster than \(M'\) in this case.
	In the case \(e_2\in\concomp{e^*_1}{\{a^*_1\}}\), \(M\) and \(M'\) play identically except for the first round, so this is also impossible.
	Thus, \(N[a^*_1]\) must be a separator of \(e_1\) and \(e^*_1\).
	Consider the following Breaker strategy \(B'\) playing against \(M\).
	After \(M\) picks \(e_1\), \(B'\) picks \(a'_1\), to which \(M\) replies with \(e^*_1\).
	In response, \(B'\) plays \(a^*_1\) and continues playing according to \(B\) in \(\concomp{e^*_1}{\{a^*_1\}}\).
	Then \(M\) takes \(r+1\) rounds to win against \(B'\) which is contradicts the assumption that \(M\) achieves (WC) in at most \(r-1\) rounds against any Breaker.
	It follows that \(M'\in M(G)\).
\end{proof}

The analysis done in this section gives us a high-level view of winning strategies in \(M(G)\).
By \cref{lem:only-cycle-wc,lem:only-cycle-first}, they can be assumed to only claim edge-vertices \(e_i\) that lie on a cycle in \(\etgraph{G}-N[E_b^i]\) until (WC) is achieved.
These cycles can be seen as local threats.
Breaker must destroy all cycles containing \(e_i\) as otherwise Maker can achieve (WC) by claiming another edge-vertex that lies on a shortest cycle containing \(e_i\) in the next round.
Thus, Breaker must respond with an edge such that all cycles containing \(e_i\) also contain some triangle using this edge.
In this sense, the threat is local.
For example, if the last edge claimed by Maker lies on two cycles \(C_1\) and \(C_2\), then Breaker must respond with an edge from \(N(\triasCap{C_1})\cap N(\triasCap{C_2})\).
By \cref{thm:monotonicity}, the strategies are monotone, so after the first round, there is also a global threat caused by the edge claimed by Maker in the previous round.
To prevent Maker from achieving (WC), Breaker must destroy all paths between \(e_{i-1}\) and \(e_i\).
Assuming Breaker was successful in destroying all cycles containing \(e_{i-1}\) in the previous round, this is trivial on its own, but this must be achieved while also dealing with the local threats.

From an algorithmic perspective, it is enough to only consider strategies for Maker that satisfy \cref{thm:monotonicity,lem:only-cycle-wc,lem:only-cycle-first} to establish or refute that a given graph is Maker's win.
These results can also be useful for proving that some graph is not Maker's win as it is enough to show that Breaker can win against all monotone strategies to establish that Maker does not win.
This is used in the proof of \cref{thm:cactus-game-chara}.

\section{Algorithmic Results}\label{sec:complexity}
In this section, we analyze the complexity of the unbiased triangle game, determining a lower bound as well as upper bounds for three classes of graphs.
While the unbiased Maker-Breaker \(H\)-game, where Maker tries to claim an edge-induced copy of \(H\) in the input graph, is \pspace-complete, even when Maker is trying to claim a fixed tree~\cite{DBLP:journals/dam/DucheneGINOPS25}, the outcome of the game can be determined in polynomial time if Maker is trying to claim a graph consisting of at most~3 edges~\cite{galliot2025makerbreakersolvedpolynomialtime}.
Specifically, Galliot, Gravier and Sivignon show that the winner of the unbiased Maker-Breaker game played on a hypergraph \(H\) of rank at most~3 can be determined in time \(\Oh{|V(H)|^5|E(H)|^2+|V(H)|^6\Delta(H)}\) where \(\Delta(H)\) denotes the maximum degree of a vertex in \(H\).
This is done by simulating three rounds of the game and checking if Maker was able to claim at least one of two structures indicating that he will win under optimal play.
The three rounds of play correspond to the summand \(|V(H)|^6\Delta(H)\) as up to six vertices will be claimed in three rounds.
The other summand, \(|V(H)|^5|E(H)|^2\), corresponds to a precomputation that simplifies checking the existence of the structures indicating Maker's win.

When modeling the unbiased triangle game played on a graph \(G\) with a hypergraph \(H\), the vertices of \(H\) are the edges of \(G\) and the hyperedges of \(H\) are those sets of exactly three edges that form a triangle in \(G\).
As any edge participates in at most \(n\) triangles, \(\Delta(H)\leq n\).
Thus, the bound \({\Oh{|V(H)|^5|E(H)|^2+|V(H)|^6\Delta(H)}}\) becomes \(\Oh{m^5t^2+nm^6}\).
This does not yet include the time required to compute the game hypergraph from the board graph which can be bounded by \(\Oh{n+m^{1.5}}\), see \cref{lem:computing-etg}.
Therefore, the worst-case running time of \cite{galliot2025makerbreakersolvedpolynomialtime} is \(\Oh{n^{16}}\) for sufficiently dense graphs, i.e.~when \(m=\Omega(n^2)\) and \({t=\Omega(n^3)}\).
Our goal is to decide the unbiased triangle game significantly faster.

We improve the bound on the decision complexity to \(\Oh{\generalTime}\).
However, this still leaves room for further improvements.
The rationale of our approach towards an optimal time complexity, which we conjecture to be \(\Oh{n^3}\), is the following:
We first largely improve the decision complexity when the board graph contains \(K_4\) as a subgraph and the edge-triangle incidence graph is connected.
This may open the path to tackle the situation where the board graph is \(K_4\)-free using the field of extremal graph theory. 
For example, it is not difficult to see that the Turán graph with three color classes and seven vertices \(T_3(7)\) is Maker's win.
For the class of graphs whose edge-triangle incidence graph is a cactus graph, which is a subclass of the class of \(K_4\)-free graphs, we prove a decision complexity which cannot be improved by an algorithm that explicitly computes the edge-triangle incidence graph.
For other \(K_4\)-free graphs, the hope is to succeed as they have a large number of triangles, see e.g.~\cite{DBLP:journals/jct/ChudnovskyRST10,he2025numbertrianglesk4freegraphs}.

\textbf{Deciding the game on general graphs.}
The unbiased triangle game is a special case of the Maker-Breaker game on hypergraphs of rank at most~3.
This suggests that the running time can be improved by exploiting the particular structure of the unbiased triangle game.
Indeed, we show that this is case.
A major difference between general rank~3 hypergraphs and those induced by the unbiased triangle game is that different triangles can intersect in at most one edge, so the hypergraph is linear.
The Maker-Breaker game on linear rank~3 hypergraphs has already been investigated by Kutz~\cite{kutz2004thesis}, who gave a characterization of Breaker's win through the structure of the hypergraph if Maker has already claimed a vertex.
Kutz only roughly analyzes the complexity of the algorithm for deciding the outcome of the game implied by his characterization, arguing that it runs in polynomial time without giving a concrete running time bound.
Our more detailed analysis shows that Kutz' characterization can be used to determine the outcome of the Maker-Breaker game played on a linear hypergraph \(H\) of rank at most~3 in time \(\Oh{|V(H)|^3+|V(H)|^2|E(H)|}\), see \cref{thm:general-alg}.
In fact, when \(H\) is connected and not~3-uniform, a running time of \(\Oh{|V(H)|+|E(H)|}\) can be achieved, see \cref{cor:kutz-non-3-uniform}.
For the unbiased triangle game, this corresponds to a running time of \(\Oh{\generalTime}\), which can be of order \(\Oh{n^7}\) in the worst case.
This is a major improvement on the \(\Oh{n^{16}}\) running time of~\cite{galliot2025makerbreakersolvedpolynomialtime}.

\textbf{Deciding the game on graphs containing \(\mathbf{K_4}\).}
Another difference is that the unbiased triangle game induces linear hypergraphs that cannot contain arbitrary cycles of length~3.
Any cycle of length~3 is created by a copy of \(K_4\) in the board graph \(G\), so a cycle of length~3 can only occur as part of the hypergraph-equivalent of \(\etgraph{K_4}\).
Due to this special role of \(K_4\), we are able to obtain a bound of \(\Oh{\cliqueTime}\) for board graphs \(G\) such that \(K_4\subseteq G\) and \(\etgraph{G}\) is connected.
The presence of a \(K_4\) allows us to either find that the board graph is Maker's win or reduce to at most seven Maker-Breaker games in related hypergraphs which are not~3-uniform, see \cref{thm:k4-elimination} for details.
This allows us to use \cref{cor:kutz-non-3-uniform} to obtain the improved running time.
The summand \(\cliqueTimeAddition\) corresponds to the time spent identifying a subgraph isomorphic to \(K_4\).
If it is not just guaranteed that \(G\) contains \(K_4\) as a subgraph, but a vertex set \(U\) such that \(G[U]\cong K_4\) is also given as a witness, the running time improves to \(\Oh{\cliqueTimeBase}\).

\textbf{Deciding the game when \boldmath\(\etgraph{G}\) is a cactus graph.}
The best upper bound is achieved for graphs \(G\) such that \(\etgraph{G}\) is a cactus graph.
In this situation, Maker's win is characterized by a condition that can be checked in the edge-triangle incidence graph without requiring the simulation of any rounds of play.
The algorithm implied by our winning condition runs in time \(\Oh{\cactusTime}\).
This can be seen as the first step towards deciding the unbiased triangle game on \(K_4\)-free graphs.
Observe that \(\etgraph{K_4}\) is not a cactus graph, see~\cref{fig:etgraph-example}.
Thus, the board graph must be \(K_4\)-free if \(\etgraph{G}\) is a cactus graph.

Our results can be interpreted in the following way:
If the input graph is sufficiently dense, one would expect that it contains \(K_4\) as a subgraph and that \(\etgraph{G}\) is connected.
On the other hand, if the input is sufficiently sparse, one would expect \(\etgraph{G}\) to be a cactus graph.
For these extremes of the spectrum, faster algorithms are available.
The remaining part of the spectrum, i.e.~graphs which are neither particularly dense nor sparse, are covered by our general algorithm with worse running time.

\textbf{Lower bound.}
The lower bound follows from a linear-time reduction of triangle detection in a graph \(G\) to the outcome of the unbiased triangle game on a graph \(H\), see \cref{thm:trig-detection-lb}.

\subsection{Maker-Breaker Games on Linear Hypergraph of Rank Three}
In this section, we show that the outcome of the unbiased Maker-Breaker game played on a linear hypergraph \(H\) of rank at most~3 can be decided in time \(\Oh{|V(H)|^3+|V(H)|^2|E(H)|}\).
This highly improves on the general algorithm of~\cite{galliot2025makerbreakersolvedpolynomialtime} with running time \({\Oh{|V(H)|^5|E(H)|^2+|V(H)|^6\Delta(H)}}\).
We employ some results of Kutz~\cite{kutz2004thesis} in the construction our algorithm.

\begin{definition}[\cite{kutz2004thesis}]
	A~2-edge in a hypergraph is a hyperedge of size exactly~2.
	A hypergraph is called almost~3-uniform if all its hyperedges have size exactly~3 except for one hyperedge which has size exactly~2.
	In particular, an almost~3-uniform hypergraph must contain a~2-edge.
\end{definition}

\begin{definition}[\cite{kutz2004thesis}]\label{defn:dock}
	Let \(H\) be an almost~3-uniform hypergraph with 2-edge \(\{\alpha,\beta\}\in E(H)\).
	A set of two vertices \({U=\{x,y\}\subseteq V(H)}\) is called a dock if \(U\cup\{\alpha\}\in E(H)\) or \(U\cup\{\beta\}\in E(H)\).
	If \(U\cup\{\alpha\}\in E(H)\), \(U\) is a lower dock, otherwise it is an upper dock.
	The collection of lower (resp.~upper) docks is the lower (resp.~upper) shore.
	The vertices \(x\) and \(y\) are called dock vertices.
	A dock \(\{x,y\}\) is closed if there is a path between \(x\) and \(y\) in \(H\) that uses neither \(\alpha\) nor \(\beta\).
	Otherwise, the dock is called open.
\end{definition}

The distinction between lower and upper docks is arbitrary as the role of \(\alpha\) and \(\beta\) is symmetric.
If Breaker wins on a hypergraph, vertices from different shores can only interact in a very restricted manner, see \cref{thm:kutz-breaker-chara}.
It is not important which one of the vertices \(\alpha\) and \(\beta\) is associated to the upper and the lower shore.
Instead, we are interested in the interaction between the shores.
The names lower and upper shore were used in \cite{kutz2004thesis} due to the way the author depicted the structure of the hypergraph, with \(\alpha\) always being at the bottom and \(\beta\) being at the top, see e.g. \cite[Figure~7]{kutz2004thesis}.

\begin{definition}[{\cite[Definition~33]{kutz2004thesis}}]\label{defn:kutz-ladder}
	A graph \(L\) is called a ladder of height \(h\geq0\) on \(a_0\) and \(c_0\), where \({a_0,c_0\in V(L)}\), if it can be constructed by the following procedure:
	\begin{itemize}
		\item begin with the graph \(L_0=(\{a_0,c_0\},\emptyset)\) consisting of two vertices \(a_0,c_0\)
		\item \textbf{for} \(i=1,\dots,h\) \textbf{do} (if \(h=0\), simply skip the loop)
			\begin{itemize}
				\item take a new path \(P_i=v_1^{(i)}\dots v_\ell^{(i)}\) with \(\ell\geq5\), where \(v_1^i=c_{i-1}\), \(v_\ell^i=a_{i-1}\) and no further vertices are common with \(L_{i-1}\)
				\item let \(a_i\coloneqq v_{\ell-2}^{(i)}\) and add a new vertex \(c_i\) as well as the edge \(c_iv_{\ell-1}^{(i)}\)
				\item set \(L_i\coloneqq L_{i-1}\cup P_i\cup Q_i\)
			\end{itemize}
		\item \textbf{either} end the construction of \(L\) by setting \(L\coloneqq L_h-c_h\)\\
		\textbf{or} take an optional additional path \(R\) from \(c_h\) to some vertex \(r\) of the path \(P_h\) except \(a_h\) (but \(r=c_{h-1}\) allowed) that contains no further vertices of \(L_h\) and let \(L\coloneqq L\cup R\).\qedhere
	\end{itemize}
\end{definition}

Ladders were defined slightly differently in the context of hypergraphs in~\cite{kutz2004thesis}.
We have rephrased them in terms of graphs because we want to remain within the framework of the vertex-hyperedge incidence graph.
This allows us to use classical graph algorithms to check structural properties such as 2-connectedness.
The following theorem is obtained by interpreting Kutz' characterization of Breaker's win~\cite[Theorem~38]{kutz2004thesis} through the lens of the vertex-hyperedge incidence graph.

\begin{theoremq}\label{thm:kutz-breaker-chara}
	Let \(H\) be a linear almost~3-uniform hypergraph such that \(V(\vegraph{H})=N[B]\) where \(B\) is a maximal 2-connected component of size at least~3 in \(\vegraph{H}\).
	Let \(e_{\alpha\beta}=\{\alpha,\beta\}\in E(H)\) be the unique~2-edge and assume that \(\alpha\) and \(\beta\) have degree at least~2.
	Denote the docks of \(H\) by \(\{x_1,y_1\},\dots,\{x_k,y_k\}\).
	Let \(G'\) be obtained from \(\vegraph{H}\) by removing all vertices of degree~1 and set \(G=G'-N[e_{\alpha\beta}]\).
 	Then Breaker wins on \(H\) if and only if \(G\) has the following structure:
	\begin{enumerate}[label=(\arabic*)]
		\item\label{item:kutz-chara-1} No two docks from the same shore are in the same connected component.
		Each closed dock \(D_1=\{x_i,y_i\}\) is connected to exactly one other dock \(D_2=\{x_j,y_j\}\) from the opposite shore, i.e.~there exists a path from \(s\) to \(t\) for some \(s\in D_1\) and \(t\in D_2\).
		For an open dock \(D_3\), each dock vertex \(v\in D_3\cap V(G)\) is connected to exactly one other dock from the opposite shore.
		
		\item\label{item:kutz-chara-2} If \(v_1\) and \(v_2\) are dock vertices from different open docks and they lie in the same connected component \(K\) of \(G\), then \(K\) is a path between \(v_1\) and \(v_2\).
		
		\item\label{item:kutz-chara-3} If \(D_1\) and \(D_2\) are both closed docks, then the connected component of \(G\) containing the docks consist of three paths, an \((x_i,y_i)\)-path \(P_1\), an \((x_j,y_j)\)-path \(P_2\) and a \((e_1,e_2)\)-path \(P_3\) where \(e_1\in\edgeCap{P_1}\) and \(e_2\in\edgeCap{P_2}\) such that \(|V(P_1)\cap V(P_3)|=|V(P_2)\cap V(P_3)|=1\) and \(V(P_1)\cap V(P_2)=\emptyset\).
		The path \(P_3\) consists of at least~3 vertices.
		
		\item\label{item:kutz-chara-4} If one dock is closed and the other is open, say \(D_1\) is closed and \(D_2\) is open, let \(K\) be the connected component containing \(D_1\) in \(G\).
		Let \(F=G'[K\cup\{\alpha,\beta\}]\) be \(K\) augmented by the vertices \(\alpha\) and \(\beta\).
		If \(D_1\) is an upper dock, then \(F\) is a ladder on \(a_0=\alpha\) and \(c_0=\beta\).
		If \(D_1\) is a lower dock, then \(F\) is a ladder on \(a_0=\beta\) and \(c_0=\alpha\).
		The ladder has height at least~1 and height at least~2 if the optional path is not present.\qedhere
	\end{enumerate}	
\end{theoremq}

Our first algorithmic result is \cref{thm:kutz-analysis}, which gives an algorithm for checking the conditions of \cref{thm:kutz-breaker-chara} in linear time with respect to the size of the hypergraph.

\begin{theorem}\label{thm:kutz-analysis}
	In the setting of \cref{thm:kutz-breaker-chara}, conditions~\labelcref{item:kutz-chara-1} to~\labelcref{item:kutz-chara-4} can be checked in time \(\Oh{|V(H)|+|E(H)|}\).
	Additionally, \(\vegraph{H}\) can be computed from \(H\) in the same running time.
\end{theorem}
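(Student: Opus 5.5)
We first build \(\vegraph{H}\) from \(H\) by creating one vertex per element of \(V(H)\) and one per hyperedge, and adding an incidence edge \(\{v,e\}\) for each \(v\in e\). Since every hyperedge has size at most~3, \(\vegraph{H}\) has \(\Oh{|V(H)|+|E(H)|}\) vertices and edges, so this takes linear time with adjacency lists. Next we form \(G'\) by deleting all degree-1 vertices, which is a single pass over degrees. We then locate the unique 2-edge \(e_{\alpha\beta}\) (the only hyperedge-vertex of degree~2) and delete \(N[e_{\alpha\beta}]=\{e_{\alpha\beta},\alpha,\beta\}\) to obtain \(G\). All of this is linear.

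We then determine the docks. For every hyperedge-vertex \(e\neq e_{\alpha\beta}\) adjacent to \(\alpha\) (resp.~\(\beta\)), its two other neighbours form a lower (resp.~upper) dock. Scanning \(N(\alpha)\) and \(N(\beta)\) lists all docks in \(\Oh{\deg\alpha+\deg\beta}\) time. Note that in \(G\) these dock hyperedge-vertices survive, but the dock vertices are no longer connected through them, since \(\alpha\) and \(\beta\) are removed. Hence we also delete these dock-hyperedge vertices when testing connectivity, or equivalently treat a dock as \emph{closed} iff \(x\) and \(y\) lie in the same connected component of \(G\) minus the dock-hyperedge vertices. This matches ``a path in \(H\) avoiding \(\alpha,\beta\)''. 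One BFS/DFS labels all components in linear time, and each dock is then classified as open or closed in \(\Oh{1}\).

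Condition~\labelcref{item:kutz-chara-1} is checked by bucketing dock vertices by component label. For each component, we record the multiset of (dock, shore) pairs it contains. We verify that no shore appears twice, and that each closed dock and each dock vertex of an open dock sees exactly one dock of the opposite shore. This costs linear time in total because each dock vertex is touched \(\Oh{1}\) times.

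For conditions~\labelcref{item:kutz-chara-2} to~\labelcref{item:kutz-chara-4}, we inspect each relevant component \(K\) separately, in time \(\Oh{|V(K)|+|E(K)|}\). The components are disjoint, so the total stays linear.

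\begin{itemize}
\item For \labelcref{item:kutz-chara-2}, we check that \(K\) is a path with endpoints \(v_1,v_2\). This holds iff \(K\) is connected, has \(|E(K)|=|V(K)|-1\), all degrees are at most~2, and \(v_1,v_2\) have degree at most~1.
\item For \labelcref{item:kutz-chara-3}, the prescribed shape is a ``theta-free'' graph: two disjoint paths joined by a bridge path. We verify that \(K\) has exactly \(|V(K)|+1\) edges (two independent cycles once the dock-hyperedges are included), that exactly two vertices have degree~3 and all others degree at most~2, and that \(x_i,y_i,x_j,y_j\) are the degree-1 vertices. Removing the degree-3 vertices then yields the expected pieces, and we check that the middle piece has at least three vertices. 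Equivalently, we peel off the path from \(x_i\) until the first degree-3 vertex and compare.
\item For \labelcref{item:kutz-chara-4}, we recognise a ladder by simulating the construction in reverse. Starting from \(a_0,c_0\in\{\alpha,\beta\}\), we follow the path \(P_1\) from \(c_0\) to \(a_0\) along degree-2 vertices. At the pendant edge we identify \(v_{\ell-1}\), \(a_1=v_{\ell-2}\) and \(c_1\), check \(\ell\ge5\), and iterate. At the end we handle either the removal of \(c_h\) or the optional path \(R\) back to \(P_h\), and finally check that every vertex of \(F\) was consumed exactly once. Every step walks along a new vertex or edge, so the walk is linear.
\end{itemize}

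The main obstacle is the ladder recognition in \labelcref{item:kutz-chara-4}. The construction is defined generatively, and the positions of \(a_i\), \(c_i\) and \(v_{\ell-1}^{(i)}\) must be inferred locally from degrees; vertices on \(P_i\) other than \(v_{\ell-1}^{(i)}\) and the endpoints must have degree exactly~2 within the current stage. We would handle this by a deterministic traversal with a visited marker. At each stage the unique degree-3 vertex met while walking from \(c_{i-1}\) toward \(a_{i-1}\) is \(v_{\ell-1}^{(i)}\), and its third neighbour is \(c_i\). Any ambiguity, an unvisited leftover vertex, or a violated length bound rejects the input. Correctness then amounts to showing that this greedy decoding accepts exactly the graphs produced by \cref{defn:kutz-ladder}, which we argue by induction on \(h\).
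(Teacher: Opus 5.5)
The gap is in condition~(4), the ladder recognition; your test for condition~(3) is also inconsistent as written. Your skeleton otherwise agrees with the paper: build \(\vegraph{H}\), strip leaves and \(N[e_{\alpha\beta}]\), and do a constant number of linear scans per component. Labelling components once and bucketing the dock vertices is a tidy replacement for the paper's marked BFS in condition~(1). Deleting the dock hyperedge-vertices before testing closedness is also correct, but for the opposite reason to the one you state: \(\{x,y,\alpha\}\) still joins \(x\) and \(y\) in \(G'-N[e_{\alpha\beta}]\), which is exactly why it has to go.

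Condition~(4) is the only part of the statement that needs a real argument. Your decoder rests on the claim that, walking along \(P_i\) from \(c_{i-1}\) towards \(a_{i-1}\), one meets a unique degree-3 vertex, namely \(v^{(i)}_{\ell-1}\), whose third neighbour \(c_i\) hangs off a pendant edge. In the finished ladder this is false. For \(i<h\) the vertex \(a_i=v^{(i)}_{\ell-2}\) is also the endpoint of \(P_{i+1}\), so it has degree at least~3 and is met \emph{before} \(v^{(i)}_{\ell-1}\). Moreover \(c_i\) is not adjacent to \(a_i\), and \(c_i\) is not a leaf, since it starts \(P_{i+1}\). So your procedure already breaks on \(P_1\) of every ladder of height \(h\geq2\), which are most of the instances condition~(4) must accept. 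The remaining ones are height-1 ladders with the optional path, whose endpoint \(r\) can likewise put a branch vertex ahead of \(v^{(1)}_{\ell-1}\). The deferred induction on \(h\) therefore cannot be carried out for this rule.

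A correct decoding has to expect the adjacent pair \(a_i,v^{(i)}_{\ell-1}\) of branch vertices on each non-final \(P_i\), and tell them apart using the already known \(a_{i-1}\). The vertex \(v^{(i)}_{\ell-1}\) is the continuation adjacent to \(a_{i-1}\), and \(c_i\) is its remaining neighbour. In the paper's AlgorithmLadder, \(c_i\) is read off as the sibling and \(a_i\) as the grandparent of \(a_{i-1}\) in a BFS tree from \(c_{i-1}\). Consumed edges must then be deleted, for two reasons. First, the next search from \(c_i\) must not take the shortcut \(c_i\,v^{(i)}_{\ell-1}\,a_i\) instead of \(P_{i+1}\). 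Second, each vertex must be scanned only \(\Oh{1}\) times. The last stage is handled separately, according to whether \(c_h\) was deleted or the optional path closes a cycle.

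For condition~(3), you require two vertices of degree~3, four leaves \(x_i,y_i,x_j,y_j\), and all other degrees equal to~2. By the handshake lemma these degrees force exactly \(|V(K)|-1\) edges, so also demanding \(|V(K)|+1\) edges rejects every instance. With the dock hyperedge-vertices removed, the component must be a tree. You must additionally check that \(x_i,y_i\) attach to one branch vertex and \(x_j,y_j\) to the other, and that both branch vertices are hyperedge-vertices.
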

\begin{proof}
	Clearly, \(|V(\vegraph{H})|=|V(H)|+|E(H)|\).
 	The number of edges in \(\vegraph{H}\) is bounded by thrice the number of edges in \(H\), so \(|E(\vegraph{H})|\leq 3|V(\vegraph{H})|\).
 	We will therefore analyze the complexity of the algorithm with respect to \(|\vegraph{H}|\).
 	The graph \(\vegraph{H}\) can be computed in time \(\Oh{|V(\vegraph{H})|}\) by starting from the graph \((V(H)\cup E(H),\emptyset)\) and then iterating over \(E(H)\) to add the edges according to the incidence structure.
	Then the graph \(G'\) is constructed by iterating over \(E(\vegraph{H})\) to compute the degree of all vertices and then iterating over \(V(\vegraph{H})\) to remove those vertices of degree~1.
	Finally, compute \(G=G'-N[e_{\alpha\beta}]\).
	
	Condition \labelcref{item:kutz-chara-1}: Iterate over \(E(H)\) to determine the docks, classifying them into lower and upper docks.
	Each dock of the upper shore can be classified as open or closed by using breadth-first search as follows:
	Mark all vertices of \(G\) as unvisited.
	Iterate over all docks \(\{x,y\}\) of the upper shore.
	If \(\{x,y\}\) is an open dock, only one of the vertices is guaranteed to be a vertex of \(G\).
	For vertices in \(V(\vegraph{H})\setminus V(G)\), the operations described in the following are skipped.
	If one of the dock vertices is already marked as visited, a previous dock vertex of the same shore is able to reach it, so condition~\labelcref{item:kutz-chara-1} is not satisfied.
	Otherwise, start a breadth-first search from \(x\).
	During the search, all reached vertices will be marked as visited. 
	If \(y\) is marked as visited after the search, the dock is closed.
	If \(y\) is not marked, execute a breadth-first search from \(y\) and mark the visited vertices.
	Repeat this process for the lower shore.
	Now every connected component of \(G\) intersects each shore in at most one dock.
	Execute another breadth-first search from each dock vertex of each shore to check that the dock vertex can reach a dock vertex from the appositive shore.
	If this is the case, condition \labelcref{item:kutz-chara-1} is satisfied.
	Otherwise, \labelcref{item:kutz-chara-1} is not satisfied.
	This process executes breadth-first search at most six times in each connected component of \(G\).
	It remains to check conditions~\labelcref{item:kutz-chara-2},~\labelcref{item:kutz-chara-3} and~\labelcref{item:kutz-chara-4}.
	
	Execute another breadth-first search from each dock vertex of an upper dock to determine the pairs of dock vertices (open dock to open dock), the pairs of docks (closed dock to closed dock) and the pairs of docks and dock vertices (closed dock to open dock) between which a connection exists.
	This searches each connected component of \(G\) at most twice as there are at most two dock vertices from the upper shore in any connected component of \(G\).

	Condition \labelcref{item:kutz-chara-2}: For a pair \((v_1,v_2)\) of dock vertices, another breadth-first search starting from \(v_1\) is used to check that the connected component containing the dock vertices is a path.
	Every vertex in the BFS-tree must have exactly one child, except for the last vertex, which must be \(v_2\).
	If this is not the case, condition~\labelcref{item:kutz-chara-2} is violated.
	
	Condition \labelcref{item:kutz-chara-3}: For a pair of docks \((D_1,D_2)\) with \(D_1=\{x_i,y_i\}\) and \(D_2=\{x_j,y_j\}\), \(i\neq j\), \(i,j\in\{1,\dots,k\}\), execute a breadth-first search from \(x_i\) and \(x_j\) to determine a shortest path \(P_1\) from \(x_i\) to \(y_i\) and a shortest path \(P_2\) from \(x_j\) to \(y_j\) in \(G\).
	Check that \(x_i,x_j,y_i,y_j\) all have degree~1 and that both paths contain exactly one hyperedge-vertex of degree~3.
	Mark all the vertices in \(V(P_1)\), then mark all the vertices in \(V(P_2)\).
	If a vertex was marked twice, the paths are not disjoint and condition~\labelcref{item:kutz-chara-3} is violated.
	Unmark the two hyperedge-vertices \(e_1\) and \(e_2\) of degree~3 and start a BFS from \(e_1\).
	The search only considers unmarked vertices.
	All internal vertices in the resulting tree must have exactly one child.
	The leaf must be \(e_2\).
	
	Condition \labelcref{item:kutz-chara-4}: For a pair \((D,v)\) with a dock \(D=\{x_i,y_i\}\) and a dock vertex \(v\), a slight modification of breadth-first search is required.
	The difficulty in detecting a ladder is caused by the optional path that may or may not be present in a ladder, see \cref{defn:kutz-ladder}.
	A vertex of degree~3 either connects \(a_i\), \(a_{i+1}\) and \(c_{i+1}\) or it is the end point \(r\) of the optional path.
	First, compute \({F=G'[K\cup\{\alpha,\beta\}]}\) where \(K\) is the connected component containing \(D\) in \(G\).
	Determine the vertices \(a_0\) and \(c_0\) based on which shore \(D\) belongs to.
	We now state our algorithm for detecting ladders.
	\begin{enumerate}[label=\arabic*, widest=11]
		\item[]\textbf{AlgorithmLadder}
		\item[]\textbf{Input}: A non-empty graph \(F\) and different vertices \(a_0,c_0\in V(F)\)
		\item[]\textbf{Output}: ``Input is a ladder'' or ``Input is not a ladder''. In the first case, the height of the ladder and whether or not the optional path is used by the ladder is also given.
		\item\label{alg:ladder-height-0} \(F\) is a ladder with height~0 if and only if \(V(F)=\{a_0,c_0\}\) and \(E(F)=\emptyset\).
		\item Initialize \(h\) to 0.
		\item\label{alg:ladder-while} \textbf{while}(true)
		\begin{enumerate}[label=\arabic*, widest=11]\setcounter{enumii}{\value{enumi}}
			\item\label{alg:ladder-goto-1} Execute a BFS from \(c_h\) until \(a_h\) is found.
			If \(a_h\) is never found, \(F\) is not a ladder.
			\item\label{alg:ladder-path-defn} Backtrack along the path \(P_{h+1}\) from \(a_h\) to \(c_h\) to check that \(a_h\) and \(c_h\) have degree~1, at most~2 vertices of the path have degree at least~3 and these vertices have degree exactly~3.
			If any of these conditions is violated, \(F\) is not a ladder.
			\item Determine the number of vertices \(\ell\) of the path from \(a_h\) to \(c_h\).
			\item\label{alg:ladder-no-sibling} If \(a_h\) has no sibling in the BFS tree, \(F\) is a ladder of height \(h+1\) if and only if \(\ell\geq5\) and no vertex of degree~3 was found.
			\item If \(a_h\) has a sibling in the BFS tree, set \(c_{h+1}\) to this vertex and set \(a_{h+1}\) to the grandparent of \(a_h\).
			\item\label{alg:ladder-degree-check} Check that there are exactly two vertices of degree~3 in the path, otherwise \(F\) is not a ladder.
			\item\label{alg:ladder-grandparent} If the grandparent of \(a_h\) does not have degree~3, exit the loop.
			\item\label{alg:ladder-length-check} If \(\ell<4\), \(F\) is not a ladder.
			\item\label{alg:ladder-goto-2} Remove from \(F\) the edges from \(a_{h+1}\) and \(c_{h+1}\) to their respective parent as well as the edge from the parent of \(a_h\) to \(a_{h+1}\), set \(h\coloneqq h+1\).
		\end{enumerate}\setcounter{enumi}{\value{enumii}}
		\item\label{alg:ladder-optional-path} The grandparent of \(c_h\) does not have degree~3 or \(\ell<4\), so the optional path must be used.
		Then \(F\) is a ladder of height \(h+1\) using the optional path if and only if the connected component of \(c_h\) contains exactly two vertices with degree at least~3, all vertices except \(a_h\) and \(c_h\) have degree at least~2, \(c_{h+1}\) has degree~2 and there exists a path consisting of at least~5 vertices from \(c_h\) to \(a_h\).
		Execute a final BFS from \(c_h\) to check this.
	\end{enumerate}
	By removing the edges in step~\labelcref{alg:ladder-goto-2} of the algorithm, we ensure that each vertex will be inspected in at most four of the breadth-first searches used as subroutines.
	This results in a linear running time although a non-constant number of breath-first searches may be executed.
	We now show correctness of the algorithm.
	
	The while-loop in step~\labelcref{alg:ladder-while} can be seen as a recursive execution of the algorithm.
	Therefore, we use induction over the height \(h\) of the ladder to prove that the algorithm accepts \(F\) if it is a ladder of height at least~1 on \(a_0\) and \(c_0\).
	Clearly, step~\labelcref{alg:ladder-height-0} of the algorithm correctly classifies ladders with height~0.
	Consider the case \(h=1\).
	If \(F\) does not use the optional path, it is simply a \((c_0,a_0)\)-path consisting of at least~5 vertices, so the algorithm will accept in step~\labelcref{alg:ladder-no-sibling}.
	If \(F\) does use the optional path, then any \((c_0,a_0)\)-path contains a vertex of degree~3, namely the predecessor of \(a_0\).
	The optional path may be shorter than the path \(P_1\) from \(c_0\) to \(a_0\) in \cref{defn:kutz-ladder}, so that it will be found by the breadth-first search in step~\labelcref{alg:ladder-goto-1}.
	Then the vertex \(v\) that the algorithm deems to be \(a_1\) is actually \(c_1\).
	However, this is irrelevant because both \(a_1\) and \(c_1\) have degree at most~2 if the optional path is used.
	Therefore, the algorithm will go to step~\labelcref{alg:ladder-optional-path} in step~\labelcref{alg:ladder-grandparent} by exiting the loop.
	Then, the algorithm will accept in step~\labelcref{alg:ladder-optional-path}.
	
	\(h\to h+1\): Let \(F\) be a ladder of height \(h+1\) on \(a_0\) and \(c_0\).
	Then \(F'=F-\left(V(P_1)\setminus\{a_0,c_0\}\right)\), with \(P_1\) as in \cref{defn:kutz-ladder}, is a ladder of height \(h\) on \(a_1\) and \(c_1\).
	The algorithm will accept \(F'\) by the induction hypothesis.
	The graph \(F'\) is exactly the maximal connected subgraph of \(F\) containing \(c_1\) after the removal of the edges in step~\labelcref{alg:ladder-goto-2} in the first iteration of the loop, so the algorithm accepts the input.
	
	It remains to show that the algorithm only accepts ladders.
	Let \(h\) be the height of the supposed ladder according to the algorithm.
	For every \(1\leq i<h\), the path \(P_i\) in step~\labelcref{alg:ladder-path-defn} consists of at least~5 vertices by step~\labelcref{alg:ladder-length-check} and contains exactly two vertices of degree at least~3 and these vertices have degree exactly~3 by the check in step~\labelcref{alg:ladder-degree-check}.
	For all \(1\leq i<h\), \(a_{i+1}\) is the grandparent of \(a_i\) and \(c_{i+1}\) is the sibling of \(a_i\) along \(P_i\).
	Additionally, the paths \(P_1,\dots,P_h\) are pairwise disjoint.
	Thus, the paths \(P_1,\dots,P_{h-1}\) are as in~\cref{defn:kutz-ladder}.
	If the algorithm returns that the supposed ladder does not use the optional path, the internal vertices of the path \(P_h\) all have degree~2 by step~\labelcref{alg:ladder-no-sibling}.
	Then \(P_h\) also satisfies the definition of a ladder.
	Therefore, the input is indeed a ladder that does not use the optional path.
	
	This leaves the case where the algorithm claims that the optional path is used.
	In this case, \(P_h\) also contains exactly two vertices of degree at least~3, having degree exactly~3, by step~\labelcref{alg:ladder-degree-check}.
	Thus, the two vertices of degree at least~3 in the connected component \(K\) containing \(c_h\) in step~\labelcref{alg:ladder-optional-path} are the two degree~3 vertices in \(P_h\).
	Every vertex in \(K\setminus\{a_h,c_h\}\) has degree at least~2, so \(K\) is the union of two \((a_h,c_h)\)-paths, one of which consists of at least~5 vertices.
	Both \(a_{h+1}\) and \(c_{h+1}\) do not have degree~3 by step~\labelcref{alg:ladder-grandparent} and step~\labelcref{alg:ladder-optional-path}.
	Therefore, the input is indeed a ladder using the optional path.

	Thus, the conditions of \cref{thm:kutz-breaker-chara} can be checked by a constant number of linear-time searches in each (augmented) connected component of \(G\) resulting in the running time of \(\Oh{|V(H)|+|E(H)|}\).
\end{proof}

\Cref{thm:kutz-breaker-chara} requires that \(\vegraph{H}\) consists of a single~2-connected component, potentially with some leafs attached to it.
To apply this result to a connected almost~3-uniform hypergraph \(H\) without this structure, \(\vegraph{H}\) is first decomposed into 2-connected components.
This is achieved by applying \cref{lem:cutpoint-split} and \cref{cor:lowdeg-pruning}, resulting in \cref{thm:block-split}.

\begin{lemmaq}[{\cite[Lemma~25]{kutz2004thesis}}]\label{lem:cutpoint-split}
	Let \(H\) be a connected linear almost~3-uniform hypergraph with~2-edge \({e\in E(H)}\).
	Let \(H=A\cup B\) be a decomposition of \(H\) with \(E(A)\neq\emptyset\neq E(B)\), \(|V(A)\cap V(B)|=1\) and \(e\in E(A)\).
	Let \(B_1,\dots,B_k\) be the connected components of \(B^{+v}\).
	Then each of the connected hypergraphs \(A,B_1,\dots,B_k\) contains at least one~2-edge and \(H\) is Maker's win if and only if at least one of the hypergraphs \(A,B_1,\dots,B_k\) is Maker's win.
\end{lemmaq}

\begin{corollary}\label{cor:lowdeg-pruning}
	Let \(H\) be a connected linear almost~3-uniform hypergraph and let \(e\in E(H)\) be a hyperedge of size three containing two vertices \(v_1\) and \(v_2\) of degree~1.
	Then Maker wins on \(H\) if and only if Maker wins on \({H-\{v_1,v_2\}}\).
\end{corollary}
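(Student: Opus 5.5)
Since this is stated as a corollary immediately after \cref{lem:cutpoint-split}, I would derive it from that lemma by a suitable decomposition \(H=A\cup B\), and argue directly only where the lemma does not apply. Let \(e=\{u,v_1,v_2\}\), with \(v_1,v_2\) of degree~1. I read \(H-\{v_1,v_2\}\) as the hypergraph obtained by deleting \(v_1,v_2\) and shrinking \(e\) to the singleton \(\{u\}\); I also note that simply deleting \(e\) gives the same answer.

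\textbf{Degenerate case.} If \(e\) is the only hyperedge containing \(u\), then by connectivity \(e\) is the entire hypergraph. Since \(H\) is almost~3-uniform, it must then also contain its 2-edge, so this situation cannot occur. Hence I may assume \(u\) lies in some other hyperedge.

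\textbf{Decomposition.} I set \(B\) to be the hypergraph on \(\{u,v_1,v_2\}\) whose only edge is \(e\), and \(A=H-e\) restricted to \(V(H)\setminus\{v_1,v_2\}\). Then \(E(A)\) and \(E(B)\) are both nonempty, \(V(A)\cap V(B)=\{u\}\), and the 2-edge lies in \(A\) because \(e\) has size three. Applying \cref{lem:cutpoint-split} with cut vertex \(u\), I compute \(B^{+u}\): it has vertices \(\{v_1,v_2\}\) and the single edge \(\{v_1,v_2\}\). So there is exactly one component \(B_1\), and it is a single 2-edge.

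\textbf{The obstacle.} A single 2-edge is Maker's win, since Maker claims one endpoint and the other later, and Breaker must also respond elsewhere. So a literal application of the lemma would make \(H\) always Maker's win, which is wrong. This shows that Kutz's lemma implicitly assumes \(A\) and \(B\) arise from a genuine block split, where Maker has not yet claimed \(v\) in the game on \(B\). I therefore expect that applying the lemma verbatim is not correct, and I would give a direct strategy-stealing argument instead, which I regard as the main content.

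\textbf{Direction \(\Leftarrow\).} Suppose Maker wins on \(H-e\). Maker plays that winning strategy on \(H\). Any Breaker move on \(v_1\) or \(v_2\) is answered by an arbitrary move, which is harmless by the usual monotonicity of Maker-Breaker games (an extra claimed vertex never hurts Maker).

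\textbf{Direction \(\Rightarrow\).} Suppose Maker wins on \(H\). Breaker adopts the pairing rule: if Maker claims \(v_1\), Breaker claims \(v_2\), and vice versa. This guarantees Maker never completes \(e\). Any winning set Maker obtains must therefore be a hyperedge of \(H-e\). Maker's moves on \(v_1\) and \(v_2\) are wasted tempo, and the Breaker replies to them are paired, so a Maker win on \(H\) against this Breaker yields a Maker win on \(H-e\). Formally, I would project Maker's strategy by deleting those paired rounds, again using monotonicity.

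Finally I would observe that \(H-e\) and the hypergraph with \(\{u\}\) left in place have the same winner. The reason is that a singleton hyperedge \(\{u\}\) would be a trivial Maker win, so the intended reading must be deletion of \(e\). I would state this convention explicitly.
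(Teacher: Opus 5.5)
Your final direct argument is correct, but your reason for abandoning \cref{lem:cutpoint-split} is false. Once that is corrected, the lemma gives exactly the paper's proof.

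A hypergraph consisting of a single 2-edge \(\{v_1,v_2\}\) is Breaker's win, not Maker's. Each component \(B_i\) in the lemma is a game of its own: Maker claims \(v_1\), Breaker immediately answers with \(v_2\), and there is no ``elsewhere'' to distract Breaker. So take your decomposition \(A=H-\{v_1,v_2\}\), \(B=(\{u,v_1,v_2\},\{e\})\); your own checks supply the lemma's hypotheses. The only component \(B^{+u}\) is Breaker's win, so the lemma says \(H\) is Maker's win if and only if \(A\) is. Together with the trivial direction (Breaker's win on \(H\) passes to the sub-hypergraph \(H-\{v_1,v_2\}\)), this is the paper's argument.

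You should also delete the remark that the ``shrunken'' reading gives the same answer. A singleton hyperedge \(\{u\}\) is an immediate Maker win, so under that reading the statement would be false. \(H-\{v_1,v_2\}\) means deleting \(v_1,v_2\) together with \(e\), as in the paper's use of \(H^{+v}-w\).

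Your direct route is still valid and genuinely different. Direction \(\Leftarrow\) is the usual sub-hypergraph monotonicity. Direction \(\Rightarrow\) is cleanest in contrapositive form. Suppose Breaker has a winning strategy \(S\) on \(H-\{v_1,v_2\}\). Breaker plays \(S\) on \(V(H)\setminus\{v_1,v_2\}\), moving arbitrarily if \(S\) has no move left, and answers \(v_1\) with \(v_2\) and vice versa. This wins on \(H\): Maker never owns all of \(e\), and Maker's moves outside the pair form a legal play against \(S\). Phrasing it this way avoids the bookkeeping of ``deleting rounds'' from Maker's strategy.

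Compared with the paper, your argument is self-contained. It uses none of the hypotheses that \cref{lem:cutpoint-split} needs (connectivity, linearity, almost 3-uniformity, or \(|e|=3\)). The paper's version is a one-liner given Kutz's lemma, and it fits the cut-vertex decomposition framework the paper reuses in \cref{thm:block-split}.
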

\begin{proof}
	If Breaker wins on \(H\), Breaker also wins on \(H-\{v_1,v_2\}\), so assume \(H\) is Maker's win.
	The vertices \(v_1\) and \(v_2\) both have degree~1 in \(H\), so only the last vertex \(v_3\) of the edge \(e\) can be used by other edges of \(H\).
	Thus, \(H\) can be decomposed into \(H_1=H-\{v_1,v_2\}\) and \(H_2=(\{v_1,v_2,v_3\},\{e\})\) with \(V(H_1)\cap V(H_2)=\{v_3\}\).
	Breaker clearly wins on a hypergraph consisting of a single~2-edge, so Breaker wins on \(H_2^{+v_3}\).
	By \cref{lem:cutpoint-split}, Maker wins on \(H\) if and only if Maker wins on \(H_1\).
\end{proof}

Applying \cref{cor:lowdeg-pruning} until this is no longer possible ensures that each hyperedge-vertex has two neighbors with degree at least~2.
Then every hyperedge-vertex lies on a cycle or connects two different cycles.
This property is used in the proof of \cref{thm:block-split}.
\Cref{cor:lowdeg-pruning} can also be used to preprocess the board graph by eliminating edges that cannot contribute to Maker's win.

\Cref{thm:block-split} allows us to split the hypergraph into independent subproblems using the~2-connected components of the vertex-hyperedge incidence graph.
This avoids the iterative computation of suitable decompositions for \cref{lem:cutpoint-split} by exhaustively applying \cref{lem:cutpoint-split} in a single step.

\begin{theorem}\label{thm:block-split}
	Let \(H\) be a connected linear almost~3-uniform hypergraph and let \(e\in E(H)\) be the unique~2-edge.
	Let \(B_1,\dots,B_k\) be the maximal 2-connected components of size at least~3 of \(\vegraph{H}\) and set \({V_i=V(H)\cap N[B_i]}\).
	Then Maker wins on \(H\) if and only if one of the following conditions is satisfied:
	\begin{enumerate}[label=(\arabic*)]
		\item\label{item:thm:block-split-1} There exists \(1\leq i\leq k\) such that \(e\in B_i\) and Maker wins on \(H[V_i]\).
		\item\label{item:thm:block-split-2} There exists \(1\leq i\leq k\) such that \(e\notin B_i\) and the last vertex on a shortest path from \(e\) to \(B_i\) in \(\vegraph{H}\) is an element of \(V(H)\).
		\item\label{item:thm:block-split-3} There exists \(1\leq i\leq k\) such that \(e\notin B_i\) and Maker wins on \(H[V_i]^{+v}\) where \(v\in V(H)\) in \(\vegraph{H}\) is the penultimate vertex on a shortest path from \(e\) to \(B_i\).
	\end{enumerate}
\end{theorem}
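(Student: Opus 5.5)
The plan is to prove \cref{thm:block-split} by applying \cref{lem:cutpoint-split} at every cut vertex of \(\vegraph{H}\) that is a vertex of \(H\), and reducing along the block-cut tree of \(\vegraph{H}\). First I will preprocess with \cref{cor:lowdeg-pruning}, so that every hyperedge-vertex of size three has at least two neighbours of degree at least~2. This preprocessing does not change the winner. After it, every remaining bridge-like structure sits inside \(N[B_i]\) for some block \(B_i\) or joins blocks, and the leaves of \(\vegraph{H}\) are exactly the degree-1 vertices hanging off blocks. Next, consider a cut vertex \(c\) of \(\vegraph{H}\). If \(c\in V(H)\), it induces a decomposition \(H=A\cup B\) with \(V(A)\cap V(B)=\{c\}\), where \(A\) is the side containing \(e\). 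If instead \(c\in E(H)\) is a hyperedge-vertex, I use linearity and rank~3: \(c\) has at most three neighbours, and the pieces attach through the vertices of \(c\). I then pass to the neighbouring \(H\)-vertex as the cut point. If that is impossible, the hyperedge has two degree-1 vertices on one side and is removed by pruning.

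Next comes the induction on the number of blocks in the block-cut tree, rooted at the block containing \(e\), or at \(e\) itself if \(e\) lies in no block of size at least~3. Repeated use of \cref{lem:cutpoint-split} says \(H\) is Maker's win iff some piece is. Three kinds of piece arise. The piece containing \(e\) is essentially \(H[V_i]\) for the block \(B_i\ni e\); this gives \ref{item:thm:block-split-1}. A piece \(B_j\) away from \(e\) is obtained as a component of \(B^{+v}\), where \(v\) is the cut vertex toward \(e\). Here \(v\) is the attaching vertex of the block, namely the last vertex on a shortest path from \(e\) to \(B_j\), or the penultimate one if the last is a hyperedge-vertex. The theorem distinguishes exactly these two cases. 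If the path enters \(B_j\) through a vertex \(v\in V(H)\) that is itself in \(B_j\), then \(v\) lies on a cycle in \(B_j\). In \(H[V_j]^{+v}\) every hyperedge through \(v\) becomes a 2-edge, and there are at least two of them. Maker wins any linear hypergraph with two 2-edges lying on a common cycle, so this case gives \ref{item:thm:block-split-2}. I will argue this directly: Maker claims a vertex on the cycle, and two threats arise which Breaker cannot both answer, as in the proof of \cref{lem:maker-winning-structure}. Otherwise the last vertex is a hyperedge-vertex \(f\in B_j\), and the attaching \(H\)-vertex \(v\) is the penultimate vertex. Then \(f\) shrinks to a single 2-edge in \(H[V_j]^{+v}\), the hypergraph is almost 3-uniform, and we obtain \ref{item:thm:block-split-3}. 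Pieces consisting only of tree-like parts between blocks contain no cycle. After pruning they are single 2-edges, so Breaker wins on them and they can be dropped.

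The main obstacle will be the bookkeeping: showing that the components produced by \cref{lem:cutpoint-split} coincide exactly with the sets \(H[V_i]\), together with the correct \(+v\) contraction. This requires checking three things. Intermediate blocks do not inherit extra 2-edges beyond the one coming from the root direction. The shortest-path vertex is well defined, because the block-cut tree gives a unique entry point. Including \(N[B_i]\) correctly captures the pendant hyperedges of the block. I would first verify the case of a single cut vertex carefully and then argue that the decompositions commute, so the pieces can be peeled off one block at a time from the leaves of the block-cut tree.
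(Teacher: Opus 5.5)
Your proposal is correct and follows essentially the paper's route: an induction over the blocks of \(\vegraph{H}\) that applies \cref{lem:cutpoint-split} at cut vertices belonging to \(V(H)\) and strips pendant structure with \cref{cor:lowdeg-pruning}. Within it, a block entered through a vertex of \(H\) yields two connected 2-edges and hence condition~(2), and acyclic pieces are Breaker's wins. The differences are cosmetic: the paper splits between two blocks that are adjacent in the block structure and cites Galliot et al.\ for acyclic hypergraphs, whereas you peel blocks along the block-cut tree and get the acyclic case directly by pruning down to a single 2-edge.
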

\begin{proof}
	The proof is by induction over the number \(k\) of 2-connected components of \(\vegraph{H}\).
	If \(k=0\), none of the conditions is satisfied.
	As there are no 2-connected components, \(H\) is acyclic, so \(H\) is Breaker's win by~\cite[Theorem~3.21]{galliot2025makerbreakersolvedpolynomialtime}.
	If \(k=1\), we distinguish the cases \(e\in B_1\) and \(e\notin B_1\).
	In the first case, only condition~\labelcref{item:thm:block-split-1} can be satisfied.
	Apply \cref{cor:lowdeg-pruning} to \(H\) until this is no longer possible to obtain a hypergraph \(H'\subseteq H\) such that Maker wins on \(H\) if and only if Maker wins on \(H'\).
	We claim that \(H'=H[V_1]\):
	Every edge in \(B_1\) lies on a cycle in \(H\), so they cannot be removed by \cref{cor:lowdeg-pruning}.
	The edges in \(E(H)\setminus B_1\) do not lie on any cycle, so \cref{cor:lowdeg-pruning} can successively remove all of these edges.
	Thus, Maker winning on \(H\) is equivalent to Maker winning on \(H[V_1]\), i.e.~condition~\labelcref{item:thm:block-split-1}.
	
	In the second case, only conditions~\labelcref{item:thm:block-split-2,item:thm:block-split-3} can be satisfied.
	Assume condition~\labelcref{item:thm:block-split-2} is satisfied and let \(v\in V(H)\) be the last vertex on a shortest path from \(e\) to \(B_1\).
	Let \(K\) be the connected component of \(H^{+v}\) containing \(B_1\setminus\{v\}\).
	Set \(H_1\coloneqq H[V(H)\setminus K]\) and \(H_2\coloneqq H[K\cup\{v\}]\).
	The definition of \(H_1\) and \(H_2\) immediately implies \(V(H)=V(H_1)\cup V(H_2)\) and \(V(H_1)\cap V(H_2)=\{v\}\).
	Additionally, every hyperedge \(f\in E(H)\) satisfies \(f\subseteq V(H_1)\) or \(f\subseteq V(H_2)\):
	Assume for contradiction that there exists a hyperedge \(f\in E(H)\) such that \(f\not\subseteq V(H_1)\) and \(f\not\subseteq V(H_2)\).
	Then \(f\) is not the 2-edge \(e\) as \(e\subseteq V(H_1)\).
	Thus, \(f\setminus\{v\}\) is a hyperedge of size at least~2 in \(H^{+v}\) with an endpoint in \(V(H_2)\setminus V(H_1)=K\).
	But this implies \(f\setminus\{v\}\subseteq K\) as \(K\) is a connected component, contradicting the assumption that \(f\not\subseteq V(H_2)\).
	In particular, \(H=H_1\cup H_2\).
	Observe that \(H_2^{+v}\) is connected.
	By \cref{lem:cutpoint-split}, Maker wins on \(H\) if and only if Maker wins on \(H_1\) or on \(H_2^{+v}\).
	As \(v\in B_1\), \(v\) lies on a cycle in \(H[V_1]\subseteq H_2\).
	Thus, \(H_2^{+v}\) is a connected hypergraph containing two 2-edges.
	Such hypergraphs are Maker's win as already observed by Kutz~\cite{kutz2004thesis}.
	
	Now assume condition~\labelcref{item:thm:block-split-2} is not satisfied.
	We will show that Maker's win is equivalent to condition~\labelcref{item:thm:block-split-3}.
	Apply \cref{cor:lowdeg-pruning} to \(H\) until this is no longer possible to obtain a hypergraph \(H'\subseteq H\) such that Maker wins on \(H\) if and only if Maker wins on \(H'\).
	\begin{claim}\label{claim:edge_on_path}
		Every hyperedge-vertex in \(E(H')\) is in \(B_1\) or lies on a path from \(B_1\) to \(e\) in \(\vegraph{H}\).
	\end{claim}
	\begin{claimproof}[Proof of \cref{claim:edge_on_path}]
		The edges in \(B_1\) lie on a cycle and so cannot be removed by \cref{cor:lowdeg-pruning}.
		Similarly, if a hyperedge-vertex lies on a path \(P\) from \(e\) to \(B_1\) in \(\vegraph{H}\), then this path ensures that at most one neighbor of the hyperedge-vertex can have degree~1.
		Thus, such hyperedge-vertex cannot be removed by \cref{cor:lowdeg-pruning} either.
		Assume for contradiction that \(H'\) contains a hyperedge \(f\) that does not satisfy the claim.
		Let \(P=v_0e_1\dots e_\ell v_\ell\) be a longest path in \(H'\) using the edge \(f\).
		Then one of the edges \(e_1\) or \(e_\ell\) is different from \(e\) and does not lie on a cycle.
		This holds because \(f\notin B_1\) and \(f\) does not lie on a path from \(B_1\) to \(e\) in \(\vegraph{H}\).
		Say \(e_1\) has this property.
		In particular, \(v_0\) must have degree~1 in \(\vegraph{H}\).
		But the third vertex in \(e_1\) must also have degree~1 in \(\vegraph{H}\) as otherwise \(P\) could be extended.
		However, then \cref{cor:lowdeg-pruning} could be applied to further reduce \(H'\), a contradiction.
	\end{claimproof}
	
	Let \(v\in V(H)\) be the penultimate vertex on a shortest path from \(e\) to \(B_1\) in \(\vegraph{H}\).
	By \cref{claim:edge_on_path}, \(H'\) consists of \(H[V_1]\) and a path \(P\) from \(v\) ending with the edge \(e\).
	Thus, \(H'=H_1\cup H_2\) where \({H_1=(V(P),E(P))}\) and \(H_2=H[V_1]\).
	Additionally, these hypergraphs satisfy \(V(H_1)\cap V(H_2)=\{v\}\).
	Observe that \(H_2^{+v}\) is connected.
	By \cref{lem:cutpoint-split}, Maker wins on \(H\) if and only if Maker wins on \(H_1\) or on \(H_2^{+v}\).
	However, \(H_1\) is acyclic, so Breaker wins on \(H_1\) by \cite[Theorem~3.21]{galliot2025makerbreakersolvedpolynomialtime}.
	Thus, Maker wins on \(H\) if and only if Maker wins on \(H_2^{+v}\), i.e.~if condition~\labelcref{item:thm:block-split-3} is satisfied.
	This completes the case \(k=1\).
	
	Assume the statement holds for \(k\geq1\).
	We now prove it for \(k+1\).
	Let \(1\leq i<j\leq k\) be such that a shortest path from \(B_i\) to \(B_j\) in \(\vegraph{H}\) does not intersect any of the sets \(B_1,\dots,B_k\) except for \(B_i\) and \(B_j\).
	Without loss of generality, let the indices be such that \(e\) is closer to \(B_i\) than to \(B_j\), i.e.~a shortest path from \(B_i\) to \(e\) in \(\vegraph{H}\) does not go through \(B_j\).
	Let \(v\) be the last element of \(V(H)\) on a shortest path from \(B_i\) to \(B_j\).
	Let \(K\) be the connected component of \(H^{+v}\) containing \(B_j\setminus\{v\}\).
	Then \(H=H_1\cup H_2\) where \({H_1\coloneqq H[V(H)\setminus K]}\) and \({H_2\coloneqq H[K\cup\{v\}]}\).
	Observe that \(V(H_1)\cap V(H_2)=\{v\}\) and that \(H_2^{+v}\) is connected.
	By \cref{lem:cutpoint-split} Maker wins on \(H\) if and only if Maker wins on \(H_1\) or on \(H_2^{+v}\).
	
	The 2-connected component \(B_j\) is not present in \(\vegraph{H_1}\).
	Similarly, \(B_i\) is not present in \(\vegraph{H_2^{+v}}\).
	Thus, \(\vegraph{H_1}\) and \(\vegraph{H_2^{+v}}\) have fewer 2-connected components than \(\vegraph{H}\), so the induction hypothesis can be applied.
	If Maker wins on \(H_1\), \(H_1\) must satisfy one of the conditions~\labelcref{item:thm:block-split-1} through~\labelcref{item:thm:block-split-3}.
	As \(H_1\subseteq H\), \(H\) will satisfy the same condition.
	Thus, we may assume Breaker wins on \(H_1\).
	Then, Maker wins on \(H\) if and only if Maker wins on \(H_2^{+v}\).
	If \(v\in B_j\), then \(H\) satisfies condition~\labelcref{item:thm:block-split-2} as \(v\) is the last vertex on a shortest path from \(e\) to \(B_j\) in \(H\) in this case.
	Additionally, Maker wins on \(H_2^{+v}\) as \(v\) lies on a cycle in \(H[V_j]\subseteq H_2\), so that \(H_2^{+v}\) contains two~2-edges, see \cite{kutz2004thesis}.
	This leaves only the case where \(v\notin B_j\).
	
	By the induction hypothesis, \(H_2^{+v}\) is Maker's win if and only if it satisfies one of the conditions~\labelcref{item:thm:block-split-1} through~\labelcref{item:thm:block-split-3}.
	Let \(f\) be the unique 2-edge in \(H_2^+\)
	Observe that \(H\) cannot satisfy condition~\labelcref{item:thm:block-split-1} because \(H_1\) is Breaker's win and contains the~2-edge \(e\).
	Thus, it remains to show that \(H\) satisfies condition~\labelcref{item:thm:block-split-2} or condition~\labelcref{item:thm:block-split-3} if and only if \(H_2\) satisfies one of the conditions~\labelcref{item:thm:block-split-1} through~\labelcref{item:thm:block-split-3}.
	
	Assume \(H\) satisfies condition~\labelcref{item:thm:block-split-2}.
	Let \(B_\ell\) be a~2-connected component and let \(P\) be the shortest path from \(e\) to \(B_\ell\) in \(H\) witnessing that \(H\) satisfies condition~\labelcref{item:thm:block-split-2}.
	As \(H_1\) does not satisfy any of the conditions, it follows that \(B_\ell\subseteq V(H_2)\).
	Additionally, \(\ell\neq j\) as \(v\notin B_j\).
	Then the subpath of \(P\) starting in the successor of \(v\), which is a 2-edge in \(H_2^{+v}\), witnesses that \(H_2^{+v}\) also satisfies condition~\labelcref{item:thm:block-split-2}.

	Assume \(H\) satisfies condition~\labelcref{item:thm:block-split-3}.
	Let \(B_\ell\) be a~2-connected component and let \(P\) be the shortest path from \(e\) to \(B_\ell\) in \(H\) witnessing that \(H\) satisfies condition~\labelcref{item:thm:block-split-3}.
	As \(H_1\) does not satisfy any of the conditions, it follows that \(B_\ell\subseteq V(H_2)\).
	If \(\ell=j\), then Maker wins on \(H[V_j]^{+v}\subseteq H_2^{+v}\), so \(H_2^{+v}\) satisfies one of the conditions by the induction hypothesis.
	Otherwise, \(\ell\neq j\).
	Then the subpath of \(P\) starting in the successor of \(v\), which is a 2-edge in \(H_2^{+v}\), witnesses that \(H_2^{+v}\) also satisfies condition~\labelcref{item:thm:block-split-3}.
	
	Assume \(H_2^{+v}\) satisfies condition~\labelcref{item:thm:block-split-1}.
	Let \(R\) be the 2-connected component satisfying condition~\labelcref{item:thm:block-split-1}.
	Then \(f\in R\), so \(R\) was created from \(V_j\) by removing \(v\) from the edges of \(H_2\).
	Thus, \(H\) satisfies condition~\labelcref{item:thm:block-split-3} as witnessed by \(B_j\).
	
	Finally, assume \(H_2^{+v}\) satisfies condition~\labelcref{item:thm:block-split-2} or condition~\labelcref{item:thm:block-split-3}.
	Then \(H\) satisfies the same condition because any 2-connected component \(R\) of \(\vegraph{H_2^{+v}}\) that does not use the new 2-edge \(f\) in \(H_2^{+v}\) is also a 2-connected component of \(\vegraph{H}\) and the last two vertices of a shortest path from \(f\) to \(R\) in \(\vegraph{H_2^{+v}}\) are the same as the last two vertices of a shortest path from \(e\) to \(R\) in \(\vegraph{H}\) due to \(e\in E(H_1)\).
\end{proof}

Combining \cref{thm:kutz-analysis,thm:block-split}, we obtain the algorithm for connected hypergraphs that are not 3-uniform.
This is used as a subroutine in the algorithm for arbitrary hypergraphs with rank at most~3.

\begin{corollary}\label{cor:kutz-non-3-uniform}
	Let \(H\) be a connected linear hypergraph of rank at most~3 that is not 3-uniform.
	Then the winner of the Maker-Breaker game played on \(H\) can be determined in time \(\Oh{|V(H)|+|E(H)|}\).
\end{corollary}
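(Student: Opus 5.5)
I would prove \cref{cor:kutz-non-3-uniform} by a case split on the number of hyperedges of size at most~2. In each case the hypergraph is reduced, in linear time, to a bounded number of instances that \cref{thm:block-split} and \cref{thm:kutz-analysis} can handle.

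\textbf{Setup and trivial cases.} First compute \(\vegraph{H}\) in time \(\Oh{|V(H)|+|E(H)|}\), as in the proof of \cref{thm:kutz-analysis}. Then scan \(E(H)\) once to count the hyperedges of size~1 and of size~2.
\begin{itemize}
\item If some hyperedge has size~1, Maker claims its vertex in the first move and wins.
\item If there are at least two~2-edges, \(H\) is connected and so contains two~2-edges. Such hypergraphs are Maker's win, as observed by Kutz and already used in the proof of \cref{thm:block-split}. Maker then wins.
\end{itemize}
Since \(H\) is not 3-uniform, the only remaining case is that \(H\) is almost~3-uniform with a unique~2-edge \(e\). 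Each of these checks takes linear time.

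\textbf{Almost~3-uniform case.} Compute the maximal 2-connected components \(B_1,\dots,B_k\) of \(\vegraph{H}\) using the classical linear-time Hopcroft--Tarjan algorithm, and keep those of size at least~3. Run one BFS from \(e\) in \(\vegraph{H}\). This gives, for every block \(B_i\) not containing \(e\), a shortest path from \(e\) to \(B_i\), together with its last and penultimate vertices. To obtain this from a single BFS, take for each block its vertex of minimal BFS distance; that vertex is the attachment point, and its BFS parent chain supplies the penultimate vertex. Now test the three conditions of \cref{thm:block-split}:
\begin{itemize}
\item Condition~\labelcref{item:thm:block-split-2} is a direct check of whether the last vertex is an element of \(V(H)\).
\item For the block containing \(e\), if any, build \(H[V_i]\).
\item For each other block, build \(H[V_i]^{+v}\), with \(v\) the penultimate vertex. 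This hypergraph is again almost~3-uniform, and after pruning as in \cref{cor:lowdeg-pruning} its incidence graph is a single block with leaves attached, which is the hypothesis of \cref{thm:kutz-breaker-chara}.
\end{itemize}
Whenever the hypothesis that \(\alpha,\beta\) have degree at least~2 fails, I would handle the instance directly: a degree-1 endpoint of the 2-edge means Breaker can answer locally, or the block is not genuinely attached. Otherwise decide Maker's win on the instance by checking conditions~\labelcref{item:kutz-chara-1}--\labelcref{item:kutz-chara-4} via \cref{thm:kutz-analysis}, in time linear in the instance size.

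\textbf{Running time and main obstacle.} The main obstacle is bounding the total running time. Each instance costs time linear in \(|N[B_i]|\), and the blocks of a graph partition its edges. A vertex of \(\vegraph{H}\) that is a hyperedge lies in at most three blocks, since it has degree at most~3. A vertex of \(V(H)\) may be a cut vertex lying in many blocks, but each such occurrence is charged to an incident edge of \(\vegraph{H}\) inside that block. Hence \(\sum_i |N[B_i]| = \Oh{|E(\vegraph{H})|} = \Oh{|V(H)|+|E(H)|}\). Because the degree-1 leaves in \(N[B_i]\setminus B_i\) are attached to hyperedge-vertices of \(B_i\), they are likewise counted only a constant number of times. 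Constructing each \(H[V_i]\) or \(H[V_i]^{+v}\) by copying adjacency lists restricted to \(N[B_i]\) stays within this bound. Summing over all blocks therefore gives the claimed \(\Oh{|V(H)|+|E(H)|}\).
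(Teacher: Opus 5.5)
Your route is the paper's own. You rule out size-1 edges and two~2-edges, compute the blocks of \(\vegraph{H}\), test the conditions of \cref{thm:block-split}, and decide each \(H[V_i]\) or \(H[V_i]^{+v}\) with \cref{thm:kutz-breaker-chara} via \cref{thm:kutz-analysis}. The paper simply asserts that \(\sum_i(|V(H_i)|+|E(H_i)|)\) is linear. The justification you add for this, which you single out as the main obstacle, is where your write-up goes wrong.

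Your charging argument bounds \(|B_i|\) together with the pendant leaves at hyperedge-vertices of \(B_i\). However, \(N[B_i]\), taken in \(\vegraph{H}\) as in the definition of \(V_i\), also contains every hyperedge through a vertex of \(B_i\). Summed over all blocks, this is not linear. For example, take the hyperedges \(\{x,a_j,b_j\},\{x,c_j,d_j\},\{a_j,c_j,y_j\}\) for \(1\le j\le k\), plus a~2-edge such as \(\{b_1,z\}\). This is a linear hypergraph with \(k\) six-cycle blocks through \(x\). Each block's closed neighbourhood contains all \(2k\) hyperedges at \(x\), so \(\sum_j|N[B_j]|\ge 2k^2\), while the input has size \(\Theta(k)\). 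Hence the claim \(\sum_i|N[B_i]|=\Oh{|E(\vegraph{H})|}\) is false as stated.

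The fix is to bound the quantity you actually need, namely \(|V(\vegraph{H_i})|\). By linearity and maximality of \(B_i\), a hyperedge outside \(B_i\) meets \(V_i\) in at most one vertex. Two such vertices would either close a cycle through it and \(B_i\), or make it share two vertices with a hyperedge of \(B_i\). So \(H[V_i]\) consists exactly of the hyperedges of \(B_i\), and \(\vegraph{H[V_i]}\) is \(B_i\) plus at most one pendant leaf per hyperedge-vertex. This gives \(|V(\vegraph{H_i})|\le 2|B_i|\le 2|E(B_i)|\), which sums to \(\Oh{|E(\vegraph{H})|}\); this is precisely what your charging proves. Build \(H_i\) from the Hopcroft--Tarjan edge list of \(B_i\) plus an \(\Oh{1}\) scan of each hyperedge-vertex, which has degree at most~3. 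Do not build it from full adjacency lists, so that a high-degree cut vertex is never rescanned once per block.

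Separately, your fallback for a degree-1 endpoint of the~2-edge is vacuous, and as stated it is unjustified. In both kinds of instance, the two vertices of the~2-edge lie in \(B_i\). The 2-edge \(e\) has degree~2 in \(\vegraph{H}\), and \(f\in B_i\) has \(v\) as its only neighbour outside \(B_i\). Every vertex of a 2-connected block on at least three vertices has degree at least~2 inside it. So the hypothesis of \cref{thm:kutz-breaker-chara} holds automatically, and the pruning by \cref{cor:lowdeg-pruning} changes nothing.
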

\begin{proof}
	If \(H\) contains a hyperedge of size one, Maker trivially wins.
	Similarly, if \(H\) contains at least two~2-edges, it is also Maker's win as observed by Kutz~\cite{kutz2004thesis}.
	Thus, we may assume that \(H\) is an almost~3-uniform hypergraph.
	Let \(e\) be the unique~2-edge in \(H\).
	Our algorithm proceeds as follows:
	Compute \(\vegraph{H}\) from \(H\) in time \(\Oh{|V(H)|+|E(H)|}\) using \cref{thm:kutz-analysis}.
	Then compute the maximal 2-connected components \(B_1,\dots,B_k\) of \(\vegraph{H}\) with size at least~3 in time \(\Oh{|V(\vegraph{H})|}\) using a standard linear time algorithm, e.g.~\cite{DBLP:journals/cacm/HopcroftT73}.
	If no such components exist, \(\vegraph{H}\) is acyclic, so \(H\) is Breaker's win, see \cite[Theorem~3.21]{galliot2025makerbreakersolvedpolynomialtime}.
	
	Otherwise, \(k\geq1\).
	Apply \cref{thm:block-split} to \(H\).
	If any \(B_i\) satisfies condition~\labelcref{item:thm:block-split-2}, we immediately find that \(H\) is Maker's win.
	Otherwise, we need to verify whether condition~\labelcref{item:thm:block-split-1} or condition~\labelcref{item:thm:block-split-3} is satisfied.
	For \(1\leq i\leq k\), set \(H_i=H[V(H)\cap N[B_i]]\) if \(e\in E(H_i)\).
	For \(1\leq i\leq k\), set \(H_i=H[V(H)\cap N[B_i]]^{+v_i}\) if \(e\notin E(H_i)\) where \(v_i\) is the last element of \(V(H)\) on a shortest path from \(e\) to \(B_i\) in \(\vegraph{H}\).
	These hypergraphs correspond to condition~\labelcref{item:thm:block-split-1} and~\labelcref{item:thm:block-split-3} of \cref{thm:block-split}, respectively.
	By \cref{thm:block-split}, \(H\) is Maker's win if and only one of \(H_1,\dots,H_k\) is Maker's win.
	
	For each \(1\leq i\leq k\), \(H_i\) is a connected almost 3-uniform hypergraph such that \(\vegraph{H_i}\) contains only a single~2-connected component, namely \(B_i\), and \(V(\vegraph{H_i})=N[B_i]\).
	Thus, Breaker's win on any of the hypergraphs \(H_1,\dots,H_k\) is characterized by \cref{thm:kutz-breaker-chara}.
	Then the winner of the Maker-Breaker games on \(H_1,\dots,H_k\) can be computed in time \(\Oh{\sum_{i=1}^k\left(|V(H_i)|+|E(H_i)|\right)}\) using \cref{thm:kutz-analysis}.
	This determines the winner of the game on \(H\) in the stated running time.
\end{proof}

It seems unlikely that the running time in \cref{cor:kutz-non-3-uniform} can be improved.
Any strictly faster algorithm does not even have enough time to read the entirety of the input for sufficiently large instances.

In order to apply \cref{thm:kutz-breaker-chara} to a hypergraph \(H\), \(H\) must contain a~2-edge.
This is achieved by simulating the first round of the game through brute-force enumeration.
\Cref{lem:2-edge-creation} is central for the correctness of this approach.

\begin{lemmaq}[{\cite[Lemma~24]{kutz2004thesis}}]\label{lem:2-edge-creation}
	Let \(H\) be a 3-uniform hypergraph that is Maker's win.
	Then there exists a vertex \(v\in V(H)\) such that for every vertex \(w\in V(H)\), the hypergraph \(H^{+v}-w\) has a connected component \(K\) such that \(H[K]\) contains a~2-edge and is Maker's win.
\end{lemmaq}

We can now state the main result of this section, generalizing \cref{cor:kutz-non-3-uniform} to linear hypergraphs of rank at most~3.

\begin{theorem}\label{thm:general-alg}
	There exists an algorithm that, given a linear hypergraph \(H\) of rank at most~3, determines the winner of the Maker-Breaker game played on \(H\) in time \(\Oh{|V(H)|^3+|V(H)|^2|E(H)|}\).
\end{theorem}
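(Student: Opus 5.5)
The plan is to reduce the general case to \cref{cor:kutz-non-3-uniform} by brute-forcing Maker's first move and Breaker's reply, with \cref{lem:2-edge-creation} justifying correctness.

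First, the preprocessing. If \(H\) is not 3-uniform, we split \(H\) into connected components and decide each one. A component that is not 3-uniform is handled by \cref{cor:kutz-non-3-uniform} in linear time. A component that is 3-uniform is handled by the procedure below. Maker wins on \(H\) if and only if he wins on some component; this is the standard hypergraph analogue of \cref{cor:etg-connected}, since a disjoint union of Breaker's wins is Breaker's win via pairing replies within components. So we may assume \(H\) is connected and 3-uniform.

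Next, the characterization. By \cref{lem:2-edge-creation}, if \(H\) is Maker's win, there is a vertex \(v\) such that for every \(w\) some component \(K\) of \(H^{+v}-w\) contains a 2-edge and \(H[K]\) is Maker's win. Conversely, suppose such a \(v\) exists. Maker claims \(v\). Breaker answers with some \(w\), and the remaining game is exactly the game on \(H^{+v}-w\). Maker then wins by playing on the good component \(K\), because a winning hyperedge inside \(H[K]\) is a hyperedge of the residual game. Hence \(H\) is Maker's win if and only if
\[\exists v\ \forall w\ \exists\text{ component }K\text{ of }H^{+v}-w\text{ with a 2-edge that is Maker's win}.\]

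The algorithm therefore enumerates all \(|V(H)|^2\) pairs \((v,w)\). For each pair it builds \(H^{+v}-w\) in time \(\Oh{|V(H)|+|E(H)|}\) and computes its connected components. For each component containing a 2-edge (hence not 3-uniform, and still linear of rank at most~3), it calls \cref{cor:kutz-non-3-uniform}. The components are disjoint, so the calls for one pair cost \(\Oh{|V(H)|+|E(H)|}\) in total. The overall running time is \(\Oh{|V(H)|^2(|V(H)|+|E(H)|)}=\Oh{|V(H)|^3+|V(H)|^2|E(H)|}\), as claimed.

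The main obstacle is a technical point in the pair \((v,w)\) step. Removing \(w\) may shrink hyperedges, since \(H^{+v}-w\) deletes \(w\) and the hyperedges through it, which here is Breaker claiming \(w\). It also creates several 2-edges, and possibly a 1-edge if \(v,w\) lie in a common hyperedge together with another vertex already gone. The point to check is that \cref{cor:kutz-non-3-uniform} covers all of these cases: a 1-edge or two 2-edges in a component are immediate Maker wins, so the corollary's handling of these degenerate cases suffices. I would also verify that \(H^{+v}-w\) remains linear, which holds because both operations preserve linearity.
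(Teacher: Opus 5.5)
Your proof is correct and follows the paper's argument. You reduce to a connected 3-uniform \(H\), enumerate all pairs \((v,w)\), and use \cref{lem:2-edge-creation} to justify calling \cref{cor:kutz-non-3-uniform} only on the components of \(H^{+v}-w\) that contain a 2-edge, which gives \(\Oh{|V(H)|^2(|V(H)|+|E(H)|)}\). Your explicit \(\exists v\,\forall w\,\exists K\) equivalence, with the converse argued via Maker playing \(v\) and then inside \(K\), is a slightly cleaner statement than the paper's. One harmless correction to your last paragraph: no 1-edge can arise in this setting. The operation \(H^{+v}\) turns each hyperedge through \(v\) into a 2-edge, and these 2-edges are pairwise disjoint by linearity. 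The removal \(-w\) deletes the hyperedges through \(w\) rather than shrinking them.
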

\begin{proof}
	Without loss of generality, we may assume that \(H\) is connected as the algorithm can be applied to every connected component of \(H\) while retaining the running time bound.
	If \(H\) is not 3-uniform, the algorithm of \cref{cor:kutz-non-3-uniform} immediately yields the result, so assume that \(H\) is~3-uniform.
	Observe that, by definition of the Maker-Breaker game, Breaker wins on \(H\) if and only if
	\begin{equation*}
		\forall v\in V(H)\,\exists v\neq w\in V(H):\text{ Breaker wins }H^{+v}-w.
	\end{equation*}
	To check if this property is satisfied, we use the following algorithm:
	Enumerate all pairs \((v,w)\) in \({V(H)\times V(H)}\) with \(v\neq w\) and determine the outcome of the game on \(H'=H^{+v}-w\).
	By \cref{lem:2-edge-creation}, it is enough to confirm that every connected components of \(H'\) containing a~2-edge is Breaker's win.
	Let \(K_1,\dots,K_\ell\) be the connected component of \(H'\) containing a~2-edge.
	Then \(H[K_i]\) is not 3-uniform.
	Applying \cref{cor:kutz-non-3-uniform} to \(H'[K_i]\) for every \(1\leq i\leq\ell\) determines if Breaker wins on every \(H_i\), \(1\leq i\leq\ell\), in time \(\Oh{\sum_{i=1}^{\ell}\left(|V(H_i)|+|E(H_i)|\right)}=\Oh{|V(H)|+|E(H)|}\).
	We consider at most \(|V(H)|^2\) pairs \((v,w)\) and processing each pair takes time \(\Oh{|V(H)|+|E(H)|}\) resulting in the stated running time bound.
\end{proof}

If the hypergraph is connected and not 3-uniform, \cref{cor:kutz-non-3-uniform} is preferable to \cref{thm:general-alg} as it guarantees a better running time.
In \cref{sec:k4-chara}, we use this fact to improve the running time in the special case of the unbiased triangle game for graphs containing \(K_4\) as a subgraph.

To obtain an algorithm for the unbiased triangle game from \cref{thm:general-alg}, the only missing component is the efficient computation of \(\gamehg{G}\) from a graph \(G\).
We will later need to compute \(\etgraph{G}\) efficiently, so we consider this task here as well.

\begin{lemma}\label{lem:computing-etg}
	Given a graph \(G\), both \(\gamehg{G}\) and \(\etgraph{G}\) can be computed in time \(\Oh{n+m^{1.5}}\).
\end{lemma}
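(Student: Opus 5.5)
The plan is to enumerate all triangles of \(G\) with the classical Chiba--Nishizeki/arboricity-style listing algorithm, which runs in time \(\Oh{n+m^{1.5}}\). Once the list of triangles is available, both \(\gamehg{G}\) and \(\etgraph{G}\) are built by a single pass over it.

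\textbf{Step 1: orientation.} Compute all vertex degrees in time \(\Oh{n+m}\). Fix a total order on \(V(G)\) by degree, breaking ties by index, and orient every edge from its lower endpoint to its higher endpoint. Every vertex \(v\) then has out-degree at most \(\Oh{\sqrt{m}}\). Indeed, if \(v\) has \(d^+(v)\) out-neighbours, each of them has degree at least \(\deg(v)\geq d^+(v)\). This gives \(d^+(v)^2\leq 2m\).

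\textbf{Step 2: listing triangles.} For each vertex \(v\), mark its out-neighbours. Then, for every out-neighbour \(u\) of \(v\) and every out-neighbour \(w\) of \(u\), check in \(\Oh{1}\) whether \(w\) is marked. If it is, report the triangle \(\{v,u,w\}\), and afterwards unmark. Each triangle is reported exactly once, namely from its minimum vertex. The total work is \(\sum_{(v,u)} d^+(u)\leq m\cdot\Oh{\sqrt m}\). Adding the \(\Oh{n+m}\) for degrees, adjacency lists and marking, the total is \(\Oh{n+m^{1.5}}\). As a byproduct this shows \(t=\Oh{m^{1.5}}\).

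\textbf{Step 3: building the structures.} Before listing, assign each edge an identifier and store it in the oriented adjacency lists. Then, when a triangle is reported, its three edge identifiers are available in \(\Oh{1}\). The identifier of \(vw\) can be stored at the mark of \(w\), and the other two are read off the adjacency entries being scanned. For \(\gamehg{G}\), the vertex set is the \(m\) edge identifiers and the hyperedges are the triples just found. For \(\etgraph{G}\), create one vertex per edge and one per listed triangle, and add the three incidences of each triangle. Both constructions take \(\Oh{m+t}=\Oh{m^{1.5}}\) time beyond the listing.

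The only point requiring care is the bound \(\Oh{\sqrt m}\) on the out-degree, together with recovering edge identifiers in constant time without a hash table or an \(n\times n\) matrix. The latter would cost \(\Theta(n^2)\) and could exceed the budget when \(m\ll n^{4/3}\). Storing the identifiers inside the adjacency lists, and using the mark array indexed by vertex, which is reset after each \(v\), avoids both problems.
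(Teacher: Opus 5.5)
Your proof is correct and takes the same route as the paper: list all triangles in \(\Oh{n+m^{1.5}}\) time, build \(\gamehg{G}\) and \(\etgraph{G}\) in one pass over that list, and get \(t=\Oh{m^{1.5}}\) as a byproduct. The only difference is that the paper removes isolated vertices and then cites Eisenbrand and Grandoni's listing result as a black box, whereas you spell out the degree-ordered listing algorithm and the constant-time recovery of edge identifiers, which the paper leaves implicit.
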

\begin{proof}
	The following algorithm achieves the desired running time:
	First, remove all isolated vertices from \(G\) in time \(\Oh{n+m}\).
	Then, compute \(\trias{G}\) in time \(\Oh{m^{1.5}}\) using \cite[Proposition~1]{eisenbrand2004fixedparameterclique}.
	Compute \(\gamehg{G}\) by starting with the graph \(F=\{E(G),\emptyset\}\).
	For each \(t\in\trias{G}\), add the edge \(E(G[t])\) to \(F\).
	After this process, \(F\) is equal to \(\gamehg{G}\).
	
	To compute \(\etgraph{G}\), start with the graph \(F'=\{E(G)\cup\trias{G},\emptyset\}\).
	For each \(t\in\trias{G}\) and for each \(e\in t\), add the edge \(\{e,t\}\) to \(F'\).
	After the loop is done, \(F'\) is equal to \(\etgraph{G}\).
	
	The algorithm is clearly correct.
	Note that \(|\trias{G}|=\Oh{m^{1.5}}\) as otherwise it would not be possible to compute the set in this time.
	Thus, the algorithm runs in time~\(\Oh{n+m^{1.5}}\).
\end{proof}

\begin{corollary}
	The winner of the unbiased triangle game on a given board graph \(G\) can be determined in time \({\Oh{\generalTime}}\).
\end{corollary}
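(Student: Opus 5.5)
The plan is to combine \cref{lem:computing-etg} with \cref{thm:general-alg} and then translate the hypergraph parameters back into graph parameters. Given the board graph \(G\), we first compute \(\gamehg{G}\) in time \(\Oh{n+m^{1.5}}\) by \cref{lem:computing-etg}. As observed in \cref{sec:prelims}, two distinct triangles share at most one edge of \(G\), so \(\gamehg{G}\) is linear. Every hyperedge has size exactly~3, so \(\gamehg{G}\) has rank at most~3 and \cref{thm:general-alg} applies. Moreover, the unbiased triangle game on \(G\) is by definition equivalent to the unbiased Maker-Breaker game on \(\gamehg{G}\). The early termination in our definition does not affect the winner: Maker has claimed a triangle at some point if and only if he has claimed one at the end.

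Next, we run the algorithm of \cref{thm:general-alg} on \(\gamehg{G}\). Since \(|V(\gamehg{G})|=m\) and \(|E(\gamehg{G})|=t\), this takes time \(\Oh{m^3+m^2t}\). Adding the preprocessing gives \(\Oh{n+m^{1.5}+m^3+m^2t}=\Oh{\generalTime}\), because \(m^{1.5}\le m^3\).

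The only point needing care is that the running time of \cref{thm:general-alg} is stated for the hypergraph as input. Its proof first splits into connected components, and this is linear in \(|V(H)|+|E(H)|\le m+t\), which is dominated by the other terms. Isolated vertices of \(G\) contribute only the \(\Oh{n}\) term already accounted for. No real obstacle is expected; the step requiring the most attention is verifying that linearity and rank at most~3 hold, so that \cref{thm:general-alg} applies to \(\gamehg{G}\).
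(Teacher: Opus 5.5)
Your proposal is correct and matches the paper's proof: you build \(\gamehg{G}\) via \cref{lem:computing-etg} in time \(\Oh{n+m^{1.5}}\), then run \cref{thm:general-alg} with \(|V(\gamehg{G})|=m\) and \(|E(\gamehg{G})|=t\), and absorb \(m^{1.5}\) into \(m^3\). Your additional checks (linearity, rank at most~3, and that early termination does not change the winner) are sound and only make explicit what the paper leaves implicit.
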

\begin{proof}
	Compute \(\gamehg{G}\) from \(G\) using \cref{lem:computing-etg}.
	Then apply \cref{thm:general-alg} to \(\gamehg{G}\) to determine the winner of the Maker-Breaker game on \(\gamehg{G}\).
	As \(|V(\gamehg{G})|=m\) and \(|E(\gamehg{G})|=t\), the running time is bounded by
	\begin{equation*}
		\Oh{n+m^{1.5}+|V(\gamehg{G})|^3+|V(\gamehg{G})|^2|E(\gamehg{G})|}=\Oh{\generalTime}.\qedhere
	\end{equation*}
\end{proof}

\subsection[Classification of Graphs containing \texorpdfstring{\(K_4\)}{the complete Graph on four Vertices}]{Classification of Graphs containing \texorpdfstring{\boldmath\(K_4\)}{the complete Graph on four Vertices}}\label{sec:k4-chara}
When the board graph \(G\) contains \(K_4\) as a subgraph and its edge-triangle incidence graph is connected, the outcome of the game on \(\gamehg{G}\) can be determined using \cref{thm:k4-elimination}.
We require the following results:

\begin{lemmaq}[{\cite[Lemma 19]{kutz2004thesis}}]\label{lem:kutz-quasi-articulation}
	Let \(H=A\cup B\) be the union of two hypergraphs \(A\) and \(B\) whose vertex sets intersect in a single element, i.e.~\({V(A)\cap V(B)=\{v\}}\).
	Then Maker wins on \(H\) if and only if one of the following conditions is satisfied:\\
	\begin{minipage}{.3\linewidth}
		\begin{enumerate}[label=(\arabic*)]
			\item Maker wins on \(A\).
		\end{enumerate}
	\end{minipage}
	\hfill
	\begin{minipage}{.3\linewidth}
		\begin{enumerate}[label=(\arabic*)]
			\setcounter{enumi}{1}
			\item Maker wins on \(B\).
		\end{enumerate}
	\end{minipage}
	\begin{minipage}{.39\linewidth}
		\begin{enumerate}[label=(\arabic*)]
			\setcounter{enumi}{2}
			\item Maker wins on \(A^{+v}\) and on \(B^{+v}\).\qedhere
		\end{enumerate}
	\end{minipage}
\end{lemmaq}

\begin{corollaryq}\label{cor:k4-preprocessing}
	Let \(H=A\cup B\) be the union of two hypergraphs \(A\) and \(B\) with \(V(A)\cap V(B)=\{v\}\) such that Breaker wins on \(B^{+v}\).
	Then Maker wins on \(H\) if and only if Maker wins on \(A\).
\end{corollaryq}

\begin{figure}[t]
	\centering
	\begin{equation*}
		\begin{array}{c@{\hspace{10mm}}c}
			\vcenter{\hbox{\includegraphics[scale=0.8]{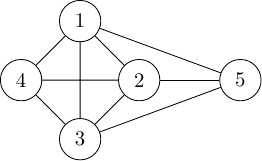}}} &
			\vcenter{\hbox{\includegraphics[scale=0.8]{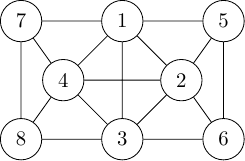}}} \\[2pt]
			H_1 & H_2 \\[4pt]
			\vcenter{\hbox{\includegraphics[scale=0.6]{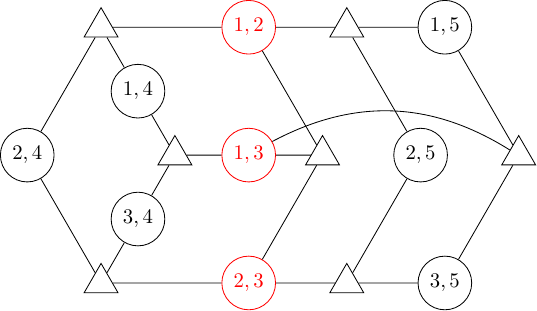}}} &
			\vcenter{\hbox{\includegraphics[scale=0.6]{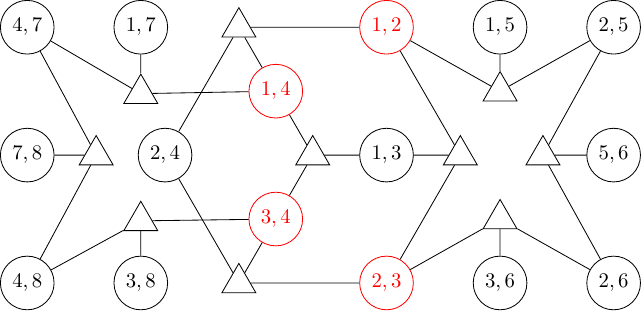}}} \\[2pt]
			\etgraph{H_1} & \etgraph{H_2}
		\end{array}
	\end{equation*}
	\caption{The graphs \(H_1\) and \(H_2\) as well as their edge-triangle incidence graphs.
		The graph \(H_1\) satisfies condition~\labelcref{item:k4-chara-3-in-comp} of \cref{thm:k4-elimination}.
		The three vertices in the same connected component of \(\etgraph{H_1}-\trias{\{1,2,3,4\}}\) are colored red.
		The graph \(H_2\) satisfies condition~\labelcref{item:k4-chara-good-anchor} of \cref{thm:k4-elimination}.
		Two pairs of vertices in the same connected components of \(\etgraph{H_2}-\trias{\{1,2,3,4\}}\) witnessing this are colored red.}
	\label{fig:k4-elimination}
\end{figure}

\begin{theorem}\label{thm:k4-elimination}
	Let \(G=(V,E)\) be such that \(\etgraph{G}\) is connected and \(K_4\cong G[U]\) for some \(U\subseteq V\).
	Set \({E^*=E(G[U])}\).
	For \(e\in E^*\), let \(K_e\) be the connected component of \(e\) in \(\etgraph{G}-N[E^*\setminus\{e\}]\).
	Then Maker wins on \(\gamehg{G}\) if and only if one of the following conditions is satisfied:
	\begin{enumerate}[label=(\arabic*)]
		\item\label{item:k4-chara-3-in-comp} Three elements of \(E^*\) are in the same connected component of \(\etgraph{G}-\trias{U}\).
		\item\label{item:k4-chara-biased-win} There exists an edge \(e\in E^*\) such that Maker wins \(\gamehg{G[E\cap K_e]}^{+e}\).
		\item\label{item:k4-chara-good-anchor} There exists a connected component \(K\) of \(\etgraph{G}-\trias{U}\) containing exactly two edges \(e_1,e_2\) of \(E^*\) such that Maker wins on \(\gamehg{G}+\{e_1,e_2\}\).
	\end{enumerate}
\end{theorem}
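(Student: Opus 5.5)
Both directions are organised around Maker's first move, exploiting that the four triangles of \(\trias{U}\) make \(K_4\) very threatening. In \(\etgraph{K_4}\) each edge lies in two triangles of \(\trias{U}\), and opposite edges share no triangle.

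\textbf{Sufficiency.}
\begin{itemize}
\item Condition~\labelcref{item:k4-chara-biased-win}: Maker claims \(e\). The board \(\gamehg{G[E\cap K_e]}^{+e}\) is a sub-board of \(\gamehg{G}^{+e}\) in which every hyperedge through \(e\) is shrunk. Breaker's reply lies either outside \(K_e\), so Maker continues the winning strategy there, or inside, which is exactly Breaker's first move in that game. Hence Maker wins.
\item Condition~\labelcref{item:k4-chara-good-anchor}: Breaker claims \(e_1\) or \(e_2\) only to stop a triangle of \(\trias{U}\), so I will simulate \(\gamehg{G}+\{e_1,e_2\}\). Maker claims \(e_1\). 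If Breaker does not answer inside the \(K_4\), Maker completes the relevant \(K_4\) triangles. If Breaker answers with an edge of \(E^*\), I want Maker to recover either the 2-edge threat \(\{e_1,e_2\}\), via a path in \(K\) avoiding \(N[\text{Breaker's edge}]\), or (WC) via \cref{lem:maker-winning-structure}.
\item Condition~\labelcref{item:k4-chara-3-in-comp}: Maker claims the middle one of the three edges. Any single Breaker reply leaves two \(K_4\)-triangles through Maker's edge, or a path in \(\etgraph{G}-\trias{U}\) to another \(E^*\)-edge whose \(K_4\)-triangle is intact. This yields two Maker edge-vertices in one component of \(\etgraph{G}-N[E_b]\), and \cref{lem:maker-winning-structure} finishes.
\end{itemize}
This case analysis is routine but needs care about which triangle of \(\trias{U}\) survives.

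\textbf{Necessity.} Assume Maker wins and none of~\labelcref{item:k4-chara-3-in-comp} or~\labelcref{item:k4-chara-good-anchor} holds. By \cref{thm:monotonicity,lem:only-cycle-first}, take \(M\in M(G)\) whose first edge \(f\) lies on a cycle. Since \(\etgraph{G}\) is connected and failure of~\labelcref{item:k4-chara-3-in-comp} says each component of \(\etgraph{G}-\trias{U}\) contains at most two \(E^*\)-edges, I split on where \(f\) sits.
\begin{itemize}
\item If \(f\in E^*\): Breaker can play so that the surviving board is \(\gamehg{G[E\cap K_f]}^{+f}\), glued to the rest at the single edge \(f\). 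Breaker answers inside \(E^*\setminus\{f\}\), suitably chosen to kill the \(K_4\) threats, and separates via \(N[E^*\setminus\{f\}]\). Then \cref{lem:kutz-quasi-articulation} and \cref{cor:k4-preprocessing} show Maker's win forces~\labelcref{item:k4-chara-biased-win}.
\item If \(f\notin E^*\): \(f\) lies in a component \(K\) of \(\etgraph{G}-\trias{U}\) with at most two \(E^*\)-edges. If it has one or none, Breaker can cut \(K\) from the \(K_4\) at a single edge-vertex and apply \cref{cor:k4-preprocessing} to reduce to a game without the \(K_4\). If it has two, \(e_1,e_2\), the \(K_4\) acts at most as the extra hyperedge \(\{e_1,e_2\}\), and Maker's win contradicts failure of~\labelcref{item:k4-chara-good-anchor}.
\end{itemize}

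\textbf{Main obstacle.} The hardest step is this last necessity argument. The decomposition must be made rigorous as a hypergraph union meeting in one vertex, so that \cref{lem:kutz-quasi-articulation} applies. I would try to show that \(\gamehg{G}\) decomposes as the \(K_4\) part together with the pieces corresponding to components of \(\etgraph{G}-\trias{U}\), glued along \(E^*\). I would then iterate \cref{lem:kutz-quasi-articulation}, computing explicitly which \(K_4\)-residual games (\(\gamehg{K_4}^{+S}\) for small \(S\)) are Maker wins.
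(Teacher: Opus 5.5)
\textbf{The genuine gap is the sufficiency of (3).} Your sketch never uses the hypothesis that Maker wins \(\gamehg{G}+\{e_1,e_2\}\). Opening with \(e_1\) and hoping to ``recover'' the 2-edge threat is not an argument. The natural simulation, in which Maker plays a winning strategy for \(\gamehg{G}+\{e_1,e_2\}\) on \(\gamehg{G}\), breaks as soon as that strategy uses the 2-edge as a threat. In \(\gamehg{G}\), Breaker need not answer \(e_1\) with \(e_2\). And Maker later owning both \(e_1\) and \(e_2\) gives nothing if Breaker has cut their connections, both in \(K\) and through \(\trias{U}\).

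The paper argues contrapositively instead.
\begin{itemize}
\item It first treats separately the case of two components of \(\etgraph{G}-\trias{U}\) each containing two elements of \(E^*\), using an explicit Maker strategy that chains \(K_4\)-threats.
\item With that case, (1) and (2) excluded, it uses \cref{cor:k4-preprocessing} to assume \(K_e=\{e\}\) for \(e\in E^*\setminus\{e_1,e_2\}\).
\item It then proves, by an exchange argument, that if Breaker wins \(\gamehg{G}\), he has a winning strategy whose first claimed element of \(E^*\) is always \(e_1\) or \(e_2\). Such a strategy never lets Maker own both, so it also wins on \(\gamehg{G}+\{e_1,e_2\}\).
\end{itemize}
Your remark that Breaker takes \(e_1\) or \(e_2\) ``only to stop a triangle of \(\trias{U}\)'' is the germ of this. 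But it is a claim about Breaker's freedom that must be proved, not a line of play for Maker.

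\textbf{You have also placed the difficulty in the wrong direction.} Necessity is short.
\begin{itemize}
\item Suppose (1) fails and some component contains exactly \(e_1,e_2\in E^*\). Then Maker's win on \(\gamehg{G}\) gives Maker's win on \(\gamehg{G}+\{e_1,e_2\}\), since adding a hyperedge only helps Maker. So (3) holds.
\item Otherwise every component contains exactly one \(e\in E^*\) and coincides with \(K_e\), and \(\gamehg{G}\) splits at each such \(e\). If (2) fails, six applications of \cref{cor:k4-preprocessing} reduce the game to \(\gamehg{K_4}\), which Breaker wins.
\end{itemize}
Splitting on Maker's first move does not decompose the game, because later moves can go anywhere. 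Your case \(f\in E^*\) also contains a step that cannot work: Breaker claims one edge per round, so he cannot separate via \(N[E^*\setminus\{f\}]\).

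\textbf{Two smaller repairs in sufficiency.}
\begin{itemize}
\item In (2), a Breaker reply inside \(K_e\) is not ``Breaker's first move in that game'', since Maker moves first in \(\gamehg{G[E\cap K_e]}^{+e}\). What saves you is that the \(K_4\) forces Breaker's reply into \(E^*\setminus\{e\}\), and \(N[E^*\setminus\{e\}]\cap K_e=\emptyset\).
\item In (1), ``the middle edge'' should be made precise. Choose \(e_0,e_1,e_2\) whose shortest connecting paths in \(\etgraph{G}-\trias{U}\) have minimal total length, so that the triangles on these paths meet \(E^*\) only at \(e_0\) and \(e_i\). Then, after Breaker's forced reply in \(E^*\), some \(e_i\) is joined to \(e_0\) by two paths whose blocking sets meet only in \(\{e_0,e_i\}\). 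Claiming \(e_i\) then gives (WC).
\end{itemize}
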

\begin{proof}
	Every edge \(e\in E\) is contained in some \(K_{e'}\) for an edge \(e'\in E^*\):
	Let \(e\in E\) and \(e'\in E ^*\) be arbitrary.
	If a shortest path from \(e\) to \(e'\) in \(\etgraph{G}\) does not intersect \(\trias{U}\), then \(e\) lies in \(K_{e'}\).
	Otherwise, a shortest path from \(e\) to \(e'\) contains a triangle \({t\in\trias{U}}\).
	As \(N(\trias{U})\subseteq E^*\), \(e\) must be able to reach some element \(e''\) of \(E^*\) without using a triangle from \(\trias{U}\), so \(e\) is in \(K_{e''}\).
	
	First, we will show that the conditions are sufficient for Maker's win.
	For the analysis, we consider a special case of condition~\labelcref{item:k4-chara-good-anchor} as a separate condition:
	\begin{enumerate}[label=(\arabic*)]\setcounter{enumi}{3}
		\item\label{item:k4-chara-double-cycle} There are two connected components of \(\etgraph{G}-\trias{U}\) containing exactly 2 edges of \(E^*\).
	\end{enumerate}
	
	The proof commonly uses the fact that if Maker claims an edge from \(E^*\) when neither Maker nor Breaker claimed an edge from \(E^*\) before, Breaker must also claim an edge from \(E ^*\) to avoid losing.
	This is due to the fact that every edge in \(E^*\) lies on a cycle with every other edge in \(E^*\) in \(\etgraph{G[U]}\), see also \cref{fig:etgraph-example,fig:k4-elimination}.
	Thus, Maker can claim two edges in the same connected component of \(\etgraph{G[U]}\) in the next round by claiming any available edge in \(E^*\) if Breaker does not answer in \(E ^*\).
	
	\labelcref{item:k4-chara-3-in-comp}: Let \(e_0,e_1,e_2\in E^*\) be pairwise distinct and in the same connected component \(K\) of \(\etgraph{G}-\trias{U}\) such that the shortest paths \(P_1,P_2\) from \(e_0\) to \(e_1\) and from \(e_0\) to \(e_2\) minimize \(|V(P_1)|+|V(P_2)|\) among all possible choices of such vertices.
	Note that \(N(\triasCap{P_i})\cap E^*=\{e_0,e_i\}\) for all \(i\in\{1,2\}\), i.e.~the only edges from \(E^*\) used by triangles in \(P_i\) are \(e_0\) and \(e_i\).
	No triangle-vertex in \(P_i\) except for the last one can have a neighbor in \(E^*\setminus\{e_0\}\) as otherwise \(|V(P_1)|+|V(P_2)|\) is not minimal, a contraindication.
	The last triangle-vertex in \(P_i\) cannot have a neighbor in \(E^*\) besides \(e_i\) as then it would use two edges from \(E^*\) and would thus be in \(\trias{U}\), which is disjoint from \(K\).
	This situation is visualized in \cref{fig:k4-elimination} by the graph \(H_1\) where \(\left(e_0,e_1,e_2\right)=\left(\{1,3\},\{1,2\},\{2,3\}\right)\).
	
	Maker first picks \(e_0\).
	Breaker must respond by picking some edge \({a_1\in E^*}\) as discussed at the beginning of the proof.
	Regardless of the choice of \(a_1\), \({\etgraph{G[U]}-N[a_1]}\) is a tree and therefore connected, see \cref{fig:etgraph-example,fig:k4-elimination}.
	At least one of \(e_1\) and \(e_2\), say \(e_1\), is not claimed by Breaker and is thus reachable from \(e_0\) in \(\etgraph{G[U]}-N[a_1]\) by a path \(Q\).
	Breaker cannot separate \(e_0\) and \(e_1\) in \({\etgraph{G}-N[a_1]}\) as witnessed by the \((e_0,e_1)\)-paths \(Q\) and \(P_1\).
	The path \(Q\) can only be destroyed by picking edges from \(E^*\) and the path \(P_1\) can only be destroyed by picking edges from \(N(\triasCap{P_1})\).
	However, \(E^*\cap N(\triasCap{P_1})=\{e_0,e_1\}\) and both of these edges have already been claimed by Maker, so at least one of the paths remains after Breaker's move.
	Thus, Maker wins by \cref{lem:maker-winning-structure}.
	
	\labelcref{item:k4-chara-biased-win}: Let \(e\in E^*\) be such that Maker wins \(\gamehg{G[E\cap K_e]}^{+e}\).
	Maker wins on \(\gamehg{G}\) by first picking \(e\).
	Breaker must reply with some \(a\in E^*\) as discussed at the beginning of the proof.
	Then Maker wins by following a winning strategy for the game \(\gamehg{G[E\cap K_e]}^{+e}\).
	This is possible because \(a\in E^*\setminus\{e\}\), so \(N[a]\cap K_e=\emptyset\), i.e.~Breaker's move has no impact on the game on \(\gamehg{G[E\cap K_e]}^{+e}\).
	
	\labelcref{item:k4-chara-double-cycle}: Let \(e_1,e_2\in E ^*\) and \(e_3,e_4\in E^*\) be in the same connected components \(K_1,K_2\) of \(\etgraph{G}-\trias{U}\), respectively.
	A triangle in \(G\) is uniquely determined by two of its component edges, so there can be at most one triangle in \(G\) using~3 of the edges \(e_1,\dots,e_4\).
	Thus, we may assume that the labeling of the edges is such that if there exists a triangle in \(G\) using~3 of the edges \(e_1,\dots,e_4\), then this triangle consists of \(e_1,e_2,e_4\).
	All edges in \(E^*\) except one lie on a common triangle with \(e_3\) as \(G[U]\cong K_4\), see~\cref{fig:etgraph-example}.
	Therefore, we may assume that \(e_3\) lies on a common triangle with \(e_1\) (swap \(e_1\) and \(e_2\) otherwise, this causes no problem with the previous assumption).

	We consider two cases.
	In the first case, \(e_2\) and \(e_3\) do not lie on a common triangle.
	Maker wins as follows:
	First, Maker picks \(e_1\).
	As discussed at the beginning of the proof, Breaker must reply with some edge \(a_1\in E^*\).
	The graph \(\etgraph{G[U]}-N[a_1]\) is a tree and therefore connected, so there exists an \((e_1,e_2)\)-path \(Q\) in \(\etgraph{G[U]}-N[a_1]\) if \(a_1\neq e_2\).
	Thus, \(e_1\) lies on a cycle with \(e_2\) using the path between the two in \(K_1\) and the path \(Q\).
	This leads to Maker's win in the next round by \cref{lem:maker-winning-structure}.
	Therefore, Breaker must respond by picking \(a_1=e_2\).
	Next, Maker picks \(e_3\) which results in a threat in \(G[U]\).
	This threat is the unique triangle \(t_1\) using \(e_1\) and \(e_3\) in \(\trias{U}\):
	By the assumption on the labeling, there is at most one triangle using three edges from \(\{e_1,e_2,e_3,e_4\}\) and this triangle uses the edges \(e_1,e_2,e_4\).
	Thus, the third edge \(f\in E^*\) in the triangle \(t_1\) is neither \(e_2\) nor \(e_4\), so it is unaffected by Breaker's choice of \(a_1=e_2\).
	Breaker must respond with \(a_2=f\neq e_4\).
	Finally, Maker picks \(e_4\).
	Because \(e_2\) and \(e_3\) do not lie on a common triangle in \(G[U]\cong K_4\), \(e_3\) and \(e_4\) must lie on a common triangle \(t_2\) with third edge \(g\in E^*\).
	This triangle is a relevant threat:
	The edge \(a_1=e_2\) does not lie on a common triangle with \(e_3\), so \(a_1\neq g\).
	Additionally, \(f=g\) is impossible as this would imply \(t_1=t_2\), so \(e_1,e_2\) and \(e_3\) would form a triangle contradicting the assumption on the labeling of the edges.
	Thus, Breaker plays \(a_3=g\) to avoid losing in the next round.
	But \(e_3\) and \(e_4\) are connected by a path in \(K_2\), so Maker has claimed two edges in the same connected component of \(\etgraph{G}-N[\{a_1,a_2,a_3\}]\) leading to Maker's win by \cref{lem:maker-winning-structure}.
	This situation is visualized in \cref{fig:k4-elimination} by the graph \(H_2\).
	The edges \(e_1,\dots,e_4\) can be chosen as \(\left(e_1,e_2,e_3,e_4\right)=\left(\{1,2\},\{2,3\},\{1,4\},\{3,4\}\right)\).
	
	In the second case, \(e_2\) and \(e_3\) lie on a common triangle \(t_1\) with third edge \(f\in E^*\).
	Then Maker wins as follows:
	First, Maker picks \(e_3\).
	As discussed at the beginning of the proof, Breaker must reply with some edge \(a_1\in E^*\).
	The graph \(\etgraph{G[U]}-N[a_1]\) is a tree, so there exists an \((e_3,e_4)\)-path \(Q\) in \(\etgraph{G[U]}-N[a_1]\) if \(a_1\neq e_4\).
	Thus, \(e_3\) lies on a cycle with \(e_4\) using the path between the two in \(K_2\) and the path \(Q\).
	This leads to Maker's win in the next round by \cref{lem:maker-winning-structure}.
	Therefore, Breaker must respond by picking \(a_1=e_4\).
	Next, Maker picks \(e_1\) which results in a threat in \(G[U]\).
	This threat is the unique triangle \(t_2\) using \(e_1\) and \(e_3\) in \(\trias{U}\):
	By the assumption on the labeling, there is at most one triangle using three edges from \(\{e_1,e_2,e_3,e_4\}\) and this triangle uses the edges \(e_1,e_2,e_4\).
	Thus, the third edge \(g\in E^*\) in the triangle \(t_2\) is neither \(e_2\) nor \(e_4\), so it is unaffected by Breaker's choice of \(a_1=e_4\).
	Finally, Maker picks \(e_2\).
	The triangle \(t_1\) is now a threat:
	By the assumption on the labeling, any triangle using three edges from \(\{e_1,e_2,e_3,e_4\}\) uses the edges \(e_1,e_2,e_4\), so \(f\neq a_1=e_4\).
	Additionally, \(f\neq g\) as otherwise \(t_1=t_2\) implying that \(e_1,e_2\) and \(e_3\) form a triangle, a contradiction.
	Thus, Breaker responds with \(a_3=f\) to avoid losing in the next round.
	But \(e_1\) and \(e_2\) are connected by a path in \(K_1\), so Maker claimed two edges in the same connected component of \(\etgraph{G}-N[\{a_1,a_2,a_3\}]\) leading to Maker's win by \cref{lem:maker-winning-structure}.
	
	\labelcref{item:k4-chara-good-anchor}: We may assume that none of the other conditions is satisfied.
	In particular, any connected component of \(\etgraph{G}-\trias{U}\) except \(K\) contains only a single element of \(E^*\) as conditions \labelcref{item:k4-chara-3-in-comp,item:k4-chara-double-cycle} are not satisfied and Breaker wins \(\gamehg{G[E\cap K_e]}^{+e}\) for all \(e\in E^*\) as \labelcref{item:k4-chara-biased-win} is not satisfied.
	For any \({e\in E^*\setminus\{e_1,e_2\}}\), \(\gamehg{G}\) decomposes into \(H_1=\gamehg{G}-\left(K_e\setminus\{e\}\right)\) and \(H_2=\gamehg{G[E\cap K_e]}\) with \(V(H_1)\cap V(H_2)=\{e\}\).
	By \cref{cor:k4-preprocessing}, we may therefore assume that \(K_e=\{e\}\) for all \(e\in E^*\setminus\{e_1,e_2\}\).
	Assume for contradiction that Breaker wins on \(\gamehg{G}\).
	We will show that Breaker then also wins on \(\gamehg{G}+\{e_1,e_2\}\).
	
	\begin{claim}\label{claim:k4-beaker-choice}
		There exists a winning strategy for Breaker on \(\gamehg{G}\) such that its first claimed edge from \(E^*\) against any Maker strategy is an element of \(\{e_1,e_2\}\).
	\end{claim}
	\begin{claimproof}[Proof of \cref{claim:k4-beaker-choice}]
		Let any Maker strategy \(M\) be given and consider the first round \(i+1\) in which a fixed winning strategy \(B\) for Breaker claims an edge \(a\) from \(E^*\) playing against \(M\).
		If \(a\in\{e_1,e_2\}\), there is nothing to show, so assume this is not the case.
		We have already shown that Breaker must claim an edge from \(E^*\) when Maker does so for the first time.
		Therefore, no element of \(E^*\) was claimed by Maker or Breaker before round \(i+1\).
		Let \(E_b^i\) be the set of edges Breaker claimed up to and including round \(i\).
		Let \(K\) be the connected component of \(\etgraph{G}-N[E_b^i]\) containing \(E^*\).
		The graph \(\etgraph{G[U]}-N[e]\) is a tree for any \(e\in E^*\).
		Additionally, \(\{e_1,e_2\}\) separates \(\etgraph{G[U]}\) from the rest of \(K\), so \(K-N[a]\) is connected.
		Thus, at most one edge in \(K\) can be claimed by Maker after round \(i+1\) by \cref{lem:maker-winning-structure}.
		In particular, one of \(e_1\) and \(e_2\) is not claimed by Maker in round \(i+1\).
		Let \(a^*\in\{e_1,e_2\}\) be such an edge.
		Breaker wins on all connected components of \(\etgraph{G}-N[E_b^i]\) by playing according to \(B\).
		Thus, it is enough to show Breaker wins on \(K\) even if Breaker claims \(a^*\) instead of \(a\) in round \(i+1\).

		If Maker has no edge claimed in \(K-N[a^*]\) after round \(i+1\), Breaker can clearly win on \({G[E\cap K]}\) as Breaker wins on \(G\).
		Otherwise, Maker has exactly one edge \(e_m\) claimed in \(K-N[a^*]\).
		Let \({a'\in\{e_1,e_2\}\setminus\{a^*\}}\) and set \({H=\gamehg{G[K\setminus\{a^*\}]}^{+e_m}}\).
		Assume that \(e_m\neq a'\).
		Then \(H\) decomposes into \({H=H_1\cup H_2}\) with \({H_1=H[E^*\setminus\{a^*\}]}\) and \(H_2=H[E\cap K\cap K_{a'}]\) such that \({V(H_1)\cap V(H_2)=\{a'\}}\).
		We distinguish two cases:
		
		First, assume that \(e_m\in E^*\).
		Breaker wins on \(H_2^{+a'}\) as Breaker wins on \(\gamehg{G[E\cap K]}^{+a'}\) by playing according to \(B\).
		Applying \cref{cor:k4-preprocessing} reduces the outcome of the game on \(H\) to the outcome of the game on \(H_1\).
		Breaker wins on \(H_1\) as \(H_1\) is acyclic.
		
		Now assume \(e_m\in K\setminus E^*\).
		Breaker wins on \(H_1^{+a'}\) as \(H_1\) is acyclic.
		Applying \cref{cor:k4-preprocessing} reduces the outcome of the game on \(H\) to the outcome of the game on \(H_2\).
		Breaker wins on \(\gamehg{G[E\cap K_{a'}]}^{+e_m}\) by playing according to \(B\).
		As \(H_2\subseteq\gamehg{G[E\cap K_{a'}]}^{+e_m}\), Breaker also wins on \(H_2\).
		
		Finally, assume \(e_m=a'\).
		Then \(H_1\coloneqq H[N_H[E^*\setminus\{a^*\}]]\) and \(H_2\coloneqq H-V(H_1)\) are not connected in \(H\).
		Breaker wins on \(H_2\) as \(H_2\subseteq\gamehg{G[E\cap K_{e_m}]}^{+e_m}\) and Breaker wins \(\gamehg{G[E\cap K_{e_m}]}^{+e_m}\).
		As \(H_1\) is a forest, Breaker wins on \(H_1\), so that Breaker also wins on \(H\).
	\end{claimproof}
	
	Let a winning strategy \(M\) for Maker in \(\gamehg{G}+\{e_1,e_2\}\) be given.
	As \(\gamehg{G}+\{e_1,e_2\}\) and \(\gamehg{G}\) have the same vertices, we can also view the Breaker strategy \(B\) of \cref{claim:k4-beaker-choice} for the Maker-Breaker game on \(\gamehg{G}\) as a strategy for the Maker-Breaker game on \(\gamehg{G}+\{e_1,e_2\}\).
	Consider the game of \(M\) against \(B\) on \(\gamehg{G}+\{e_1,e_2\}\).
	As \(G[U]\) contains triangles, the strategy \(B\) must claim at least one element of \(E^*\), which by \cref{claim:k4-beaker-choice} is an element of \(\{e_1,e_2\}\).
	Thus, \(M\) does not claim both elements \(e_1\) and \(e_2\) of the~2-edge \(\{e_1,e_2\}\).
	This is the only hyperedge in \(\gamehg{G}+\{e_1,e_2\}\) that does not exist in \(\gamehg{G}\), so the winning strategy \(M\) claims all elements of a hyperedge from \(\gamehg{G}\).
	Then \(M\) also wins against \(B\) on \(\gamehg{G}\).
	However, this is a contradiction as \(B\) is a winning strategy for Breaker on \(\gamehg{G}\).
	
	We now show that one of the conditions~\labelcref{item:k4-chara-3-in-comp} to~\labelcref{item:k4-chara-good-anchor} is necessary for Maker's win.
	To that end, we give a winning strategy for Breaker assuming that none of the conditions are satisfied.
	As~\labelcref{item:k4-chara-3-in-comp} is not satisfied, every connected component of \(\etgraph{G}-\trias{U}\) contains at most two elements of \(E^*\).
	If there exists a connected component of \(\etgraph{G}-\trias{U}\) containing exactly two elements \(e_1\) and \(e_2\) of \(E^*\), Breaker wins on \(\gamehg{G}+\{e_1,e_2\}\) as~\labelcref{item:k4-chara-good-anchor} is not satisfied.
	In particular, Breaker also wins on \(\gamehg{G}\).
	Otherwise, every connected component of \(\etgraph{G}-\trias{U}\) contains exactly one element of \(E^*\), so \(\gamehg{G}\) decomposes into \(H_1=\gamehg{G}-\left(K_e\setminus\{e\}\right)\) and \(H_2=\gamehg{G[E\cap K_e]}\) with \(V(H_1)\cap V(H_2)=\{e\}\) for every \(e\in E^*\).
	Breaker wins \(\gamehg{G[E\cap K_e]}^{+e}\) for all \(e\in E^*\) as~\labelcref{item:k4-chara-biased-win} is not satisfied.
	Thus, we may assume that \(G\cong K_4\) by \cref{cor:k4-preprocessing}.
	
	It remains to show that Breaker wins on \(\gamehg{K_4}\).
	Let \(V(K_4)=\{1,2,3,4\}\) as in \cref{fig:etgraph-example}.
	By symmetry, we may assume that Maker first claims the edge \(e_1=\{1,2\}\).
	Breaker responds with the edge \({a_1=\{2,3\}}\).
	The state of the game is visualized in \cref{fig:k4-loss}~(a).
	If Maker does not claim \(e_2=\{1,4\}\), Breaker claims \({a_2=\{1,4\}}\).
	Then Breaker has already won as he has claimed one edge from each triangle.
	This is visualized in \cref{fig:k4-loss}~(b).
	Otherwise, Maker claims \(e_2=\{1,4\}\) and Breaker is forced to reply with \(a_2=\{2,4\}\).
	This situation is visualized in \cref{fig:k4-loss}~(c).
	Finally, Breaker claims the last remaining edge after Maker's move and wins.
\end{proof}

The conditions~\labelcref{item:k4-chara-3-in-comp,item:k4-chara-good-anchor} can also be summarized as follows:
Whenever two different edges \(e_1\) and \(e_2\) from \(E^*\) lie in the same connected component of \(\etgraph{G}-\trias{U}\), the~2-edge \(\{e_1,e_2\}\) may be added to \(\gamehg{G}\) without changing the outcome of the game.
As a connected hypergraph containing at least two size~2 hyperedges is always Maker's win, this captures the conditions~\labelcref{item:k4-chara-3-in-comp,item:k4-chara-good-anchor}.

Using \cref{thm:k4-elimination}, we obtain the algorithm for board graphs containing \(K_4\) as a subgraph.

\begin{figure}[t]
	\centering
	\[
	\begin{array}{c@{\hspace{1em}}c@{\hspace{1em}}c}
		\vcenter{\hbox{\includegraphics[scale=0.695]{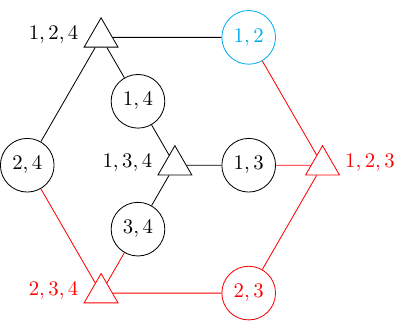}}} &
		\vcenter{\hbox{\includegraphics[scale=0.695]{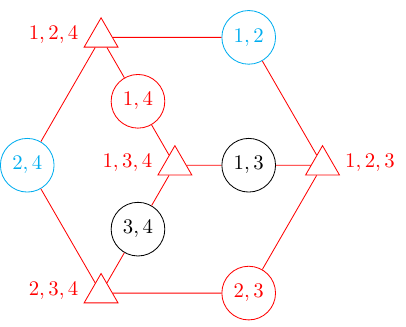}}} &
		\vcenter{\hbox{\includegraphics[scale=0.695]{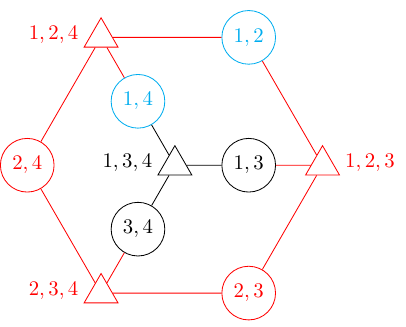}}} \\[3pt]
		\text{(a) The state after the first round.} &
		\text{(b) Maker does not claim \(\{1,4\}\)} &
		\text{(c) Maker claims \(\{1,4\}\)}\\
		&
		\text{in the second round.} &
		\text{in the second round.} 
	\end{array}
	\]
	\caption{The edge-triangle incidence graph \(\etgraph{K_4}\) during the unbiased triangle game.
		The edge-vertices are named as in \cref{fig:etgraph-example}.
		Cyan edge-vertices are claimed by Maker, red edge-vertices are claimed by Breaker.
		Triangle-vertices and edges made irrelevant for Maker by Breaker's choices are also colored red.}
	\label{fig:k4-loss}
\end{figure}

\begin{theorem}\label{thm:k4-alg}
	Let \(G\) be a graph such that \(\etgraph{G}\) is connected and \(K_4\subseteq G\).
	Then the winner of the game on \(\gamehg{G}\) can be determined in time \(\Oh{\cliqueTime}\).
\end{theorem}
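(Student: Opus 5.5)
The plan is to turn \cref{thm:k4-elimination} into an algorithm whose only superlinear steps are finding a copy of \(K_4\) and building \(\etgraph{G}\). First, find a set \(U\) with \(G[U]\cong K_4\). This costs time \(\Oh{\cliqueTimeAddition}\). For dense graphs, use the classical approach: for every vertex \(v\), detect a triangle in \(G[N(v)]\) via matrix multiplication, for \(\Oh{n\cdot n^{\omega}}\) in total. For sparse graphs, enumerate all triangles in \(\Oh{m^{1.5}}\), and for each edge \(uv\) mark the common neighbours \(N(u)\cap N(v)\). A \(K_4\) exists if and only if some triangle-apex pair \(w,x\in N(u)\cap N(v)\) is adjacent, which can be checked with an \(\Oh{m^2}\) bound by scanning adjacency over pairs of edges. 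Second, compute \(\gamehg{G}\) and \(\etgraph{G}\) in time \(\Oh{n+m^{1.5}}\) by \cref{lem:computing-etg}. From here on, all work must be linear in \(|V(\etgraph{G})|+|E(\etgraph{G})|=\Oh{m+t}=\Oh{m^{1.5}}\).

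Next, evaluate the three conditions of \cref{thm:k4-elimination}. Condition~\labelcref{item:k4-chara-3-in-comp} and the component structure needed for~\labelcref{item:k4-chara-good-anchor} come from a single BFS/connected-components computation in \(\etgraph{G}-\trias{U}\), recording which of the six edges of \(E^*\) lie in each component. If some component contains three of them, output ``Maker wins''. For condition~\labelcref{item:k4-chara-biased-win}, for each of the six \(e\in E^*\), compute \(K_e\) by BFS in \(\etgraph{G}-N[E^*\setminus\{e\}]\). Then form the hypergraph \(\gamehg{G[E\cap K_e]}^{+e}\), which is connected (since \(K_e\) is connected) and contains \(2\)-edges, and is therefore not \(3\)-uniform. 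Decide it with \cref{cor:kutz-non-3-uniform} in linear time. Connectivity of the hypergraph follows because the incidence graph of \(\gamehg{G[E\cap K_e]}\) is exactly \(\etgraph{G}[K_e]\) together with the triangles fully inside it. One point I must check is that every triangle-vertex of \(K_e\) has all its edges in \(K_e\); this holds since removing \(N[\cdot]\) deletes only whole neighbourhoods. If \(e\) itself lies in no triangle of \(K_e\), the hypergraph is empty and Breaker wins trivially.

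For condition~\labelcref{item:k4-chara-good-anchor}, there are at most three components that each contain exactly two edges of \(E^*\), because the edges of \(E^*\) are distributed among components. For each such pair \(\{e_1,e_2\}\), form \(\gamehg{G}+\{e_1,e_2\}\). This hypergraph is connected, since \(\etgraph{G}\) is connected, and not \(3\)-uniform. Decide it by \cref{cor:kutz-non-3-uniform} in time \(\Oh{m+t}\). Altogether the algorithm makes at most \(6+3=9\) linear-time calls, which is consistent with the paper's claim of ``at most seven'' games (the count could be tightened). Correctness is exactly \cref{thm:k4-elimination}, so the total running time is \(\Oh{n+m^{1.5}+\cliqueTimeAddition}=\Oh{\cliqueTime}\), using \(m^{1.5}\le m^2\) and \(m^{1.5}\le n^3\le n^{\omega+1}\).

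The main obstacle I expect is the \(K_4\)-finding step with the stated bound \(\min\{n^{\omega+1},m^2\}\), and in particular arguing the \(m^2\) branch cleanly. My first attempt would be this: for each edge \(uv\), compute \(C_{uv}=N(u)\cap N(v)\) in \(\Oh{\deg u+\deg v}\), which is \(\Oh{nm}\) overall. Then test whether \(G[C_{uv}]\) contains an edge by scanning the edges of \(G\) against a marker array, which costs \(\Oh{m}\) per edge \(uv\) and \(\Oh{m^2}\) in total. Any edge \(wx\) inside \(C_{uv}\) yields the witness \(U=\{u,v,w,x\}\). Since \(nm\le m^2\) whenever no isolated vertices remain (after the removal in \cref{lem:computing-etg}), this gives the bound. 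A secondary subtlety is verifying the hypotheses of \cref{cor:kutz-non-3-uniform}: linearity is inherited from \(\gamehg{G}\), and the rank is at most \(3\).
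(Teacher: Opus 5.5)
Your plan is the paper's: find a \(K_4\) in time \(\Oh{\cliqueTimeAddition}\) (your per-vertex matrix multiplication and common-neighbourhood scans are a fine substitute for the cited algorithms), build \(\etgraph{G}\), and evaluate the conditions of \cref{thm:k4-elimination} with a constant number of linear-time calls to \cref{cor:kutz-non-3-uniform}. However, two of your checks of the hypotheses of \cref{cor:kutz-non-3-uniform} are wrong.

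First, \(\gamehg{G}+\{e_1,e_2\}\) does not inherit linearity from \(\gamehg{G}\). Two edges of \(G[U]\) that share an endpoint lie in a common triangle \(t\in\trias{U}\), and such a pair can be exactly the two edges of \(E^*\) in a component of \(\etgraph{G}-\trias{U}\). For example, take \(K_4\) on \(\{1,2,3,4\}\) and add adjacent vertices \(5,6\), with \(5\) joined to \(1,2\) and \(6\) joined to \(1,3\). The component of \(\{1,2\}\) then contains \(\{1,2\},\{1,3\}\) and no other edge of \(E^*\). The triangle \(t\supsetneq\{e_1,e_2\}\) meets the new \(2\)-edge in two vertices. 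So \cref{cor:kutz-non-3-uniform}, which rests on Kutz's results for linear hypergraphs (\cref{thm:kutz-breaker-chara,lem:cutpoint-split}), does not apply. The paper repairs this by deleting \(t\). A hyperedge strictly containing another never affects the outcome. The result is linear because \(t\) is the only triangle through both \(e_1\) and \(e_2\), and it is still connected and almost \(3\)-uniform.

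Second, \(\gamehg{G[E\cap K_e]}^{+e}\) need not be connected. Connectivity of \(K_e\) gives connectivity of \(\gamehg{G[E\cap K_e]}\), but the incidence graph of \(\gamehg{G[E\cap K_e]}^{+e}\) is \(\etgraph{G}[K_e]-e\), and \(e\) can be a cut vertex. For instance, let \(e=uv\in E^*\) lie in two triangles \(uvw_1,uvw_2\) with \(w_1,w_2\notin U\) adjacent only to \(u\) and \(v\). Then you obtain two disjoint \(2\)-edges, which is Breaker's win. The procedure of \cref{cor:kutz-non-3-uniform}, however, declares Maker's win as soon as two \(2\)-edges are present, which is valid only for connected hypergraphs. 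Your algorithm would therefore report condition~\labelcref{item:k4-chara-biased-win}, and hence Maker's win, on a graph that is Breaker's win. The fix is to call \cref{cor:kutz-non-3-uniform} on each component separately; this is what the paper means by ``each connected component \dots\ contains a \(2\)-edge''. Every component of \(\etgraph{G}[K_e]-e\) contains a triangle through \(e\), hence a \(2\)-edge, so none is \(3\)-uniform. Then use that Maker wins a disjoint union if and only if he wins on some component; the total work remains linear. With these two repairs your argument gives the claimed bound.
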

\begin{proof}
	Compute \(\etgraph{G}\) using \cref{lem:computing-etg}.
	To obtain a set \(U\subseteq V(G)\) such that \(G[U]\cong K_4\), we run two algorithms in parallel and choose \(U\) from the output of whichever one terminates first.
	The first algorithm is the one of Dalirrooyfard et.~al.~\cite[Theorem~1.5]{DBLP:conf/stoc/DalirrooyfardMW24}, which finds a single clique of size~4 in time \(\Oh{n^{\omega+1}}\).
	The second algorithm is the one of Eisenbrand and Grandoni~\cite[Proposition~1]{eisenbrand2004fixedparameterclique}, which enumerates all cliques of size~4 in time \(\Oh{m^2}\).
	This allows us to find the set \(U\) in time \(\Oh{\cliqueTimeAddition}\).
	Set \({E^*=E(G[U])}\).
	
	Compute the connected components of \(\etgraph{G}-\trias{U}\) with a breadth-first search in time \(\Oh{|V(\etgraph{G}|)}\).
	Check condition~\labelcref{item:k4-chara-3-in-comp} of \cref{thm:k4-elimination} by counting the elements of \(E^*\) in each connected component.
	If the conditions is satisfied, Maker wins on \(\gamehg{G}\).
	Otherwise, determine the winner of the games on \({\gamehg{G[E(G)\cap K_e]}^{+e}}\) for \(e\in E^*\), where \(K_e\) is the connected component of \(e\) in \(\etgraph{G}-N[E^*\setminus\{e\}]\).
	Each connected component of \({\gamehg{G[E(G)\cap K_e]}^{+e}}\) contains a~2-edge and is thus not~3-uniform.
	Thus, this can be done in time \(\Oh{m+t}\) using \cref{cor:kutz-non-3-uniform}.
	If Maker wins \(\gamehg{G[E(G)\cap K_e]}^{+e}\) for any \(e\in E^*\), Maker wins on \(\gamehg{G}\) by condition~\labelcref{item:k4-chara-biased-win} of \cref{thm:k4-elimination}.
	
	Otherwise, only condition~\labelcref{item:k4-chara-good-anchor} can lead to Maker's win.
	If \(\etgraph{G}-\trias{U}\) does not contain a connected component containing exactly two elements \(e_1,e_2\) of \(E^*\), Breaker wins on \(\gamehg{G}\).
	Otherwise, the winner of the game on \(\gamehg{G}\) is the same as the winner of the Maker-Breaker game played on \(\gamehg{G}+\{e_1,e_2\}\).
	The hypergraph \(\gamehg{G}+\{e_1,e_2\}\) is connected and almost~3-uniform.
	However, it may not be linear due to the new edge \(\{e_1,e_2\}\).
	In this case, remove the single edge that is a strict superset of \(\{e_1,e_2\}\) to obtain an equivalent hypergraph \(H\).
	Otherwise, let \(H\) be \(\gamehg{G}+\{e_1,e_2\}\).
	The outcome of the game on \(H\) can be determined in time \(\Oh{m+t}\) using \cref{cor:kutz-non-3-uniform}.

	The algorithm of \cref{cor:kutz-non-3-uniform} is executed at most~7 times.
	Note that \(z=\Oh{m^{1.5}}\), so the summand \(t\) may be omitted.
	This results in the stated running time.
\end{proof}

\subsection{Characterizing Maker's win when \textsf{ET}(G) is a Cactus Graph}\label{sec:cactus-alg}
Recall that a graph is a cactus graph if all its cycles are edge-disjoint.
The following structures are decisive for Maker's win when \(\etgraph{G}\) is a cactus graph:

\begin{definition}\label{defn:tadpole-fork}
	Let \(G\) be a graph.
	A tadpole \((C,P,e)\) in \(\etgraph{G}\) consists of a cycle \(C\), a path \(P\), possibly consisting only of a single vertex, and an edge-vertex \(e\in E(G)\) such that \({N[\triasCap{C}]\cap N[\triasCap{P}]\subseteq\{e\}}\) and \({V(C)\cap V(P)=\{e\}}\).
	An edge \(e'\in E(G)\) lies on a tadpole if there exists a tadpole \((C,P,e)\) such that \(e'\) is contained in \(V(C)\cup V(P)\).
	A tadpole-fork \((C_1,C_2,P,e_1,e_2)\) in \(\etgraph{G}\) consists of two cycles \(C_1\) and \(C_2\) as well as a path \(P\), possibly consisting only of a single vertex, with the following additional two properties:
	\begin{minipage}{.499\linewidth}
		\begin{itemize}
			\item For \(i\in\{1,2\}\), \((C_i,P,e_i)\) is a tadpole.
		\end{itemize}
	\end{minipage}
	\hfill
	\begin{minipage}{.499\linewidth}
		\begin{itemize}
			\item \(N[\triasCap{C_1}]\cap N[\triasCap{C_2}]\subseteq V(P)\).\qedhere
		\end{itemize}
	\end{minipage}
\end{definition}

The names tadpole and tadpole-fork follow the terminology of~\cite{galliot2025makerbreakersolvedpolynomialtime}.
A tadpole-fork is a fork where the involved dangers are tadpoles, motivating the name.
The second condition of a tadpole-fork ensures that Breaker cannot respond to both threats caused by Maker claiming an edge-vertex of \(P\) at the same time.
An example of a tadpole-fork is displayed in \cref{fig:tadpole-fork}.
\Cref{thm:cactus-game-chara} characterizes Makers win when the edge-triangle incidence graph is a cactus graph using tadpole-forks.

\begin{figure}[!t]
	\centering
	\includegraphics[scale=0.7]{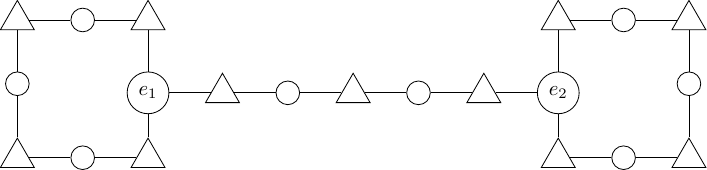}
	\caption{A tadpole-fork.
		Edge-vertices with degree~1 are not drawn.}
	\label{fig:tadpole-fork}
\end{figure}

\begin{theorem}\label{thm:cactus-game-chara}
	Let \(G\) be a graph such that \(\etgraph{G}\) is a cactus graph.
	Then Maker wins on \(\gamehg{G}\) if and only if \(\etgraph{G}\) contains a tadpole-fork.
\end{theorem}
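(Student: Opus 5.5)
We prove the two directions separately; both rest on \cref{lem:maker-winning-structure}, so it suffices to reason about achieving (WC).

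\emph{Sufficiency.} Let \((C_1,C_2,P,e_1,e_2)\) be a tadpole-fork and let \(f\) be the endpoint of \(P\) other than \(e_1,e_2\); if \(P\) is a single vertex, \(f=e_1=e_2\). Maker first claims \(f\). The plan is to show that Breaker cannot cut both threats with one move.

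Since \(\etgraph{G}\) is a cactus, removing \(N[a]\) for a single edge \(a\) destroys only the triangle-vertices containing \(a\). If \(a\notin N[\triasCap{P}]\), the path \(P\) survives. Because \(N[\triasCap{C_1}]\cap N[\triasCap{C_2}]\subseteq V(P)\), such an \(a\) can meet at most one of \(C_1,C_2\), so the other cycle \(C_j\) survives. Maker then moves along \(P\) towards \(e_j\), halving the remaining distance each time as in the remark after \cref{lem:maker-winning-structure}, or step by step as in its proof. Each threat on the path forces Breaker's answer, and by the tadpole condition \(N[\triasCap{C_j}]\cap N[\triasCap{P}]\subseteq\{e_j\}\) these forced answers never touch \(C_j\).

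Once Maker holds an edge-vertex on \(C_j\) whose cycle is intact, he claims a second edge-vertex of \(C_j\). This works because a single Breaker edge removes at most one triangle-vertex region of a cycle, so the two claimed edge-vertices remain in one component and (WC) follows.

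If instead \(a\in N[\triasCap{P}]\), it cuts \(P\) at one triangle, but \(f\) still lies on the part of \(P\) attached to one of the two tadpoles. To make this work, I would set up the Maker strategy as a forced sequence of threats on \(P\), so that every Breaker response is determined. That way we never have to examine a free Breaker move.

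\emph{Necessity.} Assume Maker wins. By \cref{thm:monotonicity,lem:only-cycle-wc,lem:only-cycle-first}, we may take \(M\in M(G)\) whose first edge \(e_1\) lies on a cycle \(C\), and whose every later edge, up to achieving (WC), lies on a cycle of \(\etgraph{G}-N[E_b^i]\) inside the current monotone component.

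We then build a Breaker strategy against such strategies when no tadpole-fork exists. Since \(\etgraph{G}\) is a cactus, every edge-vertex lies on at most a bounded, tree-like family of cycles, which meet in single vertices. When Maker claims an edge-vertex \(x\), Breaker kills all cycles through \(x\), which are the local threats, and simultaneously separates \(x\) from Maker's previous edge, which is the global threat.

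The absence of a tadpole-fork is exactly what guarantees that one edge does both jobs. Suppose \(x\) lay on two cycles, or on one cycle together with a path to the old component, whose triangle neighbourhoods meet only along a connecting path. Then these two cycles, together with the path back to the previous edge, would form a tadpole-fork. Hence some triangle-vertex is shared by all these threats, and Breaker claims an edge of it.

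By induction Maker never obtains two edge-vertices in one component of \(\etgraph{G}-N[E_b^i]\), so Breaker wins by \cref{lem:maker-winning-structure}.

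The main obstacle is this necessity direction. The difficulty is making precise that a failure of Breaker's single-move response yields the two disjoint-neighbourhood tadpoles on a common path, given that the cycles through \(x\) and the connection to earlier claims may interact. I would first handle the first two rounds exhaustively via the block structure of the cactus: the blocks are single cycles and bridges. I would then argue that later rounds reduce to the same configuration inside the shrinking component.
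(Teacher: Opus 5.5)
Both directions have a genuine gap.

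\textbf{Sufficiency.} You misread the structure. The tadpoles \((C_1,P,e_1)\) and \((C_2,P,e_2)\) share the \emph{same} path \(P\), which contains both \(e_1\) and \(e_2\). A tadpole-fork is therefore cycle--path--cycle, and there is no endpoint of \(P\) ``other than \(e_1,e_2\)''.

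Opening at an endpoint \(f\) of \(P\) is the wrong move anyway. Both dangers you want to create leave \(f\) through the same first triangle-vertex of \(P\). One Breaker edge of that triangle therefore cuts \(f\) off from \(e_1\) and \(e_2\) along \(P\). For example, let \(P=f\,t\,e\) with \(e_1=e_2=e\). Breaker answers \(f\) by claiming \(e\), which deletes \(t\) and the triangles of \(C_1\) and \(C_2\) at \(e\). If \(\etgraph{G}\) consists only of \(C_1\), \(C_2\) and \(t\) with its leaves, what remains is a forest and Breaker wins, although \(G\) is Maker's win.

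So your claim that \(f\) ``still lies on the part of \(P\) attached to one of the two tadpoles'' is false. The case \(a\in N[\triasCap{P}]\) is only stated as an intention. Breaker's reply to \(f\) is never forced, so your plan to make every response determined cannot start there.

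The missing idea is to open where the two dangers diverge, as the paper does. Maker claims \(e_1\), so Breaker must answer in \(N(\triasCap{C_1})\). By \(N[\triasCap{C_1}]\cap N[\triasCap{P}]\subseteq\{e_1\}\) and \(N[\triasCap{C_1}]\cap N[\triasCap{C_2}]\subseteq V(P)\), this answer meets neither \(P\) nor \(C_2\). Maker then claims \(e_2\). By \(N[\triasCap{C_2}]\cap N[\triasCap{P}]\subseteq\{e_2\}\), the threat along \(P\) and the threat on \(C_2\) could only be answered together at \(e_2\), which Maker owns. If \(e_1=e_2\), the first move is already a double cycle threat.

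\textbf{Necessity.} Your invariant ``at most one Maker edge per component'', together with the rule ``destroy all cycles through \(x\) and cut \(x\) from Maker's previous edge'', is too weak. A failure of this rule also does not yield a tadpole-fork.

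Here is a counterexample. Let \(C_1\) be the cycle formed by the spokes of a wheel with at least four spokes. Let \(v_2\in V(C_1)\) be one of its triangles, with spoke \(x\) and rim edge \(v_3\). At \(v_3\), attach a chain of triangles ending at an edge-vertex \(e'\) of a second cycle \(C'\). Then \((C',Q,e')\) with \(Q=x\,v_2\,v_3\dots e'\) is a tadpole. Every path from an edge-vertex of \(C_1\) to \(C'\) passes through a triangle-vertex of \(C_1\), so there is no tadpole-fork.

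Now suppose Maker claims \(x\) and Breaker destroys \(C_1\) at a triangle other than \(v_2\). Your rule is satisfied. Yet Maker answers with \(e'\) and threatens both along \(Q\) and on \(C'\), and no single edge can parry both.

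The paper's Breaker strategy therefore maintains a stronger invariant: no Maker edge lies on a tadpole in \(\etgraph{G}-N[E_b^i]\). It uses the fact that, without a tadpole-fork, cycles are vertex-disjoint. It then shows that any other tadpole through \(e_i\) must leave through some \(v_2\in V(C_1)\), and Breaker claims exactly the next edge-vertex \(v_3\), killing the cycle and the tadpole at once. If the component already contains a Maker edge \(e^*\), Breaker claims an edge of the first vertex of a shortest path from \(V(C_1)\) to \(e^*\). The invariant for \(e^*\) forces that vertex to be a triangle-vertex, and the same invariant excludes a new tadpole at \(e_i\).
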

\begin{proof}
	Let \(E_b^i\) be the edges claimed by Breaker up to and including round \(i\).
	We first prove that the presence of a tadpole-fork in \(\etgraph{G}\) is sufficient for Maker's win.
	Let \((C_1,C_2,P,e_1,e_2)\) be a tadpole-fork.
	Maker first claims \(e_1\).
	We distinguish the cases \(e_1=e_2\) and \(e_1\neq e_2\).
	In the first case, the properties of a tadpole-fork imply \(N[\triasCap{C_1}]\cap N[\triasCap{C_2}]=\{e_1\}\), so Breaker cannot claim any edge that will destroy both \(C_1\) and \(C_2\).
	Thus, \(e_1\) lies on a cycle in \(\etgraph{G}-N[E_b^i]\).
	By \cref{lem:maker-winning-structure}, Maker then wins by claiming any edge \(e_1\neq e_3\in E\setminus E_b^1\) that lies on a shortest cycle with \(e_1\) in \(\etgraph{G}-N[E_b^i]\) as Maker will have claimed two edges in the same connected component of \(\etgraph{G}-N[E_b^2]\) regardless of Breaker's second chosen edge.

	In the second case, Breaker must claim an edge in \(N[\triasCap{C_1}]\) as otherwise Maker will achieve victory by claiming an edge that lies on a shortest cycle with \(e_1\).
	As \({N[\triasCap{C_1}]\cap N[\triasCap{C_2}]\subseteq V(P)}\), Breaker cannot have claimed an element of \(N[\triasCap{C_2}]\), so Maker claims \(e_2\).
	If Breaker does not respond with an element of \(N[\triasCap{P}]\), Maker wins by \cref{lem:maker-winning-structure} as \(e_1\) and \(e_2\) both lie on \(P\).
	As \(N[\triasCap{C_2}]\cap N[\triasCap{P}]\subseteq\{e_2\}\), \(e_2\) lies on a cycle in \(\etgraph{G}-N[E_b^2]\).
	By \cref{lem:maker-winning-structure}, Maker wins by claiming an edge that lies on a shortest cycle containing \(e_2\).
	
	We now show that a tadpole-fork is necessary for Maker's win.
	Assume for contradiction that Maker wins on \(\gamehg{G}\) although \(\etgraph{G}\) does not contain a tadpole-fork.
	By \cref{lem:only-cycle-wc,lem:only-cycle-first}, there exists a winning strategy \(M\) for Maker that only claims edges on cycles until (WC) is achieved.
	We inductively construct a Breaker strategy that only allows \(M\) to claim at most a single edge in any connected component of \(\etgraph{G}-E_b^i\) after any round \(i\) while also ensuring that all edges claimed by \(M\) do not lie on a tadpole in \(\etgraph{G}-E_b^i\).
	Then \(M\) never achieves (WC), a contradiction.
	Setting \(E_b^0\coloneqq\emptyset\), this is clearly satisfied in round~0.
	Let the strategy already be constructed up to round \(i-1<m/2\).
	In round \(i\), Maker claims the edge \(e_i\).
	As \(\etgraph{G}\) is a cactus graph without a tadpole-fork, cycles in \(\etgraph{G}\) are vertex-disjoint.
	In particular, there is a unique cycle \(C_1\) containing \(e_i\).
	Let \(P_1\) be the path consisting only of the edge-vertex \(e_i\).
	
	First, assume \(e_i\) is the only edge-vertex claimed by Maker in \(\concomp{e_i}{E_b^{i-1}}\).
	If \(e_i\) only lies on the tadpole \((C_1,P_1,e_i)\), Breaker claims an arbitrary edge-vertex from \(V(C_1)\).
	Then \(e_i\) does not lie on a tadpole in \(\etgraph{G}-N[E_b^i]\).
	Now assume \(e_i\) lies on a tadpole \((C_2,P_2,e_2)\) different from \((C_1,P_1,e_i)\).
	Let \(P_2=v_1\dots v_\ell\).
	Because \(C_1\) is the unique cycle containing \(e_i\), \(e_i\) cannot be in \(V(C_2)\), so \(e_i\in V(P_2)\).
	Thus, we may assume that \(P_2\) begins in \(e_i\), i.e.~\(v_1=e_i\).
	Assume for contradiction that \(v_2\notin V(C_1)\).
	Then \(N[\triasCap{P_2}]\cap N[\triasCap{C_1}]=\{e_i\}\) as otherwise \(e_i\) lies on at least two cycles, which is a impossible.
	This implies \(N[\triasCap{C_1}]\cap N[\triasCap{C_2}]=\emptyset\) as otherwise \(e_i\) would again be contained in at least two cycles.
	But then \((C_1,C_2,P_2,e_i,e_2)\) is a tadpole-fork, a contradiction.
	Thus, \(v_2\in V(C_1)\).
	As \(P_2\) ends in an edge-vertex and \(v_2\) is a triangle-vertex, we obtain \(\ell\geq3\).
	Breaker claims \(v_3\).
	Assume for contradiction that \(e_i\) lies on a tadpole \((C_3,P_3,e_3)\) in \(\etgraph{G}-N[E_b^i]\).
	As before, we may assume that \(P_3\) begins in \(e_i\).
	Set \(Q=P_3v_2\dots v_\ell\).
	Then \((C_2,C_3,Q,e_2,e_3)\) is a tadpole-fork in \(\etgraph{G}\), a contradiction.

	Now assume \(\concomp{e_i}{E_b^{i-1}}\) contains another Maker edge \(e^*\).
	Let \(P=v_1\dots v_\ell\) be a shortest path from \(V(C_1)\) to \(e^*\).
	By construction of the strategy, \(e^*\) does not lie on a tadpole, so \(v_1\in\trias{G}\).
	Breaker claims an element of \(N(v_1)\setminus\{e_i,e^*\}\).
	This separates \(e_i\) and \(e^*\) in \(\etgraph{G}-N[E_b^i]\).
	Assume for contradiction that \(e_i\) lies on a tadpole \((C_2,P_2,e_2)\) in \(\etgraph{G}-N[E_b^i]\).
	Let \(P_3\) be a shortest path from \(e^*\) to \(e_2\) in \(\etgraph{G}-N[E_b^{i-1}]\).
	As \(C_2\) is the unique cycle containing \(e_2\), \(N[\triasCap{P_3}]\) can only intersect \(V(C_2)\) in a single vertex, namely \(e_2\).
	Thus, \((C_2,P_3,e_2)\) is a tadpole in \(\etgraph{G}-N[E_b^{i-1}]\), contradicting the construction of the strategy.
\end{proof}

Using \cref{thm:cactus-game-chara}, it is possible to determine the winner of the game on \(\gamehg{G}\) without relying on \cref{thm:kutz-breaker-chara} by checking the presence of tadpole-forks in \(\etgraph{G}\).
This is achieved by iterating over the 2-connected components of each connected component in reverse order of discovery by a breadth-first search.
In a cactus graph \(G\), every 2-connected component of size at least two is a cycle.
Assume \(G\) is connected, but not a tree, and let \(H\) be the following graph:
The vertices of \(H\) are the cycles of \(G\) and there exists an edge between two cycles if and only if there exists a path between the two cycles that does not intersect any other cycle in \(G\).
Then \(H\) is a tree and we call it the cycle-cut-tree of \(G\).

\begin{theorem}\label{thm:cactus-time}
	Let \(G\) be a graph such that \(\etgraph{G}\) is a cactus graph.
	Then the winner of the game on \(\gamehg{G}\) can be determined in time \(\Oh{\cactusTime}\).
\end{theorem}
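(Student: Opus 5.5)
The plan is to build \(\etgraph{G}\) with \cref{lem:computing-etg} in time \(\Oh{\cactusTime}\) and then decide the existence of a tadpole-fork in linear time in \(|V(\etgraph{G})|\), which is \(\Oh{m+t}=\Oh{m^{1.5}}\). By \cref{thm:cactus-game-chara}, this decides the game. Since a tadpole-fork lives in one connected component, we treat each component separately; trees contain no cycle and are discarded.

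We first compute the 2-connected components with the Hopcroft--Tarjan algorithm. In a cactus, every 2-connected component of size at least three is a cycle, so we obtain all cycles \(C\) together with their neighbourhoods \(N[\triasCap{C}]\). If two cycles share a vertex, they share either an edge-vertex \(e\) or a triangle-vertex. In the first case we test directly whether \((C_1,C_2,P,e,e)\) with \(P=e\) is a tadpole-fork, which amounts to checking \(N[\triasCap{C_1}]\cap N[\triasCap{C_2}]=\{e\}\). Marking vertices makes this test cost time linear in the two cycle neighbourhoods. The total cost is linear because each triangle-vertex has degree~3 and each vertex lies in few cycle neighbourhoods, by the edge-disjointness of cycles.

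For the general case we use the cycle-cut-tree. For each cycle \(C\), we compute the set of edge-vertices \(e\in V(C)\) from which a tadpole \((C,P,e)\) can leave. These are the \(e\) for which some triangle-vertex adjacent to \(e\) outside \(C\) has no other neighbour in \(N[\triasCap{C}]\). A tadpole-fork then corresponds to two cycles \(C_1,C_2\) joined by a path \(P\) that touches \(N[\triasCap{C_i}]\) only at its endpoints \(e_i\), with \(N[\triasCap{C_1}]\cap N[\triasCap{C_2}]\subseteq V(P)\).

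We process the cycle-cut-tree bottom-up, in reverse BFS discovery order. For each subtree we maintain whether it contains a cycle reachable by a "clean" path to the current attachment vertex, meaning a path whose triangle-vertices avoid the neighbourhoods of that cycle except at its endpoint. This is a dynamic programme with a constant-size state per node: we record whether a tadpole hangs below, and whether the hanging path is still admissible. At a node of degree~\(\geq 2\), or at a cycle with an admissible exit, two admissible tadpoles meeting certify a fork. Each tree edge and each \(\etgraph{G}\)-vertex is handled \(\Oh{1}\) times, so the running time is linear.

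I expect the main obstacle to be correctly encoding the neighbourhood side conditions \(N[\triasCap{C}]\cap N[\triasCap{P}]\subseteq\{e\}\) and \(N[\triasCap{C_1}]\cap N[\triasCap{C_2}]\subseteq V(P)\) locally. They fail exactly when a triangle-vertex adjacent to the cycle, but not on it, shares a pendant edge-vertex with the path, or when the path's first triangle shares edges with \(C\). My approach is to reduce these conditions to checking only the first triangle-vertex on the path next to \(e\), since the cactus structure should prevent any later conflict. I would verify this reduction via the same uniqueness-of-cycles arguments used in the proof of \cref{thm:cactus-game-chara}. The time is then dominated by constructing \(\etgraph{G}\), giving \(\Oh{\cactusTime}\).
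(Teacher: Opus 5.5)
\paragraph{A gap in the locality step.} Your architecture matches the paper's: build \(\etgraph{G}\) with \cref{lem:computing-etg}, reduce via \cref{thm:cactus-game-chara} to detecting a tadpole-fork, and sweep the cycles bottom-up in reverse BFS order in linear time. The gap is the step you flag yourself and leave at ``the cactus structure should prevent any later conflict''. Until it is proved, \(N[\triasCap{C}]\cap N[\triasCap{P}]\subseteq\{e\}\) and \(N[\triasCap{C_1}]\cap N[\triasCap{C_2}]\subseteq V(P)\) are global conditions on \(P\). So neither a constant-size DP state (``clean'', ``still admissible'') nor the linear cost of your pairwise marking tests is justified. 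Also, ``each vertex lies in few cycle neighbourhoods'' is false: an edge-vertex can lie on, or be pendant to, arbitrarily many cycles.

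The missing fact is that in a cactus both conditions hold automatically once \(P\) meets each \(C_i\) only in its edge-vertex \(e_i\). Triangle-vertices cannot violate them, because \(V(P)\cap V(C)=\{e\}\). Suppose an edge-vertex \(f\neq e\) is adjacent to a triangle-vertex \(t\) of \(P\) and either lies on \(C\) or is adjacent to a triangle-vertex \(t_C\) of \(C\). Walk from \(e\) along \(P\) to \(t\) (or to \(f\) if \(f\in V(P)\)), then via \(f\) (and \(t_C\)) onto \(C\), and along \(C\) back to \(e\). This closes a cycle containing a nonempty arc of \(C\) and an edge not on \(C\), contradicting edge-disjointness. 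The same closing argument, through \(P\) and arcs of \(C_1\) and \(C_2\), excludes every \(f\in N[\triasCap{C_1}]\cap N[\triasCap{C_2}]\) outside \(V(P)\). In a cactus two cycles share at most one vertex, and a triangle-vertex (degree~3) lies on at most one cycle. Hence a tadpole-fork is exactly a pair of distinct cycles whose cut vertices towards each other in the block-cut tree are both edge-vertices; these may coincide in a shared vertex.

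\paragraph{Consequences for your algorithm.}
\begin{itemize}
\item Your ``first triangle'' check is vacuous.
\item The marking test for two cycles sharing a vertex always succeeds, so you can stop at the first shared vertex. In any case \(\sum_C|N[\triasCap{C}]|=\Oh{|E(\etgraph{G})|}\) by edge-disjointness.
\item Most importantly for the DP, admissibility is never lost after a path has left its cycle at an edge-vertex. This holds even if the path then runs through other cycles, entering and leaving them at triangle-vertices. For example, two cycles attached by edge-vertex exits to two triangle-vertices of a third cycle form a fork, and a DP that resets the flag there misses it.
\end{itemize}

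One bit per node of the block-cut tree therefore suffices: ``the subtree contains a cycle whose cut vertex towards the parent is an edge-vertex''. A fork is reported when some node has two children with the bit set, or when a cycle has a child cut vertex that is an edge-vertex with the bit set. Run this on the block-cut tree from Hopcroft--Tarjan. The cycle-cut-tree need not be a tree: three cycles attached to a common off-cycle triangle-vertex form a triangle in it.

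The paper's peeling procedure relies on the same characterization, in the form `a leaf cycle attached through a triangle-vertex lies on no fork, and if a fork exists, every cycle attached through an edge-vertex lies on one'. Once you supply the lemma above, your DP is a clean way to finish the proof within \(\Oh{\cactusTime}\).
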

\begin{proof}
	Compute \(\etgraph{G}\) using \cref{lem:computing-etg}.
	Iterate over \(E(\etgraph{G})\) to determine the degree of every vertex.
	Compute the connected components of \(\etgraph{G}\) using breadth-first search and execute the following for each connected component \(K\):
	
	\begin{enumerate}[label=\arabic*]
		\item Compute the maximal 2-connected components \(B_1,\dots,B_k\) of \(K\) with size at least~3.
		These components are the cycles of \(K\).
		\item If \(k\leq 1\), move on to the next connected component or return that \(G\) is Breaker's win if \(K\) is the last remaining component.
		\item\label{alg:cactus_deg_pruning} Execute a BFS in \(K\) starting from an arbitrary vertex of degree at least~2.
		Iterate over the vertices in reverse BFS order, removing any vertex of degree~1 and updating the degree of the parent.
		\item Iterate over each cycle counting the number of vertices of degree at least~3.
		\item Execute a BFS starting in a cycle with the maximal number of vertices of degree at least~3.
		\item\label{alg:cactus_cuttree_order} Iterate over the cycles in reverse order of discovery by the BFS:
		\begin{enumerate}[label=\arabic*]\setcounter{enumii}{\value{enumi}}
			\item Let \(B_i\) be the current cycle.
			If \(B_i\) does not contain a vertex of degree~3, move on to the next connected component or return that \(G\) is Breaker's win if \(K\) is the last remaining component.
			\item\label{alg:cactus_ancestor_removal} If the unique vertex \(v\) of degree~3 in \(B_i\) is a triangle-vertex, remove \(B_i\) from \(K\) and follow the tree edges of the BFS while removing the ancestors of \(v\) until the first ancestor on a cycle is found.
			Then move on to the next cycle.
			\item If \(v\) is an edge-vertex, execute a BFS from \(v\).
			If the first vertex of any cycle discovered by the BFS is an edge-vertex, \(G\) is Maker's win.
			Otherwise, move on to the next connected component of \(\etgraph{G}\) or return that \(G\) is Breaker's win if \(K\) is the last remaining component.
		\end{enumerate}
	\end{enumerate}
	
	The algorithm executes a constant number of breadth-first searches in each connected component of \(\etgraph{G}\).
	The 2-connected components can be computed in linear time~\cite{DBLP:journals/cacm/HopcroftT73}.
	Note that \({t=\Oh{m^{1.5}}}\).
	Thus, we obtain the stated running time bound.
	
	We now show correctness of this algorithm.
	Step~\labelcref{alg:cactus_deg_pruning} ensures that all remaining vertices of \(K\) have degree at least~2 by pruning low degree vertices.
	Because of this, every vertex lies on a cycle or connects two different cycles.
	In particular, the vertices of degree at least~3 connect different cycles.
	
	The iteration order in step~\labelcref{alg:cactus_cuttree_order} corresponds to the reverse BFS order when starting a BFS in a vertex with maximum degree in the cycle-cut-tree of \(\etgraph{G}\).
	This order ensures that each cycle contains at most one vertex of degree at least~3 and this vertex has degree exactly~3 at the time the cycle is processed:
	This is clear for the first processed cycle as otherwise the breadth-first search would not have ended in that cycle.
	The removal of the ancestors of a vertex in step~\labelcref{alg:cactus_ancestor_removal} then ensures that the parent cycle of \(B_i\) will contain at most one vertex of degree at least~3 by the time it is processed.
	The removal step can be seen as pruning a leaf in the cycle-cut-tree of \(\etgraph{G}\).
	
	Any cycle removed by the algorithm cannot contribute to a tadpole-fork as it is connected to the rest of the graph by a triangle-vertex.
	If a tadpole-fork exists in a connected component \(K\), then every cycle connected to the rest of the component by an edge-vertex also lies on a tadpole-fork.
	By \cref{thm:cactus-game-chara}, the presence of a tadpole-fork in \(\etgraph{G}\) determines the winner of the game, so the algorithm is correct.
\end{proof}

\subsection{Lower Bound by Triangle Detection}
\begin{figure}[t]
	\centering
	\begin{equation*}
		\begin{array}{c@{\hspace{10mm}}c}
			\vcenter{\hbox{\includegraphics[scale=0.7]{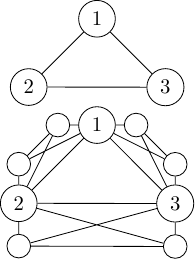}}} &
			\vcenter{\hbox{\includegraphics[scale=0.6]{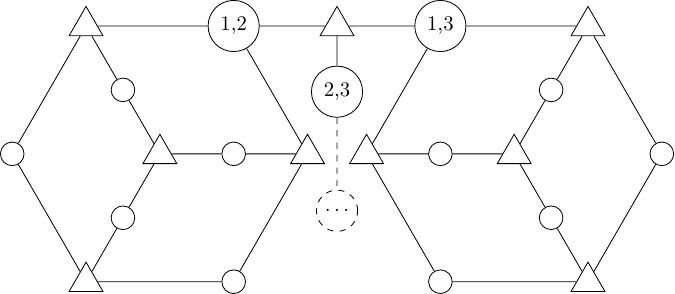}}} \\[2pt]
			G_1 \text{ and } H_1 & \etgraph{H_1} \\[4pt]
			\vcenter{\hbox{\includegraphics[scale=0.7]{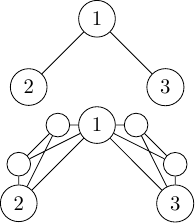}}} &
			\vcenter{\hbox{\includegraphics[scale=0.6]{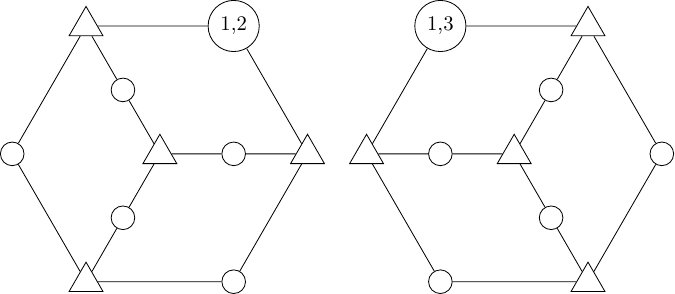}}} \\[2pt]
			G_2 \text{ and } H_2 & \etgraph{H_2}
		\end{array}
	\end{equation*}
	\caption{The graphs \(G_1\) and \(G_2\), their respective image \(H_1\) and \(H_2\) under the reduction from \cref{thm:trig-detection-lb} as well as \(\etgraph{H_1}\) and \(\etgraph{H_2}\).
		The edge-triangle incidence graph \(\etgraph{H_1}\) is only partially drawn.
		The dashed vertex represents the remainder of the copy of \(\etgraph{K_4}\) that uses the edge \(\{2,3\}\).}
	\label{fig:trig_reduction}
\end{figure}

\begin{theorem}\label{thm:trig-detection-lb}
	There exists an algorithm that, given any graph \(G\), computes a graph \(H\) in linear time such that \(K_3\subseteq G\) if and only if Maker wins on \(\gamehg{H}\).
	The graph \(H\) has \(n+2m\) vertices and \(6m\) edges.
\end{theorem}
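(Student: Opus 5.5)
The plan is to replace every edge of \(G\) by a copy of \(K_4\) containing it. Concretely, for every edge \(uv\in E(G)\) I introduce two new vertices \(x_{uv},y_{uv}\). I let \(H\) contain the six edges \(uv,ux_{uv},vx_{uv},uy_{uv},vy_{uv},x_{uv}y_{uv}\), so that \(U_{uv}=\{u,v,x_{uv},y_{uv}\}\) induces a \(K_4\) (the ``gadget'' of \(uv\)). This gives \(|V(H)|=n+2m\) and \(|E(H)|=6m\), and \(H\) is clearly built in time \(\Oh{n+m}\).

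The first step is to determine \(\trias{H}\). Each new vertex \(x_{uv}\) is adjacent only to \(u,v,y_{uv}\), and similarly for \(y_{uv}\). Hence every triangle containing a new vertex lies inside its gadget. Every triangle avoiding new vertices is a triangle of \(G\). So \(\trias{H}\) is the union of the four triangles of each gadget and \(\trias{G}\). Moreover, two distinct gadgets share at most one vertex and hence no edge.

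If \(G\) is triangle-free, this shows that \(\etgraph{H}\) is the disjoint union of \(m\) copies of \(\etgraph{K_4}\), one per gadget. By \cref{cor:etg-connected}, Maker wins on \(\gamehg{H}\) only if he wins on \(\gamehg{K_4}\). The last part of the proof of \cref{thm:k4-elimination} shows that Breaker wins on \(\gamehg{K_4}\), so \(H\) is Breaker's win.

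If \(G\) contains a triangle \(uvw\), I would give an explicit Maker strategy.

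First, Maker claims \(uv\). As observed at the start of the proof of \cref{thm:k4-elimination}, Breaker must answer inside \(E(G[U_{uv}])\), or Maker wins within the gadget. In particular, Breaker does not touch \(vw\), \(uw\) or the gadget of \(vw\).

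Second, Maker claims \(vw\). This creates the immediate threat \(uw\) through the triangle \(uvw\), so Breaker is forced to claim \(uw\).

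Now Maker owns one edge \(vw\) of the gadget \(U_{vw}\), and Breaker owns nothing there. Maker claims an edge \(f\) of this gadget sharing a triangle with \(vw\). Breaker must take the third edge \(g\) of that triangle. Then \(vw\) and \(f\) both lie in \(\etgraph{K_4}-N[g]\), which is a tree and hence connected, as noted in the proof of \cref{thm:k4-elimination}. Thus (WC) holds, and Maker wins by \cref{lem:maker-winning-structure}.

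The main obstacle is the bookkeeping in the triangle case: I must check that Breaker's forced replies never interfere with the next threat. This relies on the gadgets being edge-disjoint and on the triangle \(uvw\) meeting each gadget in exactly one edge. A second point to check is that every forced response is genuinely forced, i.e.\ that any other reply loses immediately.
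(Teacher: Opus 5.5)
Your proof is correct and takes the paper's approach. You use the same \(K_4\)-gadget construction, and you handle the triangle-free case via \cref{cor:etg-connected} together with Breaker's win on \(K_4\). In the triangle case, your explicit strategy (claim \(uv\), force a reply inside its gadget, claim \(vw\) to force \(uw\), then win inside the gadget of \(vw\)) is exactly the sufficiency argument for condition~(2) of \cref{thm:k4-elimination} with \(e=uv\) written out, where the paper just cites that condition. Spelling it out has the small advantage of not needing \(\etgraph{H}\) to be connected, a hypothesis of \cref{thm:k4-elimination} that the paper's one-line appeal glosses over.
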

\begin{proof}
	The graph \(H\) is obtained from \(G\) by adding two new vertices \(x_e\) and \(y_e\) for each edge \(vw=e\in E(G)\) as well as the edges \(x_ey_e,x_ev,x_ew,y_ev\) and \(y_ew\), i.e.~each edge of \(G\) is augmented to a copy of \(K_4\) through two new vertices.
	The running time of this process depends only linearly on the number of edges in \(G\).
	
	If \(G\) contains a triangle \(t\), then \(t\) connects two disjoint copies of \(\etgraph{K_4}\) in \(\etgraph{H}\).
	Thus, \(H\) is winning for Maker by condition~\labelcref{item:k4-chara-biased-win} of \cref{thm:k4-elimination}.
	If \(G\) does not contain a triangle, then \(\etgraph{H}\) consists of disjoint copies of \(\etgraph{K_4}\).
	By \cref{cor:etg-connected,thm:k4-elimination}, the graph \(H\) is Breaker's win.
	See \cref{fig:trig_reduction} for an illustration of these two cases.
\end{proof}

\Cref{thm:trig-detection-lb} shows that deciding the outcome of the unbiased triangle game is at least as hard as triangle detection.
Thus, it is unlikely that any algorithm can decide the outcome of the unbiased triangle game in time \(\oh{n^\omega}\).

\section{Edge Bounds for Maker Wins}\label{sec:min-maker-win}
\begin{definition}
	A graph \(G\) is minimal for Maker if it is Maker's win and all proper subgraphs of \(G\) are Breaker's win.
	We will also say that \(G\) is a minimal Maker win.
\end{definition}

In this section, we give an asymptotically exact lower bound on the number of edges of a graph required to ensure that Maker wins the unbiased triangle game, see \cref{thm:asymptotic-density}.
Additionally, we give a tight lower bound on the number of edges in a minimal Maker win, see \cref{thm:min-maker-edge-bound}.

\begin{theorem}\label{thm:asymptotic-density}
	Let \(\varepsilon>0\).
	There exists a constant \(n_0\in\N\) such that for all graphs \(G\) satisfying \({n\geq n_0}\) and \({m\geq\left(\frac{1}{4}+\varepsilon\right)n^2}\) Maker wins on \(\gamehg{G}\).
	This is best possible in the sense that there exists an infinite family of graphs with exactly \(\frac{n^2}{4}\) edges such that Breaker wins the unbiased \(K_3\)-game on this family.
\end{theorem}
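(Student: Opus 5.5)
The plan has two parts: a density argument that forces a Maker-winning subgraph, and an extremal construction showing the threshold cannot be lowered.

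For the upper bound, I would find a fixed small graph that Maker wins and show that every sufficiently dense graph contains it. By the Erdős–Stone theorem, for every \(\varepsilon>0\) and every fixed \(t\), a graph with \(m\geq(\frac14+\varepsilon)n^2\) and \(n\) large contains the complete tripartite graph \(K_3(t)=K_{t,t,t}\), since \(\frac14=\frac12(1-\frac12)\) is the Turán density of the chromatic-number-3 family. Maker-wins are monotone under taking supergraphs: if \(G'\subseteq G\) is Maker's win, Maker plays his winning strategy on \(G'\) and ignores Breaker's moves outside \(G'\). It therefore suffices to exhibit a fixed \(t\) such that \(K_{t,t,t}\) is Maker's win. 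The paper already remarks that the Turán graph \(T_3(7)=K_{3,2,2}\) is Maker's win, and \(K_{3,2,2}\subseteq K_{3,3,3}\), so \(t=3\) would do.

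The step I expect to be the main obstacle is a self-contained proof that some small \(K_{t,t,t}\) is Maker's win, because \(K_{t,t,t}\) is \(K_4\)-free and \cref{thm:k4-elimination} does not apply. I would first try the tadpole-fork idea from \cref{thm:cactus-game-chara}, whose sufficiency direction does not need the cactus assumption. Find, in \(\etgraph{K_{t,t,t}}\), two cycles \(C_1,C_2\) and a connecting path \(P\) such that \((C_i,P,e_i)\) are tadpoles and \(N[\triasCap{C_1}]\cap N[\triasCap{C_2}]\subseteq V(P)\). Then Maker claims \(e_1\) and \(e_2\), and the argument from that proof applies verbatim via \cref{lem:maker-winning-structure}. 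In \(K_{t,t,t}\) with \(t\) moderately large, say \(t=4\) or \(5\), there is ample room to pick vertex-disjoint parts of the graph on which to build two short triangle-cycles with disjoint edge neighbourhoods, joined by a triangle path. The tadpole conditions only forbid coincidences among a bounded number of vertices, so a direct explicit construction should work.

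If that fails, the fallback is to prove Maker wins \(T_3(7)\) directly. Maker opens with an edge lying on many edge-disjoint cycles in the edge-triangle incidence graph. After Breaker's forced answer, Maker applies \cref{lem:maker-winning-structure} by locating a cycle through the Maker edge that avoids \(N[E_b^1]\). A short case analysis up to symmetry of Breaker's first reply should settle this.

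For tightness, I would take the complete bipartite graph \(K_{n/2,n/2}\) for even \(n\). It has exactly \(n^2/4\) edges and contains no triangle at all, so Maker can never claim a triangle and Breaker trivially wins. This gives the required infinite family, and it finishes the proof.
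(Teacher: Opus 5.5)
Your skeleton is the paper's: a fixed \(3\)-colourable Maker win, Erdős–Stone (your \(K_{t,t,t}\) form is equivalent to the \(\chi\)-form the paper quotes), Maker wins passing to supergraphs, and balanced complete bipartite graphs for tightness. The gap is the one ingredient that no citation supplies: an explicit \(3\)-colourable graph that Maker provably wins. You never produce one. ``Ample room \dots\ a direct explicit construction should work'' and ``a short case analysis \dots\ should settle this'' are intentions, not arguments. Leaning on the paper's remark about \(T_3(7)\) only moves the gap, because the paper states that remark without proof. The paper closes this step with a concrete \(3\)-colourable graph whose edge-triangle incidence graph is a cactus containing a tadpole-fork, and then applies \cref{thm:cactus-game-chara}.

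Both of your routes do work, so write one out. You are right that the sufficiency half of \cref{thm:cactus-game-chara} never uses the cactus hypothesis.

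For the tadpole-fork route, take a wheel with hub \(h\) and rim \(p_1p_2p_3p_4\), a second wheel with hub \(p_1\) and rim \(q_1q_2q_3q_4\) on new vertices, and the edge \(hq_1\). Let \(C_1,C_2\) be the \(8\)-cycles of the incidence graph formed by the spokes and triangles of each wheel, and let \(P=hp_1,\,hp_1q_1,\,p_1q_1\). These give a tadpole-fork with \(e_1=hp_1\) and \(e_2=p_1q_1\): the two wheels share no edge, and the triangle \(hp_1q_1\) meets each wheel only in its spoke. The colouring \(h,q_2,q_4\mapsto1\), \(p_1,p_3\mapsto2\), \(p_2,p_4,q_1,q_3\mapsto3\) is proper, so the graph lies in \(K_{4,4,4}\).

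For the \(T_3(7)=K_{3,2,2}\) route, use parts \(\{a_1,a_2,a_3\},\{b_1,b_2\},\{c_1,c_2\}\), and let Maker open with \(b_1c_1\). For \(i\neq k\), consider the \(8\)-cycle through the triangle-vertices \(a_ib_1c_1,a_ib_1c_2,a_kb_1c_2,a_kb_1c_1\). Every edge of its triangles is incident to \(a_i\) or \(a_k\), or is \(b_1c_1\) or \(b_1c_2\). Its analogue through \(b_2c_1\) likewise uses only edges at \(a_i\) or \(a_k\) together with \(b_1c_1\) and \(b_2c_1\). Choose \(\{i,k\}\) to avoid the \(A\)-vertex of Breaker's reply, and use the \(b_2c_1\)-version if the reply is \(b_1c_2\). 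Then \(b_1c_1\) stays on a cycle of \(\etgraph{K_{3,2,2}}-N[E_b^1]\). Maker next claims an edge-vertex on a shortest such cycle. A single Breaker edge meeting both arcs would create a shorter cycle through \(b_1c_1\), so (WC) is reached and \cref{lem:maker-winning-structure} finishes.
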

\begin{figure}
	\centering
	\includegraphics[scale=0.7]{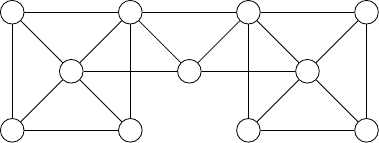}
	\caption{A 3-colorable graph that is Maker's win.}
	\label{fig:density-win}
\end{figure}
\begin{proof}
	Consider the graph \(G\) depicted in \cref{fig:density-win}.
	Its edge-triangle incidence graph \(\etgraph{G}\) is depicted in \cref{fig:tadpole-fork}.
	Observe that \(\etgraph{G}\) is a cactus graph containing a tadpole-fork.
	By \cref{thm:cactus-game-chara}, \(G\) is Maker's win.
	As \(G\) is 3-colorable, the Erdős-Stone-Theorem~\cite{DBLP:books/daglib/0030488} yields that every graph \(H\) with at least
	\[\left(1-\frac{\chi(G)-2}{\chi(G)-1}+\oh{1}\right)\frac{n^2}{2}=\left(\frac{1}{4}+\oh{1}\right)n^2\]
	edges contains a copy of \(G\).
	The vanishing function of \(n\) suppressed by \(\oh{1}\) depends only on \(G\) and not on \(H\).
	Breaker wins the unbiased triangle game on any bipartite graph as there are no triangles in such graphs.
	The infinite family \(F\) is given by the balanced complete bipartite graphs, i.e.~\({F=\{K_{n,n}:n\in\N\}}\).
	Such graphs satisfy \(|E(K_{n,n})|=n^2=|V(K_{n,n})|^2/4\).
\end{proof}

The lower bound \(m\geq 2n-1\) for minimal Maker wins does not require asymptotic arguments.
There exists a graph with \(n\) vertices satisfying the bound with equality for every \(n\geq5\), so the bound is tight.
This is best possible as all graphs with at most four vertices are winning for Breaker.
The inequality is not always satisfied with equality as witnessed by the graph displayed in \cref{fig:min-win-not-strict}.

\begin{figure}[t]
	\renewcommand\tabularxcolumn[1]{m{#1}}
	\centering
	\begin{tabularx}{\linewidth}{*{5}{>{\centering\arraybackslash}X}}
		\includegraphics[scale=0.5]{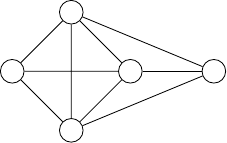}
		&
		\includegraphics[scale=0.5]{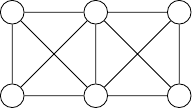}
		&
		\includegraphics[scale=0.5]{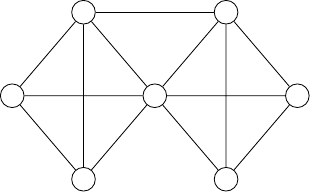}
		&
		\includegraphics[scale=0.5]{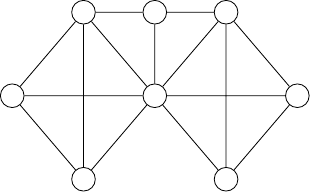}
		&
		\includegraphics[scale=0.5]{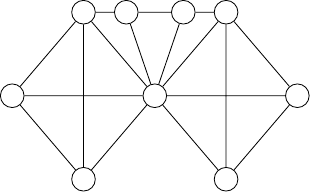}\\
		$H_5$ & $H_6$ & $H_7$ & $H_8$ & $H_9$
	\end{tabularx}
	\caption{The graphs \(H_n\) for \(5\leq n\leq 9\).
	They are minimal Maker wins satisfying \(m=2n-1\).
	For \(n\geq 7\), the graph \(H_{n+1}\) is obtained from \(H_n\) in the same way \(H_8\) is obtained from \(H_7\).}
	\label{fig:min-maker-equality}
\end{figure}

\begin{figure}[!t]
	\centering
	\includegraphics[scale=0.7]{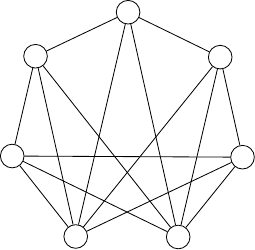}
	\caption{A minimal Maker win with 7 vertices and 15 edges.}
	\label{fig:min-win-not-strict}
\end{figure}

\begin{theorem}\label{thm:min-maker-edge-bound}
	Let \(G\) be a minimal Maker win.
	Then \(m\geq 2n-1\).
	For every \(n\geq 5\), there exists a minimal Maker win \(H_n\) with \(n\) vertices and \(2n-1\) edges.
\end{theorem}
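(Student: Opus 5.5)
The plan splits into the lower bound \(m\geq 2n-1\) and the construction of \(H_n\).

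\textbf{Lower bound.} Let \(G\) be a minimal Maker win. Minimality gives three structural facts that will drive a counting argument.
\begin{itemize}
\item \(G\) has no isolated vertices, and every edge lies on a triangle; otherwise deleting it leaves a Maker win.
\item By \cref{cor:etg-connected}, \(\etgraph{G}\) is connected; otherwise one of its components would already be a Maker win.
\item By \cref{cor:lowdeg-pruning}, applied to \(\gamehg{G}\), no triangle has two edges that lie in no other triangle; such a triangle could be pruned.
\end{itemize}
The idea is to count incidences between vertices and triangles. In a minimal Maker win I expect every vertex \(v\) to have degree at least~\(3\), and more precisely I expect the link graph \(G[N(v)]\) to contain a path on at least three vertices.

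To see why the link should not be a single edge, suppose \(G[N(v)]\) is just \(xy\). Then \(v\) lies in exactly one triangle \(vxy\), and the edges \(vx,vy\) lie in no other triangle. This contradicts the pruning fact above.

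For the degree bound, I would try to exclude vertices of degree~\(2\) directly. If \(\deg v=2\) with neighbours \(x,y\), then \(vx\) and \(vy\) lie only in triangle \(vxy\), which is the same contradiction. Degree~\(1\) is excluded because every edge lies on a triangle.

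Degree at least~\(3\) only gives \(m\geq 3n/2\), so a sharper argument is needed. The natural refinement is this. Since \(\etgraph{G}\) is connected and contains a cycle (Maker cannot win on acyclic structures, by \cite[Theorem~3.21]{galliot2025makerbreakersolvedpolynomialtime}), I will charge edges to vertices via an ear-type decomposition of \(G\) along triangles. Start from one triangle, which contributes \(3\) vertices and \(3\) edges. Each subsequent triangle in a BFS order of \(\etgraph{G}\) adds at most one new vertex together with at least two new edges, or adds no new vertex.

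This counting gives \(m\geq 2(n-3)+3=2n-3\). The remaining \(+2\) must come from the fact that the triangle hypergraph must contain cycles that are non-trivial enough for Maker to win. By \cref{lem:only-cycle-first} and \cref{thm:cactus-game-chara}-style reasoning, at least two independent cycles of \(\etgraph{G}\) are needed, since a single cycle plus trees is Breaker's win. Each independent cycle in the BFS build-up corresponds to a triangle that adds an extra edge without adding a vertex, giving \(+1\) each.

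\textbf{Main obstacle.} Rigorously showing that a minimal Maker win needs at least two such ``excess'' triangles is the step I expect to be hardest. The danger is a single \(K_4\)-type excess, since in \(\etgraph{K_4}\) one edge creates several cycles at once. I would first try to handle this with \cref{thm:k4-elimination}: \(\gamehg{K_4}\) is Breaker's win, and attachments at single edges reduce via \cref{cor:k4-preprocessing}. I would then do a case analysis on whether the excess triangles share the new edge.

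\textbf{Construction.} For \(n\geq 5\), I would take \(H_n\) as in \cref{fig:min-maker-equality}, built as a chain of triangles sharing edges. Each new vertex is joined to both ends of an existing edge, giving \(2\) edges per vertex, and one extra chord closes up a tadpole-fork. Maker's win would be verified via \cref{thm:cactus-game-chara} where \(\etgraph{H_n}\) is a cactus, or via a direct strategy through \cref{lem:maker-winning-structure}. Minimality would be checked by showing that deleting any edge destroys the tadpole-fork or disconnects the relevant cycles, so that Breaker wins by \cref{thm:cactus-game-chara}. The step from \(H_n\) to \(H_{n+1}\) adds one vertex and two edges, lengthening the path of the fork, so an induction handles all \(n\geq 7\).
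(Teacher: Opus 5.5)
\textbf{Lower bound.} Your count up to $m\geq 2n-3$ is sound. It is essentially the paper's covering argument: the triangles that introduce a new vertex span a 2-tree $G'$ with $2n-3$ edges and with $\etgraph{G'}$ a tree, so $G'$ is Breaker's win and $d=m-(2n-3)\geq 1$.

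The gap is the step to $d\geq 2$. Your rule ``each independent cycle of $\etgraph{G}$ costs one extra edge'' is false. $K_4$ is the 2-tree $K_4$ minus an edge plus a single extra edge, yet that edge closes two triangles, and $\etgraph{K_4}$ has three independent cycles. So knowing that Maker needs at least two cycles says nothing about $d$. What has to be shown is that $d=1$ always yields a Breaker win, and your ``case analysis on whether the excess triangles share the new edge'' does not supply a mechanism for this.

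The missing idea is to use $\delta(G)\geq 3$, which you derived and then set aside. A 2-tree on at least four vertices has at least two vertices of degree~$2$. So if $d=1$, the single extra edge must join exactly the two degree-$2$ vertices of $G'$. Since that edge lies on a triangle, its endpoints have a common neighbour, and this forces $G'$ to be a fan and $G$ a wheel. Both cases are then Breaker's win:
\begin{itemize}
\item For $K_4$, use \cref{thm:k4-elimination}: every component of $\etgraph{K_4}-\trias{K_4}$ is a single edge-vertex.
\item A wheel with at least four rim vertices has exactly one cycle in its edge-triangle incidence graph, hence no tadpole-fork, so \cref{thm:cactus-game-chara} applies.
\end{itemize}
If you would rather finish your own $K_4$ analysis without $\delta(G)\geq 3$, note that two non-adjacent vertices of a 2-tree have at most two common neighbours. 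So the extra edge closes either one triangle, giving a single cycle, or two triangles, giving a $K_4$ with tree-like pieces attached at single edges. The latter is Breaker's win by \cref{thm:k4-elimination}.

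\textbf{Construction.} As described, it is one edge short. Growing a chain from one triangle, with each new vertex joined to both ends of an existing edge, gives $2n-3$ edges. One extra chord brings this to $2n-2$, which by the very bound you are proving cannot be a Maker win; you need two edges beyond a spanning 2-tree.

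Also, the graphs $H_n$ of \cref{fig:min-maker-equality} all contain $K_4$:
\begin{itemize}
\item $K_5$ minus an edge;
\item two copies of $K_4$ sharing an edge;
\item for $n\geq 7$, two copies of $K_4$ sharing a vertex, linked by a chain of triangles.
\end{itemize}
Their edge-triangle incidence graphs are therefore not cacti, and \cref{thm:cactus-game-chara} cannot be applied to them. Maker's win and minimality must instead be read off from \cref{thm:k4-elimination} together with \cref{cor:lowdeg-pruning}. For example, in $K_5$ minus an edge, three edges of the $K_4$ on $U$ lie in one component of $\etgraph{G}-\trias{U}$.
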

\begin{proof}
	Our proof of the first statement is based on the ideas of the proof of~\cite[Theorem~1.3]{DBLP:journals/rsa/MullerS14}.
	However, as we are not in a probabilistic setting, we cannot dismiss cases that will not happen asymptotically almost surely.
	By \cref{cor:etg-connected}, there exists a connected component \(K\) of \(\etgraph{G}\) such that \(G=G[E(G)\cap K]\) as otherwise edges of \(G\) could be removed, contradicting its minimality.
	In particular, every vertex and every edge of \(G\) lies on a triangle.
	Thus, \(\delta(G)\geq2\).
	Additionally, if a vertex \(v\) has degree exactly two, \cref{cor:lowdeg-pruning} implies that the edges \(e_1,e_2\) incident to \(v\) can be removed from the vertex set of \(\gamehg{G}\) without changing the winner of the Maker-Breaker game played on \(\gamehg{G}\).
	This is equivalent to removing these edges in \(G\), so Maker also wins on \(\gamehg{G-\{e_1,e_2\}}\), again contradicting minimality of \(G\).
	We conclude that \(\delta(G)\geq3\).
	
	Cover the vertices of \(G\) with triangles in the following way:
	First, choose an arbitrary triangle \({t_1\in\trias{G}}\).
	Let \(t_1,\dots,t_i\) be chosen already.
	While this is still possible, pick a new triangle $t_{i+1}$ such that \({|t_{i+1}\cap\bigcup_{j=1}^it_j|=1}\).
	Let \(t_1,\dots,t_\ell\) be the triangles at the end of this process, i.e~\(\ell\) is maximal.
	Assume for contradiction that there exists some \(v\in V(G)\setminus\{v:v\in e\in t_i,1\leq i\leq\ell\}\).
	Let \(t^*\) be a triangle containing \(v\) and let \(P=w_1\dots w_k\) be a shortest path from \(\{t_1,\dots,t_\ell\}\) to \(t^*\) in \(\etgraph{G}\).
	All vertices of \(P\) with odd index are triangles and two triangles \(w_{2i+1},w_{2i+3}\) intersect in the edge \(w_{2i+2}\).
	Thus, they share at least one edge of \(G\) and they cannot share two edges as otherwise they would be identical.
	Let \(j\) be minimal such that \(w_{2j+1}\setminus\bigcup_{j=1}^\ell t_j\neq\emptyset\).
	Then \(w_{2j-1}\subseteq\bigcup_{j=1}^\ell t_j\) and \(|w_{2j-1}\cap w_{2j+1}|=1\).
	Thus, \(w_{2j+1}\) can be used to expand the sequence \(t_1,\dots,t_\ell\), which is a contradiction.
	
	Let \(G'\) be the subgraph of \(G\) induced by the above covering process.
	Formally, \(V(G')=V(G)\) and \(E(G')=\bigcup_{j=1}^\ell t_j\).
	As each triangle used to construct \(G'\) covers a new vertex of \(V(G)\), we obtain \({n=|V(G')|=\ell+2}\) and \(|E(G')|=2\ell+1\).
	Set \(d=m-|E(G')|\).
	By construction of \(G'\), \(\etgraph{G'}\) is a tree.
	Thus, \(\etgraph{G'}\) is a cactus graph without a tadpole-fork, so Breaker wins on \(\gamehg{G}\) by \cref{thm:cactus-game-chara}.
	However, \(G\) is Maker's win, so \(G'\) must be a proper subgraph of \(G\) which implies \(d\geq 1\).
	
	Assume \(d=1\).
	A simple inductive argument shows that \(G'\) contains at least two vertices of degree~2.
	As \(\delta(G)\geq3\), there must be exactly two vertices of degree~2 in \(G'\) which are the endpoints of an edge in \(E(G)\setminus E(G')\).
	The only way this is possible is when \(G\) is a wheel.
	Each connected component of \(\etgraph{K_4}-\trias{K_4}\) is an isolated edge-vertex, so Breaker wins on \(\gamehg{K_4}\) by~\cref{thm:k4-elimination}.
	Thus, \(G\) cannot be the wheel \(K_4\).
	Then \(\etgraph{G}\) contains only a single cycle, so it is cactus graph that does not contain a tadpole-fork.
	By \cref{thm:cactus-game-chara}, Breaker wins on \(\gamehg{G}\), a contradiction.
	This yields \(d\geq 2\).
	We obtain
	\begin{equation*}
		m=|E(G')|+d\geq2\ell+3=2\ell+4-1=2n-1,
	\end{equation*}
	which concludes the proof of the first part of \cref{thm:min-maker-edge-bound}.
	
	For \(5\leq n\leq 9\), the graphs \(H_n\) are visualized in~\cref{fig:min-maker-equality}.
	The graph \(H_5\) is given by \(K_5\) with a single edge removed.
	The graph \(H_6\) is given by two copies of \(K_4\) intersecting in a single edge.
	For \(n\geq7\), the graph \(H_n\) is obtained by connecting two copies of \(K_4\) intersecting in exactly one vertex with a sequence of \(n-6\) triangles involving \(n-7\) additional vertices such that \(\etgraph{H_n}\) contains a tadpole-fork.
	The graph \(H_n\) can easily be seen to be a minimal Maker win using \cref{thm:k4-elimination,cor:lowdeg-pruning}.
\end{proof}

\section{Open Problems}\label{sec:opem-problems}
The lower bound by triangle detection does not match any of the upper bounds.
Closing this gap by increasing the lower bound through a different fine-grained reduction or reducing the upper bound through novel algorithmic ideas is an interesting direction for future work.

\Cref{thm:asymptotic-density} determines an asymptotic lower bound on the average degree of the vertices of the board graph required to ensure Maker's win.
The family witnessing that \cref{thm:asymptotic-density} is best possible consists of bipartite graphs.
This seems somewhat artificial as an obvious preprocessing step in any algorithm for deciding the unbiased triangle game is to remove all vertices and edges that do not lie on a triangle.
In fact, \cref{cor:etg-connected} shows that it is enough to individually inspect the subgraphs of the board graph induced by the edges in the connected components of the edge-triangle incidence graph.
For bipartite graphs, these subgraphs consist of isolated edges.
Additionally, \cref{cor:lowdeg-pruning} can be applied to further eliminate edges.
Eliminating edges decreases the average degree and taking a subgraph is not guaranteed to increase the average degree.
However, applying \cref{cor:lowdeg-pruning} does increase the triangle-edge-ratio, i.e.~\(|\trias{G}|/|E(G)|\), given that the board graph contains at least two triangles.
Similarly, there must be a connected component of the edge-triangle incidence graph that induces a graph which has at least the same triangle-edge-ratio as the board graph itself.
Thus, we arrive at the following problem:
Given \(5\leq n_0\in\N\), determine a function \(f:\N\to\N\) such that any graph \(G\) with \({n\geq n_0}\) and \(t\geq f(n)\) such that \(\etgraph{G}\) is connected and cannot be reduced by \cref{cor:lowdeg-pruning} is Maker's win and show that no function \(g\) with \(g(n)<f(n)\) for some \(n_0\leq n\in\N\) satisfies the same property.
It is clear that such a function \(f\) exists, but it may be too difficult to determine it exactly.
An alternative problem is to determine the asymptotic behavior of the function \(f\).

A similar problem can be asked for the average degree.
Imposing the additional restriction that \(\etgraph{G}\) be connected and irreducible by \cref{cor:lowdeg-pruning}, what is the asymptotic average degree required to ensure Maker's win?
Clearly, it is at most the same value as determined by \cref{thm:asymptotic-density}, but the family witnessing the lower bound is no longer applicable in this situation.

Finally, there is also the unbiased \(C_4\)-game or more generally the unbiased \(C_n\)-game for \(n\geq4\).
No polynomial-time algorithm is known for these games, but it is also not known if the games are complete for any complexity class.
It was shown in~\cite{DBLP:journals/dam/DucheneGINOPS25,DBLP:journals/corr/abs-2509-13819} that the unbiased \(H\)-game is \pspace-complete for a graph \(H\) with~35 vertices and~39 edges.
Additionally, the \(C_4\)-game played on the \emph{vertices} instead of the edges of general graphs was shown to be \pspace-complete in \cite{DBLP:journals/corr/abs-2509-13819}.
Therefore, it seems likely that the unbiased \(C_n\)-game played on edges is also \pspace-complete for \(n\geq4\).

\section*{Acknowledgments}
Julian Brinkmann is the main author of this paper and contributed a more than proportional amount of the material.
The first author would like to thank the authors of the JGraphT library~\cite{DBLP:journals/toms/MichailKNS20} which was useful in the exploratory phase of research.

\clearpage
\bibliographystyle{plainurl}
\bibliography{mb}
\end{document}